\DeclareDocumentMathCommand{\inflMat}{}{\Psi^{\text{inf}}}
\DeclareDocumentMathCommand{\corMat}{}{\Psi^{\text{cor}}}
\DeclareDocumentMathCommand{\DKL}{}{\D_{\operatorname{KL}}}
\tikzset{pics/VertexGraph/.style args={#1#2#3#4#5#6}{code={
	\begin{scope}[every node/.style={circle, inner sep=3, line width=1, draw=Black}]
		\tikzset{Style-/.style={fill=White}}
		\tikzset{Style0/.style={fill=Gray}}
		\tikzset{Style1/.style={fill=Orange}}
		\tikzset{Style*/.style={fill=Gray}}
		\node[Style#1] (a) at (-30: 0.5) {};
		\node[Style#2] (b) at (90: 0.5) {};
		\node[Style#3] (c) at (-150: 0.5) {};
		\node[Style#4] (d) at (150: 1) {};
		\node[Style#5] (e) at (-90: 1) {};
		\node[Style#6] (f) at (30: 1) {};
	\end{scope}
	\foreach \u/\v in {a/b, b/c, c/a, d/b, d/c, e/a, e/c, f/a, f/b}
		\draw[line width=1] (\u) -- (\v);
}}}
\title{Entropic Independence I: Modified Log-Sobolev Inequalities for Fractionally Log-Concave Distributions and High-Temperature Ising Models}
\author[1]{Nima Anari}
\author[1]{Vishesh Jain}
\author[2]{Frederic Koehler}
\author[1]{Huy Tuan Pham}
\author[1]{Thuy-Duong Vuong}
\affil[1]{Stanford University, \url{{anari,visheshj,huypham,tdvuong}@stanford.edu}}
\affil[2]{Simons Institute, UC Berkeley, \url{fkoehler@berkeley.edu}}
\date{}
\begin{document}
	\maketitle

	\begin{abstract}
		We introduce a notion called entropic independence that is an entropic analog of spectral notions of high-dimensional expansion. Informally, entropic independence of a background distribution $\mu$ on $k$-sized subsets of a ground set of elements  says that for any (possibly randomly chosen) set $S$, the relative entropy of a single element of $S$ drawn uniformly at random carries at most $O(1/k)$ fraction of the relative entropy of $S$, a constant multiple of its ``share of entropy.'' Entropic independence is the natural analog of the recently established notion of spectral independence, if one replaces variance by entropy. We use entropic independence to derive tight mixing time bounds for natural random walks associated with a number of widely studied distributions, overcoming the lossy nature of spectral analysis of Markov chains on exponential-sized state spaces. In our main technical result, we show a general way of deriving entropy contraction, a.k.a.\ modified log-Sobolev inequalities, for down-up random walks from much simpler spectral notions. We show that spectral independence of a distribution under arbitrary external fields automatically implies entropic independence. Our result can be seen as a new framework to establish entropy contraction from the much simpler variance contraction inequalities.
		
		To derive our results, we relate entropic independence to properties of polynomials: $\mu$ is entropically independent exactly when a transformed version of the generating polynomial of $\mu$ is upper bounded by its linear tangent; this property is implied by concavity of the said transformation, which was shown by prior work to be locally equivalent to spectral independence. Our framework makes no assumptions on marginals of $\mu$ or the degrees of the underlying graphical model when $\mu$ is based on one, and it has the ability to derive tight bounds on mixing time even when it is not nearly-linear. We apply our results to obtain tight modified log-Sobolev inequalities and mixing times for multi-step down-up walks on fractionally log-concave distributions. As our flagship application, we establish the tight mixing time of $O(n\log n)$ for Glauber dynamics on Ising models whose interaction matrix  has eigenspectrum lying within an interval of length smaller than $1$, improving upon the prior quadratic dependence on $n$.
	\end{abstract}
	\clearpage
	
	\section{Introduction}



Let
$\mu:\binom{[n]}{k} \to \R_{\geq 0}$ be a non-negative density on the $k$-subsets of the ground set $[n] = \set{1,\dots,n}$. Such a density naturally defines a distribution on the $k$-subsets of $[n]$ given by
\[\P{S} \propto \mu(S).\]
We study a family of local Markov chains that can be used to approximately sample from such a distribution.

\begin{definition}[Down-up random walks]\label{def:local-walk}
	For a density $\mu:\binom{[n]}{k}\to\R_{\geq 0}$, and an integer $\l\leq k$, we define the $k\leftrightarrow\l$ down-up random walk as the sequence of random sets $S_0, S_1,\dots$ generated by the following algorithm:
	\begin{Algorithm*}
		\For{$t=0,1,\dots$}{
			Select $T_t$ uniformly at random from subsets of size $\l$ of $S_t$.\;
			Select $S_{t+1}$ with probability $\propto \mu(S_{t+1})$ from supersets of size $k$ of $T_t$.\;
		}
	\end{Algorithm*}
\end{definition}

This random walk is time-reversible and always has $\mu$ as its stationary distribution \cite[see, e.g.,][]{ALO20}. The special case of $\l=k-1$ has received the most attention, especially in the literature on high-dimensional expanders \cite[see, e.g.,][]{LLP17,KO18,DK17,KM16}. Recent works have established the utility of down-up random walks in capturing and analyzing widely studied Markov chains such as Glauber dynamics on graphical models or basis-exchange random walks on matroids \cite{ALOV19,CGM19,AL20,ALO20,CLV20,alimohammadi2021fractionally,CGSV21,FGYZ21,JPV21,liu2021coupling,blanca2021mixing}.

Each step of the down-up random walk can be efficiently implemented with oracle access to $\mu$ as long as $k-\l=O(1)$. This is because  the number of supersets of $T_t$ is at most $n^{k-\l}=\poly(n)$, so we can enumerate over all of them in polynomial time. Even though the $k\leftrightarrow \l$ random walk is interesting algorithmically only when $\l=k-O(1)$, the entire range of down-up random walks is useful as an analysis tool. In fact, analyzing down-up walks where $\l=1$, and concluding mixing time of $k\leftrightarrow k-O(1)$ random walks, is the key technique behind most of the high-dimensional-expanders-based breakthroughs in Markov chain analysis \cite[see, e.g.,][]{DK17,KO18,AL20}. 

In this work, we introduce the notion of entropic independence as a tool to establish tight bounds on the mixing time of the down-up random walks via lower bounding the modified log-Sobolev constant \cite{bobkov2006modified} (see \cref{def:MLSI}). Entropic independence is an entropy-based analog of spectral notions of high-dimensional expansion such as local spectral expansion \cite{KM16,DK17,KO18} and spectral independence \cite{ALO20}. The motivation behind considering entropy-based notions is that Markov chain mixing time analysis via spectral techniques is often lossy, by polynomially large factors on exponential-sized state spaces. On the other hand, entropy-based analysis of Markov chains can often yield tight mixing time bounds \cite{bobkov2006modified}.

Our work introduces a novel technique to get \emph{optimal} mixing time bounds using the rapidly growing literature on high-dimensional-expanders-based Markov chain analysis. Related prior works in this area fall into two categories:
\paragraph{Prior work on matroids and log-concave polynomials.} \Textcite{CGM19} established nearly-linear mixing time of $\tilde O(k)$\footnote{The notation $\tilde O(\cdot)$ suppresses logarithmic factors in both $k$ and $n$. To keep the exposition simple in the introduction, we assume the random walks are started from a starting point with large enough of a probability mass; there always exists a point $S$ where $\P_{\mu}{S}\geq 1/\binom{n}{k}$ and we assume our starting point approximately has this mass in the reported mixing times.} for the $k\leftrightarrow k-1$ down-up random walk whenever $\mu$ has a log-concave generating polynomial and the walk is started from a good point. This improved upon the earlier bound of $\tilde O(k^2)$ established via spectral analysis by \textcite{ALOV19}. Matroids are extremely good high-dimensional expanders \cite{ALOV19}; unfortunately the techniques of \textcite{CGM19} appear to be limited to just matroids and matroid-related distributions, since they crucially use a specific threshold of high-dimensional expansion that can only be achieved by these distributions.

\paragraph{Prior work on bounded-degree graphical models.} \Textcite{CLV20,blanca2021mixing} showed that under certain assumptions, spectral independence, a condition weaker than the extremely good high-dimensional expansion of matroids \cite{ALO20}, implies nearly-linear mixing time of $\tilde O(k)$ for down-up walks. This yielded breakthrough tight mixing time bounds for a wide range of distributions originating mostly from statistical physics. However, these results need some key assumptions, most of which do not appear to be inherently necessary, and seem to be just needed for the proof to work. The main assumption behind these works is that $\mu$ captures a graphical model on a bounded degree graph; that is $\mu$ is the joint distribution of a collection of $k$ random variables arranged as nodes of a $O(1)$-degree graph, and that any two regions of the graph are conditionally independent of each other, conditioning on a set of nodes separating them. The bounded degree assumption together with the conditional independence assumption allows for $\mu$ to shatter into small independent pieces of size $\simeq \log k$ after conditioning on linearly many variables. A further assumption is that each random variable has domain of size at most $O(1)$ and that each element in the domain has marginal $\geq \Omega(1)$ according to $\mu$ and conditionings of $\mu$. None of the assumptions, being graphical, being constant-degree, or having large marginals, appear to be inherently necessary, but are crucial to the proof.

\paragraph{This work.} In this work we show that good high-dimensional expansion \emph{under external fields} (see \cref{def:external-field}) automatically implies entropy contraction inequalities, and in particular modified log-Sobolev inequalities for down-up random walks. Unlike mentioned prior works, we do not require extreme high-dimensional expansion, we do not assume conditional independence (being graphical), we do not require a lower bound on the marginals of $\mu$, or even in the case of graphical models for the degrees of the graph to be $O(1)$; instead we require good high-dimensional expansion under \emph{external fields}. This assumption is strictly weaker than the extreme high-dimensional expansion assumption of \cite{CGM19}, so our results give a proper generalization of the main result of \cite{CGM19}. A distinguishing feature of our techniques is the ability to derive the optimal mixing time when it is not nearly-linear (see \cref{sec:app-flc}). Another feature of our techniques is the ability to handle lopsided distributions, where $\min\set{\mu(S)\given \mu(S)>0}$ is extremely small. Lopsided distributions provably cannot have a large (non-modified) log-Sobolev constant \cite{bobkov2006modified}, but they can possibly have a large \emph{modified} log-Sobolev constant. This distinction between modified and non-modified log-Sobolev inequalities is a barrier for the prior techniques of \cite{CLV20,blanca2021mixing}, since they yield roughly the same log-Sobolev and modified log-Sobolev constants in the settings where they can be applied.

Informally, we call a background measure $\mu: \binom{[n]}{k} \to \R_{\geq 0}$ entropically independent if for any (possibly randomly chosen) set $S$, the relative entropy of an element of $S$ drawn uniformly at random carries at most $O(1/k)$ fraction of the relative entropy of $S$, a constant multiple of its ``share of entropy.'' More precisely, entropic independence can be defined as entropy contraction of the $D_{k \to 1}$ operator, where $D_{k \to \l}$ is the first part of the $k\leftrightarrow \l$ random walk, i.e., it operates on a set $S\in \binom{[n]}{k}$ by uniformly sampling a size-$\l$ subset of $S$. Note that $D_{k \to \l}$ sends a distribution $\mu$ over $\binom{[n]}{k}$ to the distribution $\mu D_{k \to \l}$ over $\binom{[n]}{\l}.$  
\begin{definition}[Entropic independence]
    \label{def:entropic-independence}
A probability distribution $\mu$ on $\binom{[n]}{k}$ is said to be $(1/\alpha)$-entropically independent, for $\alpha \in (0,1]$, if for all probability distributions $\nu$ on $\binom{[n]}{k}$,
\[ \DKL{\nu D_{k\to 1} \river \mu D_{k\to 1}}\leq \frac{1}{\alpha k}\DKL{\nu \river \mu}.  \]
\end{definition}
Entropic independence is a natural analog for spectral independence, another recently established notion by \textcite{ALO20}, if one replaces variance by entropy. For the special case where $\mu$ is defined via a graphical model, notions like entropic independence have been studied in prior works (although mostly as an interesting corollary, and not as the main tool to establish mixing times). See \cite[e.g.,][]{blanca2021mixing} for the notion of approximate subadditivity of entropy.

\paragraph{Spectral vs.\ entropic independence.}
In \cite{ALO20}, spectral independence is defined as an upper bound on the spectral norm of the pairwise correlation matrix of $\mu$ (\cref{def:corr}), or equivalently, an upper bound on the second largest eigenvalue of the simple (non-lazy) random walk on the
$1$-skeleton of $\mu$, when viewing $\mu$ as a weighted high-dimensional expander \cite{KO18, DK17}.\footnote{In the original definition \cite{DK17, KO18,ALO20}, such a requirement is imposed for both $\mu$ and all links of $\mu$, where the link of $\mu$ w.r.t.\ a set $T$ is the distribution of $S-T$ given that $S$ is sampled from $\mu$ conditioned on $T\subseteq S$. For the sake of clarity, and to avoid unnecessary assumptions on uniformity over links, we take the lone term ``spectral independence'' to only refer to the link of $T=\emptyset$, with the understanding that one usually requires spectral independence for all links; similarly, to derive mixing time bounds for down-up walks, we require entropic independence for all links.} The simple random walk on the
$1$-skeleton of $\mu$ samples from $\mu D_{k \to 1}$ by transitioning from $\set*{i}$ to $\set*{j}$ with probability proportional to $\mu D_{k \to 2} (\set*{i, j}).$ An upper bound on the second largest eigenvalue of this random walk is equivalent (up to a simple linear transformation) to an upper bound on the second largest eigenvalue of the $k \leftrightarrow 1$-down-up walk to sample from $\mu.$ One walk is simply a lazier version of the other. Standard results about the relationship between second largest eigenvalue and variance contraction then imply that variance contraction of $D_{2 \to 1}$ with respect to $\mu D_{k \to 2}$ is equivalent to variance contraction of $D_{k \to 1}$ with respect to $\mu.$ However, such an equivalence does not hold when we replace variance with entropy: entropy contraction of $D_{2 \to 1}$ with respect to $\mu D_{k \to 2}$ is a stronger assumption than entropic independence. See \cref{example:no-2-to-1} in \cref{sec:bad-example} for why natural distributions might not have good $D_{2\to 1}$ entropy contraction. This introduces an inherent difficulty in establishing entropic independence. While spectral independence is about an $n\times n$ matrix, or the expansion of a simple graph on $n$ nodes (whose edges are given by $\mu D_{k\to 2}$), there is no such compact object determining entropic independence; one has to look at all the $k$-sized sets and the full distribution $\mu$.


We connect entropic independence to the geometry of the generating polynomial of the distribution $\mu$. The multivariate generating polynomial $g_{\mu} \in \R[z_1,\dots, z_n]$ associated to $\mu: \binom{[n]}{k} \to \R_{\geq 0}$ is given by
\[g_{\mu}(z_1,\dots, z_n) := \sum_{S}\mu(S)\prod_{i\in S}z_i.\]

\begin{definition}[External field]\label{def:external-field}
For a distribution $\mu$ on $\binom{[n]}{k}$ and $\lambda = (\lambda_1,\dots, \lambda_n) \in \R^{n}_{>0}$, the notation $\lambda \ast \mu$ denotes the distribution $\mu$ tilted by external field $\lambda$, which is a distribution on $\binom{[n]}{k}$ given by 
\[\P_{\lambda \ast \mu}{S} \propto \mu(S)\cdot \prod_{i \in S}\lambda_i.\]
\end{definition}
Note that for any $(z_1,\dots, z_n) \in \R^{n}_{\geq 0}$, 
\[g_{\lambda \ast \mu}(z_1,\dots, z_n) \propto g_{\mu}(\lambda_1 z_1,\dots, \lambda_n z_n).\]

	In \cref{thm:entropic-independence}, we show that a distribution $\mu$ is entropically independent exactly when a transformed version of the generating polynomial of $\mu$ can be upper bounded by its linear tangent, a property implied by concavity of the said transformation. We further show that this concavity is equivalent to fractional log-concavity \cite{alimohammadi2021fractionally}, which is in turn equivalent to spectral independence under arbitrary external fields.

We recall the definition of $\alpha$-fractional-log-concavity \cite{alimohammadi2021fractionally}.
	
	For $\alpha \in (0,1]$, a distribution $\mu$ on $\binom{[n]}{k}$ is said to be $\alpha$-fractionally log-concave (abbreviated as $\alpha$-FLC) if $\log g_{\mu}(z_1^{\alpha},\dots, z_{n}^{\alpha})$ is concave, viewed as a function on $\R^{n}_{\geq 0}$. We note that $1$-FLC is equivalent to complete/strong log-concavity \cite{ALOV19,BH19}.
		\begin{theorem} \label{thm:entropic-independence}
		Let $\mu$ be a probability distribution on $\binom{[n]}{k}$ and let $p = (p_1,\dots, p_n) := \mu D_{k \to 1}$. Then, for $\alpha \in (0,1]$,
		\[\mu \text{ is }(1/\alpha)\text{-entropically independent}\iff \forall (z_1,\dots, z_n) \in \R^{n}_{\geq 0},\text{ } g_{\mu}(z_1^\alpha,\dots, z_n^\alpha)^{1/k\alpha} \leq \sum_{i=1}^{n}p_i z_{i}.\]
		In particular, if $\mu$ is $\alpha$-fractionally log-concave, then $\mu$ is $(1/\alpha)$-entropically independent. Moreover,
			\[\lambda \ast \mu \text{ is }(1/\alpha)\text{-entropically independent }\forall \lambda = (\lambda_1,\dots, \lambda_n) \in \R^{n}_{>0}\iff \mu \text{ is }\alpha\text{-fractionally log-concave}.\]
	\end{theorem}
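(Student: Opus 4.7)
The theorem splits into three claims, and the plan is to reduce all of them to properties of the single function $F(z) := g_\mu(z_1^\alpha, \ldots, z_n^\alpha)^{1/(k\alpha)}$, which is $1$-homogeneous in $z$ by the $k$-homogeneity of $g_\mu$. A direct computation using Euler's identity gives $F(1) = 1$ and $\nabla F(1) = p$, so the polynomial inequality in the theorem statement is exactly the assertion that $F$ lies below its tangent hyperplane at $z = 1$, which equals $\sum_i p_i z_i$.

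For the first equivalence (polynomial inequality $\iff$ entropic independence), I would invoke Fenchel--Legendre duality. After the substitution $z_i = e^{t_i}$, the polynomial inequality becomes $K_\mu(\alpha t) \leq k\alpha K_p(t)$, where $K_\mu(t) := \log g_\mu(e^{t_1}, \ldots, e^{t_n})$ is the log-MGF of the indicator vector of $S$ under $\mu$ and $K_p(t) := \log \sum_i p_i e^{t_i}$ is the log-MGF of the categorical law $p$. Both $K_\mu$ and $K_p$ are convex and finite on $\R^n$, so by biconjugation this is equivalent to the reverse inequality between their Legendre conjugates; a rescaling and the substitution $y = k\alpha q$ recast it as $K_\mu^*(kq) \geq k\alpha K_p^*(q)$ for all probability vectors $q$. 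The classical Donsker--Varadhan identity gives $K_\mu^*(kq) = \inf\{\DKL{\nu \river \mu} : \nu D_{k \to 1} = q\}$ (using that the $\nu$-expectation of the $i$-th coordinate of the indicator of $S$ equals $k \cdot \nu D_{k\to 1}(i)$) and $K_p^*(q) = \DKL{q \river p}$. Together these rearrange to $\DKL{\nu D_{k\to 1} \river p} \leq \frac{1}{\alpha k}\DKL{\nu \river \mu}$ for all $\nu$, i.e.\ $(1/\alpha)$-entropic independence.

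For the implication $\alpha$-FLC $\Rightarrow$ $(1/\alpha)$-EI, I would prove the polynomial inequality directly via the following homogeneity lemma, then invoke the first part. The lemma says that a positive, $1$-homogeneous, log-concave function $H$ on a convex cone is concave: given $x, y$ and $\lambda \in [0,1]$, set $M := \lambda H(x) + (1-\lambda) H(y)$, $\bar x := x/H(x)$, $\bar y := y/H(y)$, $a := \lambda H(x)/M$ and $b := 1 - a$, so that $\lambda x + (1-\lambda) y = M(a\bar x + b \bar y)$; then $1$-homogeneity combined with log-concavity and $H(\bar x) = H(\bar y) = 1$ yields $H(\lambda x + (1-\lambda) y) = M \cdot H(a \bar x + b \bar y) \geq M \cdot H(\bar x)^a H(\bar y)^b = M$. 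Applying this to $F$, which is $1$-homogeneous and log-concave (the latter being precisely $\alpha$-FLC), makes $F$ concave, hence bounded above by its tangent $\sum_i p_i z_i$ at $z = 1$.

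For the final equivalence, the forward direction is immediate because $\alpha$-FLC is preserved under external fields: the change of variable $u_i = \lambda_i^{1/\alpha} z_i$ writes $\log g_{\lambda \ast \mu}(z^\alpha)$ as a constant plus $\log g_\mu(u^\alpha)$ composed with a positive linear map, and the previous part then yields EI for every tilt. The converse is the main step. Applying the polynomial form of EI to $\lambda \ast \mu$ with $\lambda = w^\alpha$, and using the identities $g_\mu(\lambda z^\alpha) = F(wz)^{k\alpha}$ and $p_i(\lambda \ast \mu) = w_i \partial_i F(w)/F(w)$ (the latter obtained by applying Euler to $F$), yields $F(wz) \leq \sum_i (w_i z_i) \partial_i F(w)$. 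Setting $y := wz$, which ranges over $\R^n_{>0}$ as $z$ does, and using Euler once more in the form $\langle y, \nabla F(w)\rangle = F(w) + \langle \nabla F(w), y - w\rangle$, the condition becomes $F(y) \leq F(w) + \langle \nabla F(w), y - w\rangle$ for all $y, w > 0$---the super-differential characterization of concavity. Thus $F$ is concave, hence log-concave, so $\log g_\mu(z^\alpha)$ is concave and $\mu$ is $\alpha$-FLC. The main obstacle I foresee is handling the boundary $\partial \R^n_{\geq 0}$, where Euler's identity and differentiability of $F$ become delicate; I would carry out the arguments on the open orthant $\R^n_{>0}$, where $g_\mu > 0$ by nonnegativity of its coefficients and $F$ is smooth, and extend the polynomial inequality to the closure by continuity.
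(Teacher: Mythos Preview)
Your proof is correct; the overall architecture (reducing everything to the $1$-homogeneous function $F(z)=g_\mu(z^\alpha)^{1/(k\alpha)}$ and its tangent at $\vec 1$), the treatment of $\alpha$-FLC $\Rightarrow$ EI via the ``$1$-homogeneous plus log-concave implies concave'' lemma, and the forward half of the last equivalence all coincide with the paper's argument. The genuine difference is in how you establish the first equivalence. The paper proves the two directions separately: $(\dagger)\Rightarrow$ EI by invoking the dual formula $\inf\{\DKL{\nu\river\mu}:\nu D_{k\to1}=q\}=-\log\inf_{z>0} g_\mu(z)/z^{kq}$ and plugging in the specific witness $z_i=(q_i/p_i)^\alpha$; and EI $\Rightarrow(\dagger)$ by a Lagrange-multiplier argument---reduce by homogeneity and induction on boundary faces to interior critical points of $(\sum_i p_i z_i^{1/\alpha})^{\alpha k}/g_\mu$, identify each such point with a tilt $\nu=z^*\ast\mu$, and apply EI to that $\nu$. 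Your Fenchel route packages both directions at once: after the substitution $z_i=e^{t_i}$ the polynomial inequality becomes $K_\mu(\alpha t)\le k\alpha K_p(t)$ between finite convex functions on $\R^n$, and biconjugation turns this into the exact equivalence with the conjugate inequality $K_\mu^*(kq)\ge k\alpha K_p^*(q)$, which is EI by the Gibbs/Donsker--Varadhan identification of $K_\mu^*(kq)$ with the constrained relative-entropy infimum. This is cleaner and entirely bypasses the critical-point analysis; the only care required (which you correctly flag) is at the boundary of the orthant and at $q$ where one of the conjugates is $+\infty$, both routine. For the converse of the final equivalence, the paper observes that $(\dagger)$ for every tilt exhibits $F$ as an infimum of linear functions and hence concave; your rewriting of the tangent bound at $\lambda=w^\alpha$ as the gradient inequality $F(y)\le F(w)+\langle\nabla F(w),y-w\rangle$ for all $y,w>0$ is the same conclusion unpacked pointwise.
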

	
	\Cref{fig:pictorial} shows a comparison of the notions in terms of the transformed generating polynomial \[g_\mu(z_1^\alpha,\dots,z_n^\alpha)^{1/k\alpha}.\] Spectral independence is equivalent to \emph{local} concavity of this function around the point $(1,\dots,1)$ \cite{alimohammadi2021fractionally}. \Cref{thm:entropic-independence} shows that entropic independence is equivalent to the linear tangent at $(1,\dots,1)$ upper bounding the function. Fractional log-concavity is equivalent to concavity of the entire function at all points in the positive orthant; roughly speaking, external fields allow us to replace the $(1,\dots,1)$ point by any other point in the positive orthant.
	
	\begin{figure}
		\begin{Columns}<3>
		\Column
		\Tikz*[use Hobby shortcut]{
			\draw[-stealth, line width=0.5, Black] (0, 0) -- (3, 0);
			\draw[-stealth, line width=0.5, Black] (0, 0) -- (0, 3);
			\path[fill=LightOrange!30!White] (2.8, 0.1) .. (2, 0.6) .. (1, 1) .. (0.6, 2) .. (0.1, 2.8) -- (2.8, 2.8) -- cycle;
			\draw[Orange!30!White, line width=1] (2.8, 0.1) .. (2, 0.6) .. (1, 1) .. (0.6, 2) .. (0.1, 2.8);
			\begin{scope}
				\clip (1, 1) circle (0.5);
				\path[fill=LightOrange] (2.8, 0.1) .. (2, 0.6) .. (1, 1) .. (0.6, 2) .. (0.1, 2.8) -- (2.8, 2.8) -- cycle;
				\draw[Orange, line width=1] (2.8, 0.1) .. (2, 0.6) .. (1, 1) .. (0.6, 2) .. (0.1, 2.8);
			\end{scope}
			\draw[dashed, line width=1] (1, 1) circle (0.5);
			\node at (2, 2) {$g_\mu\geq 1$};
		}
		\Column
		\Tikz*[use Hobby shortcut]{
			\draw[-stealth, line width=0.5, Black] (0, 0) -- (3, 0);
			\draw[-stealth, line width=0.5, Black] (0, 0) -- (0, 3);
			\path[fill=LightOrange] (2.8, 0.1) .. (2, 0.6) .. (1, 1) .. (0.6, 2) .. (0.1, 2.8) -- (2.8, 2.8) -- cycle;
			\draw[Orange, line width=1] (2.8, 0.1) .. (2, 0.6) .. (1, 1) .. (0.6, 2) .. (0.1, 2.8);
			\draw[dashed, line width=1, Black] (0.2, 1.8) -- (1.8, 0.2);
			\node at (2, 2) {$g_\mu\geq 1$};
		}
		\Column
		\Tikz*[use Hobby shortcut]{
			\draw[-stealth, line width=0.5, Black] (0, 0) -- (3, 0);
			\draw[-stealth, line width=0.5, Black] (0, 0) -- (0, 3);
			\path[fill=LightOrange] (2.8, 0.1) .. (1, 1) .. (0.1, 2.8) -- (2.8, 2.8) -- cycle;
			\draw[Orange, line width=1] (2.8, 0.1) .. (1, 1) .. (0.1, 2.8);
			\node at (2, 2) {$g_\mu\geq 1$};
		}
		\end{Columns}
		\begin{Columns}<3>
			\Column
				\Center{spectral independence}
			\Column
				\Center{entropic independence}
			\Column
				\Center{fractional log-concavity}			
		\end{Columns}
		\caption{\label{fig:pictorial}
			The axes are $z_1$, $z_2$, etc. The highlighted area is a level set of the transformed generating polynomial, e.g., where $g_\mu(z_1^\alpha,\dots,z_n^\alpha) \geq 1$. For a degree $k\alpha$-homogeneous function like $g_\mu(z_1^\alpha,\dots,z_n^\alpha)$, log-concavity, concavity of $g_\mu^{1/k\alpha}$, and quasi-concavity, that is convexity of the level sets, are all equivalent (folklore, see \cref{lem:lc-equiv}).  Notions get stronger from left to right. Spectral independence means that around the point $(1, \dots, 1)$, the level set is locally convex. Entropic independence means that the entire level set is globally above its tangent at $(1,\dots,1)$. Fractional log-concavity means that the level set is globally convex.
		}
	\end{figure}
	
	\paragraph{The importance of the external fields assumption.} One might a priori wish for entropy contraction or optimal mixing times from just spectral independence with no extra assumptions. In fact, \textcite{liu2021coupling} conjectured that if $\mu$ is an $O(1)$-spectrally independent distribution, the down-up walk for sampling from $\mu$ has modified log-Sobolev constant $\Omega(1/k)$. We refute this conjecture. Any $\alpha$-fractionally-log-concave distribution is also $O(1/\alpha)$-spectrally independent \cite[Remark~68]{alimohammadi2021fractionally}. However, there are examples of $\Omega(1)$-fractionally-log-concave distributions for which the down-up walk is not even irreducible --- see \cref{remark:lowerbound on downup walk}. Even ignoring the ergodicity issue of the walk, it is well-known that the spectral gap of a Markov chain does not \emph{automatically} translate into a modified log-Sobolev inequality \cite[see, e.g.,][]{bobkov2006modified}. A classical example is the random walk on a constant-degree expander graph. If $G$ is an $n$-node constant degree expander, then the random walk on $G$ has $\Omega(1)$ spectral gap, but its mixing time is $\simeq \log(n)$; a constant factor entropy contraction of this random walk would imply a mixing time of $\simeq \log\log(n)$ which is clearly wrong. One can view (the lazy version of) this random walk as a special case of down-up random walks. Let $\mu:\binom{[n]}{2}\to \R_{\geq 0}$ be the uniform distribution on the edges of the expander. Then the $2\leftrightarrow 1$ down-up walk is the same as the lazy random walk on $G$ itself. This shows that even for down-up random walks, one cannot obtain entropic independence from spectral independence with no extra assumptions.

		
	As mentioned above, for general spectrally independent distributions, the $k\leftrightarrow k-1$ down-up walk might not even be ergodic. We can fix the lack of ergodicity in the $k \leftrightarrow k - 1$ down-up walk by considering the more general $k \leftrightarrow \ell$ down-up walk for smaller values of $\ell$.
    For this more general walk, we establish a modified log-Sobolev inequality for all fractionally log-concave distributions via entropic independence.
	More precisely, in \cref{thm:alpha frac LC MLSI}, we show that for $\alpha$-fractionally-log-concave $\mu: \binom{[n]}{k} \to \R_{\geq 0}$, the $k \leftrightarrow (k-\lceil 1/\alpha\rceil)$ down-up walk \cite{AL20,alimohammadi2021fractionally}, has modified log-Sobolev constant $\Omega(k^{-1/\alpha}).$ The dependence on $\alpha$ is optimal --- again, see \cref{remark:lowerbound on downup walk}. 
	
	\cref{thm:alpha frac LC MLSI} is a natural generalization of a recent result of \textcite[Theorem~1]{CGM19}, which shows that the modified log-Sobolev constant of the $k \leftrightarrow (k-1)$-down-up walk for sampling from a log-concave (i.e., $\alpha=1$) distribution  $\mu:\binom{[n]}{k} \to \R_{\geq 0}$ is at least $\Omega(\frac{1}{k})$. We remark that our proof for \cref{thm:alpha frac LC MLSI} is shorter and less ``mysterious`` than that of \cite[Theorem~1]{CGM19}.
	\begin{theorem} \label{thm:alpha frac LC MLSI}
	Suppose $\mu: \binom{[n]}{k} \to \R_{\geq 0}$ is $\alpha$-fractionally log-concave, or more generally $(1/\alpha)$-entropically independent for all links. 
	Let $\l \leq k-\ceil{1/\alpha}.$ 
	 For all probability distributions $\nu$ on $\binom{[n]}{k}$,
\[ \DKL{\nu D_{k\to \l} \river \mu D_{k\to \l}}\leq (1- \kappa)\DKL{\nu \river \mu}.  \]
 Consequently, the $k \leftrightarrow \l$ down-up walk w.r.t.\ $\mu$ has modified log-Sobolev constant 
	$\geq \Omega(\kappa)$ where $\Omega$ hides an absolute constant and \[\kappa =\frac{  (k+1 - \l -  1/\alpha )^{1/\alpha -\lceil 1/\alpha \rceil } \prod_{i=0}^{\lceil 1/\alpha \rceil -1}(k-\l-i) }{(k+1)^{1/\alpha}}.\]
	For $1/\alpha\in \Z$ we can obtain a simpler looking alternative bound of
	\[ \kappa=\left.{\binom{k-\l}{1/\alpha}}\middle /{\binom{k}{1/\alpha}}\right. . \]
	
	\end{theorem}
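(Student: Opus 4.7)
The plan is to prove entropy contraction for $D_{k\to\ell}$ inductively by combining entropic independence at each link of $\mu$ with the KL chain rule. A preliminary fact I would first establish is that $\alpha$-fractional log-concavity is closed under taking links: for any $T\in\binom{[n]}{j}$ with $j<k$ and $\mu(S)>0$ for some $S\supset T$, the link $\mu_T$ on $\binom{[n]\setminus T}{k-j}$ defined by $\mu_T(S')\propto\mu(T\cup S')$ is again $\alpha$-FLC. Preservation under positive external fields is immediate from $g_{\lambda\ast\mu}(z^\alpha)\propto g_\mu\bigl((\lambda^{1/\alpha}\cdot z)^\alpha\bigr)$, a rescaling of variables; sending $\lambda_i\to\infty$ for $i\in T$ and passing to the limit yields $\mu_T$. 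By \cref{thm:entropic-independence}, every link $\mu_T$ is therefore $(1/\alpha)$-entropically independent.

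Set $H_j:=\DKL{\nu D_{k\to j}\river \mu D_{k\to j}}$, so $H_0=0$ and $H_k=\DKL{\nu\river\mu}$. I would use the KL chain rule in two complementary ways. Decomposing the level-$(j{+}1)$ distribution as ``level-$j$ marginal plus one additional element'' gives
\[ H_{j+1}=H_j+\mathbb{E}_{T\sim\nu D_{k\to j}}\!\bigl[\DKL{\nu^{(j+1)}_T\river\mu^{(j+1)}_T}\bigr], \]
where $\nu^{(j+1)}_T$ is the conditional law of the extra element given that the level-$j$ part equals $T$. A short computation then identifies $\nu^{(j+1)}_T$ with $\nu_T D_{k-j\to 1}$: both are probability vectors on $[n]\setminus T$ proportional to $i\mapsto\sum_{S\supset T\cup\{i\}}\nu(S)$, and likewise for $\mu$. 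Decomposing $\mu$ itself as ``level-$j$ marginal plus the full remainder'' instead yields
\[ H_k=H_j+\mathbb{E}_{T\sim\nu D_{k\to j}}\!\bigl[\DKL{\nu_T\river\mu_T}\bigr]. \]
Entropic independence of each link $\mu_T$ gives $\DKL{\nu_T D_{k-j\to 1}\river\mu_T D_{k-j\to 1}}\le\tfrac{1}{\alpha(k-j)}\DKL{\nu_T\river\mu_T}$; integrating over $T$ and subtracting the two displays produces the one-step recursion $H_{j+1}-H_j\le\tfrac{1}{\alpha(k-j)}(H_k-H_j)$.

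Writing $R_j:=H_k-H_j$ (so $R_0=H_k$), this rewrites as $R_{j+1}\ge\bigl(1-\tfrac{1}{\alpha(k-j)}\bigr)R_j$, a valid contraction precisely when $k-j>1/\alpha$. The iteration is therefore legitimate for $j=0,1,\dots,\ell-1$ exactly under the hypothesis $\ell\le k-\lceil 1/\alpha\rceil$, giving $R_\ell\ge\prod_{j=0}^{\ell-1}\bigl(1-\tfrac{1}{\alpha(k-j)}\bigr)H_k$ and hence the claimed entropy contraction $H_\ell\le(1-\kappa)H_k$ with $\kappa$ equal to the telescoping product. For integer $1/\alpha$ the product simplifies exactly to $\binom{k-\ell}{1/\alpha}/\binom{k}{1/\alpha}$, matching the alternative form; for general $\alpha$ the product is comparable up to an absolute constant to the gamma-function expression stated in the theorem, and in both regimes $\kappa=\Theta(k^{-1/\alpha})$. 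The MLSI bound $\Omega(\kappa)$ follows from the standard equivalence between entropy contraction of $D_{k\to\ell}$ and MLSI of the $k\leftrightarrow\ell$ down-up walk.

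The hardest step would be the preliminary closure of $\alpha$-FLC under links, which rests on a limiting argument in the positive orthant; once that is granted, the remainder is a clean double application of the chain rule followed by telescoping.
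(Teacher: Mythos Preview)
Your proposal is correct and follows essentially the same approach as the paper. Your quantities $H_j$ coincide with the paper's $\E{\tau_j}$, and your one-step recursion $H_{j+1}-H_j\le\frac{1}{\alpha(k-j)}(H_k-H_j)$ is an algebraic rearrangement of the paper's $\Delta_i\le\beta_i(\Delta_{i+1}+\cdots+\Delta_{k-1})$ with $\beta_i=\frac{1}{\alpha(k-i)-1}$; iterating either form gives the identical product $\prod_{j=k-\ell+1}^{k}\frac{j-1/\alpha}{j}$, which the paper then lower-bounds by the stated $\kappa$ via explicit Gamma-function inequalities (you should invoke the same one-sided bound rather than merely asserting ``comparability up to an absolute constant'').
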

	
	\begin{remark}\label{remark:lowerbound on downup walk}
    	\cref{thm:alpha frac LC MLSI} is tight up to a constant. For $\alpha=1$, it is easy to come up with a log-concave distribution $\mu$ for which the one-step down-up walk has modified log-Sobolev constant $\Theta(\frac{1}{k})$, e.g., $\mu$ uniformly supported on $\set*{[k], [k-1] \cup \set*{k+1} }.$ For $\alpha = 1/r$ with $r \in \N_{\geq 1},$ we can consider the following example. Let $\mu$ be the uniform distribution on $\binom{[n]}{k'}$, and we will define a corresponding $r$-fold distribution $\mu^{(r)}$ supported on 
    	\[ \set*{S \times [r] \given S \in \supp(\mu)}\subseteq \binom{[n]\times [r]}{k'r}. \]
    	The probability mass function on its support is given by
    	\[ \mu^{(r)}(S \times [r]) = \mu(S), \]
    	where $ S \times [r]$ is the set $\set*{(a, i) \given a\in S, i\in [r]}.$ Since the 
    	generating polynomial of $\mu^{(r)}$ is determined by that of $\mu$, which is 1-FLC, we can check that this distribution is indeed $\alpha = 1/r$ fractionally log-concave. 
    	
    	Let $k = k'r$. Clearly, the $k \leftrightarrow (k-r')$-down-up walk w.r.t $\mu^{(r)}$ with $r' < r$ is not even irreducible, because the down-up walk always stays at the same place. On the other hand, by comparing the  $k \leftrightarrow k-r$-down-up walk w.r.t $\mu^{(r)}$ with the $k' \leftrightarrow (k'-1)$-down-up walk w.r.t $\mu$, we can show that the former has modified log-Sobolev constant $\Theta\parens*{1/\binom{k'r}{r}} = \Theta\parens*{\frac{r!}{k^{1/\alpha}}} $.
	\end{remark}
	
	    
	
	

    \subsection{Application: fractionally log-concave polynomials}\label{sec:app-flc}
    
   	The most straightforward application of our techniques is to establish modified log-Sobolev inequalities and tight mixing times for distributions that are spectrally independent under arbitrary external fields, a.k.a.\ fractionally log-concave distributions \cite{alimohammadi2021fractionally}. In \cref{sec:app-ising}, we demonstrate an application beyond this setting, where one has to combine our techniques with others to establish tight mixing time bounds.
   	
   	\Cref{thm:alpha frac LC MLSI} is a generalization of the main results of \textcite{CGM19}, so as a special case we recover the tight mixing time and MLSI constants established previously for all distributions with a log-concave generating polynomial \cite[see][for examples]{ALOV19}. Here we highlight two important examples of fractionally log-concave distributions that go beyond simple log-concavity. We obtain, for the first time, tight mixing time bounds and MLSI constants for these examples; for further examples of fractionally log-concave distributions see \cite{alimohammadi2021fractionally}. We emphasize that in both examples, the tight mixing time is near-quadratic and notably not near-linear. As a result, none of the previous high-dimensional-expanders-based frameworks could obtain the tight mixing time in these examples.
   	
	\begin{definition}[Monomer-Dimer Systems] Suppose that a graph $G=(V, E)$ is given together with node weights $z:V\to \R_{\geq 0}$ and edge weights $w:E\to \R_{\geq 0}$. Then the monomer-dimer system is the distribution on matchings $M$ of the graph where
	\[ \P{\text{matching }M}\propto \prod_{e\in M} w(e)\cdot \prod_{v\text{ not matched by }M}z(v). \]	
		For a matching $M$, the edges in the matching are called dimers, and the nodes outside of the matching are called monomers. For an arbitrary monomer-dimer system, we can define a measure $\mu:\binom{V\times \set{0,1}}{\card{V}}\to \R_{\geq 0}$ capturing the distribution of monomers
		\[ \mu(S)=\begin{cases}
			0& \exists v\in V: \card{S\cap \set{(v, 0), (v, 1)}}\neq 1,\\
			\sum\set*{\text{weight}(M)\given \text{monomers of }M=S\cap (V\times \set{1}) }& \text{otherwise.}\\
		\end{cases} \]
	\end{definition}
	
	\begin{figure}
	    \centering
	    \begin{Columns}[Bottom]
	    \Column
	    \Tikz*{
			\begin{scope}[start chain, node distance=0.2]
				\foreach \t in {110000,101000,111100}
					\node[on chain, circle, draw=Gray, fill=LightGray, line width=1] (U\t) {
						\Tikz{\pic[scale=0.5]{VertexGraph/.expand once=\t};}
					};
			\end{scope}
			\begin{scope}[start chain, node distance=0.2]
				\node[on chain] at (-0.2, -2) {};
				\foreach \t in {1--000,1-1-00} 
					\node[on chain, circle, draw=Gray, fill=LightGray, line width=1] (D\t) {
						\Tikz{\pic[scale=0.5]{VertexGraph/.expand once=\t};}
					};
			\end{scope}
			
			\foreach \t/\s in {110000/1--000, 101000/1--000, 101000/1-1-00, 111100/1-1-00}
				\draw[draw=Gray, line width=1] (U\t) -- (D\s);
	    }
	    \Column
	    \[ L=\begin{bNiceMatrix}
	        0 & 1 & 0 & 0 & \Cdots & 0 & 0\\
	        -1 & 0 & 0 & 0 & \Cdots & 0 & 0\\
	        0 & 0 & 0 & 1 & \Cdots & 0 & 0 \\
	        0 & 0 & -1 & 0 & \Cdots & 0 & 0\\
	        \Vdots & \Vdots & \Vdots & \Vdots & \Ddots & \Vdots & \Vdots \\
	        0 & 0 & 0 & 0 & \Cdots & 0 & 1\\
	        0 & 0 & 0 & 0 & \Cdots & -1 & 0\\ 
	    \end{bNiceMatrix} \]
	    \end{Columns}
	    \begin{Columns}[Top]
	    \Column
	    \caption{2-site Glauber dynamics on monomers. A configuration consists of an assignment of binary labels indicating monomer/non-monomer to the vertices. In each step, two uniformly randomly picked vertices have their labels resampled conditioned on all other labels, with probabilities dictated by the monomer distribution.}
	    \label{fig:monomer-dimer}
	    \Column
	    \caption{Example of a matrix $L$ with quadratic mixing time for down-up walks. Note that $L+L^\intercal =0 \succeq 0$. $L$ is block-diagonal with $2\times 2$ blocks. The nonsymmetric determinantal point process selects independently and uniformly  at random between including/excluding both elements $\set{2i, 2i+1}$ for each $i$.}
	    \label{fig:nonsymmetric-dpp}
	    \end{Columns}
	\end{figure}
	
	\Textcite{alimohammadi2021fractionally} showed that the distribution of monomers $\mu$ is $\alpha=1/2$-fractionally-log-concave and proceeded to use the $\card{V}\leftrightarrow \card{V}-2$ down-up random walk, a.k.a.\ the 2-site Glauber dynamics (see \cref{fig:monomer-dimer}), to show polynomial-time sampling algorithms for monomer-dimer systems on planar graphs. \Textcite{alimohammadi2021fractionally} showed a spectral gap of $\Omega(1/\card{V}^2)$, and as a result a mixing time of $\tilde O(\card{V}^3)$ assuming a good starting point, for this random walk. As a direct corollary of \cref{thm:alpha frac LC MLSI}, we obtain the following result:
	\begin{corollary}\label{cor:monomer-mixing}
	    The 2-site Glauber dynamics on the monomer distribution of a $k$ vertex graph has MLSI constant $\Omega(1/k^2)$. As a result it mixes in time $\tilde O(k^2)$, assuming the walk is started from a configuration with probability mass $\geq 1/2^{\poly(k)}$.
	\end{corollary}
	Note that the requirement on the starting point is fairly weak (and can even be weakened further to $1/2^{2^{\poly\log(k)}}$). In particular, if we find the matching $M$ that has the maximum monomer-dimer weight using a maximum weighted matching algorithm, and start the random walk from monomers of $M$, that automatically satisfies the initial condition.
	
	It is not hard to see that \cref{cor:monomer-mixing} is tight. Consider the case where the graph $G$ itself is a perfect matching. Then, each step of the 2-site Glauber dynamics has only a $1/k$ chance of picking two endpoints of the same edge; if that does not happen, the resampling of the two vertices does not change anything and a turn is ``wasted.'' It is also not hard to see that roughly speaking all of the $\Omega(k)$ pairs of endpoints of edges need to be resampled once before mixing. So clearly, the mixing time of this chain is $\tilde \Omega(k^2)$.

	Our next application involves nonsymmetric determinantal point processes, a generalization of determinantal point processes which have found many uses in machine learning, recommender systems, and randomized linear algebra \cite[see, e.g., ][]{gartrell2019learning}.
	\begin{definition}[Nonsymmetric Determinantal Point Process]
	    Suppose that $L \in \R^{k\times k}$ is such that $L+L^\intercal\succeq 0$. The distribution on subsets $T$ of $\set{1,\dots,k}$, giving a weight of $\det(L_{T, T})$ to each set, is called the (nonsymmetric) determinantal point process. We view this distribution as $\mu:\binom{[k]\times\set{0,1}}{k}\to \R_{\geq 0}$, where
	    \[ \mu(S)=\begin{cases}
	        0& \exists i: \card{S\cap\set{(i, 0), (i, 1)}}\neq 1,\\
	        \det(L_{T, T})\text{ where }T=\set{i\given (i,1)\in S}& \text{otherwise}.\\
	    \end{cases} \]
	\end{definition}
	\Textcite{alimohammadi2021fractionally} showed that the above distribution is also $\alpha=1/2$-fractionally log-concave, and as a result established a spectral gap of $\Omega(1/k^2)$ and a mixing time of $\tilde O(k^3)$ with a good choice of the starting point, for the $k\leftrightarrow k-2$ down-up random walk.
	
	\begin{corollary}
	    The $k\leftrightarrow k-2$ down-up random walk on nonsymmetric determinantal point processes has MLSI constant $\Omega(1/k^2)$. As a result, it mixes in time $\tilde O(k^2)$, assuming the walk is started from a set with probability mass $\geq 1/2^{\poly(k)}$.
	\end{corollary}
	Again, finding a starting point satisfying the initial condition is not difficult; for example, it can be achieved by a simple local search \cite{anari2021sampling}. Once again, it is not difficult to see that this quadratic mixing time is tight. The matrix $L$ realizing a mixing time of $\tilde \Omega(k^2)$ can be seen in \cref{fig:nonsymmetric-dpp}. The distribution defined by this particular $L$ in \cref{fig:nonsymmetric-dpp} is the same as the one described by the tight example of monomer distributions; we leave this as an exercise to the reader.
	
	\begin{remark}
	    Both examples mentioned above stem from the so-called Hurwitz-stable polynomials \cite[see ][]{alimohammadi2021fractionally}. \Cref{thm:alpha frac LC MLSI} can be applied more generally to get tight mixing time bounds for arbitrary fractionally log-concave polynomials, and as a special case, the so-called sector-stable polynomials \cite{alimohammadi2021fractionally}. However, there currently seems to be a loss in the analysis of \textcite{alimohammadi2021fractionally} when going from sector-stability to fractional log-concavity for general sector-stable polynomials (this loss is avoided in the case of Hurwitz-stability though). As a result, while we get \emph{improved} mixing time bounds in other examples of fractionally log-concave polynomials, the bounds do not always seem to be tight. If the analysis of \cite{alimohammadi2021fractionally} is improved in future works, and one can establish that homogeneous $\alpha$-sector-stable polynomials are indeed $\alpha$-fractionally log-concave, then combined with our results we would get a tight mixing time bound for all sector-stable polynomials.
	\end{remark}
	
	\subsection{Application: high-temperature Ising models}\label{sec:app-ising}
	
	An Ising model is a probability distribution on the discrete hypercube $\{\pm 1\}^{n}$ given by
	\[\mu_{J,h}(x) \propto \exp\left(\frac{1}{2}\langle x, Jx \rangle + \langle h, x \rangle\right),\]
	where $J \in \R^{n\times n}$ is a symmetric matrix (known as the interaction matrix) and $h \in \R^{n}$ is known as the external field. These models are of fundamental importance in statistical physics and other areas; see e.g. \cite{mezard1987spin,lauritzen1996graphical,mezard2009information,talagrand2010mean}.
	
	The Glauber dynamics (or Gibbs sampler) is a simple and very popular discrete time Markov chain to sample from $\mu_{J,h}$. Its transitions may be described as follows: at each step, given the current state $\sigma$, a coordinate $i$ is chosen uniformly at random from $[n]$ and its value is resampled from the conditional distribution $\mu_{J,h}(\cdot \mid \sigma_{-i})$; it can also be viewed as a down-up walk on a homogenized version of $\mu$, see \cref{def:hom}.
	The study of the Glauber dynamics for the Ising model is a classical topic with numerous results: see  e.g.~\cite{martinelli1999lectures,levin2017markov} for an introduction. Rapid mixing of the Glauber chain is interconnected with other structural properties of the Gibbs measure. In particular, it is well-known that the Dobrushin uniqueness condition -- $\|J\|_{\infty \to \infty} < 1$ (equivalently, $\max_{i \in [n]}\sum_{j}|J_{ij}| < 1$) -- implies that the Glauber dynamics mixes in time $O(n\log{n})$. 
	
	Although Dobrushin's condition is a tight condition for rapid mixing in certain cases (e.g.~the Curie-Weiss model, which is the Ising model on the complete graph), there are many interesting cases where Dobrushin's condition is extremely restrictive. An important example is the famous Sherrington-Kirkpatrick (SK) model from spin glass theory \cite{mezard1987spin,talagrand2010mean}, where $J_{ij} \sim N(0,\beta^2/n)$ is a GOE random matrix. In this case, Dobrushin's condition implies rapid mixing only if the inverse temperature $\beta$ is $O(1/\sqrt{n})$, even though it is widely believed that the Glauber dynamics mixes rapidly for all $\beta < c$ for an absolute constant $c > 0$, even conjecturally with $c = 1$ \cite{simons}.
	
	Polynomial time mixing in the SK model was recently established with $c = 1/4$ in recent work of \textcite{eldan2020spectral} 
	as a consequence of the following general result: if the interaction matrix $J$ is positive semidefinite, then the spectral gap of the Glauber dynamics for the Ising model is at least $\frac{1 - \norm{J}_\OP}{n}$, where $\norm{J}_\OP$ denotes the $\l^2 \to \l^2$ operator norm, which, since $J$ is symmetric, coincides with the largest eigenvalue of $J$. The interaction matrix $J$ can always be assumed to be positive semidefinite without loss of generality, because adding a multiple of the identity matrix to $J$ does not change the measure $\mu_{J,h}$.
	By the well-known relationship between the spectral gap and the mixing time, the result of \cite{eldan2020spectral} implies that the mixing time of the Glauber dynamics is $O\parens*{\frac{n}{1-\norm{J}_\OP}\cdot (n + \norm{h}_1)}$, i.e. quadratic in $n$. However, it seemed plausible that the true mixing time was faster than this.
	
	Here, as the main application of our theory, we establish an optimal $O(n \log n)$ bound on the mixing time and also show that the modified log-Sobolev inequality holds:
	\begin{theorem} \label{thm:main}
	    Let $\mu_{J,h}$ denote an Ising measure on $\{\pm 1\}^{n}$ with $0 \preceq J \preceq \norm{J}_\OP I$ and let $P$ denote the transition matrix of the the Glauber dynamics. Then,
	    \begin{enumerate}[label = (\alph*)]
	        \item \label{part:mlsi} The modified log-Sobolev constant $\rho_0(P)$ of $P$ satisfies $\rho_0(P) \geq \frac{1 - \norm{J}_\OP}{n }$.
	        \item \label{part:mixing} The Glauber dynamics mixes in time $O\parens*{ \frac{n \log n}{1 - \norm{J}_\OP}}$.
	    \end{enumerate}
	\end{theorem}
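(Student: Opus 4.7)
The strategy is to lift the Ising measure to a distribution on $k$-subsets and apply the entropic-independence framework of \cref{thm:entropic-independence}. Specifically, represent each spin $i \in [n]$ by a pair of elements $i^+, i^-$, giving a ground set of size $2n$, and map each configuration $x \in \{\pm 1\}^n$ to the $n$-subset $S_x = \{i^{x_i} : i \in [n]\} \in \binom{[2n]}{n}$. Let $\tilde\mu$ be the pushforward of $\mu_{J,h}$. One checks directly that a single Glauber step for spin $i$ — erase and resample $x_i$ conditioned on the remaining spins — is exactly a single $n \leftrightarrow (n-1)$ down-up step for $\tilde\mu$, since removing $i^{x_i}$ forces the replacement to lie in $\{i^+, i^-\}$ with Ising-conditional probabilities. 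Thus part~(a) is equivalent to an MLSI bound for the $n \leftrightarrow (n-1)$ down-up walk on $\tilde\mu$, and part~(b) follows from part~(a) by the standard inequality $t_{\mathrm{mix}} \leq \rho_0(P)^{-1}\bigl(\log\log(1/\mu_{\min}) + \log 4\bigr)$, noting $\log(1/\mu_{\min}) = O\bigl(n^2 \|J\|_{OP} + n\|h\|_1\bigr)$ so the extra factor is only $O(\log n)$.

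For part (a), the central step is to show that every link of $\tilde\mu$ is $(1/\alpha)$-entropically independent with $\alpha = 1 - \|J\|_{OP}$. Conditioning on the values of a subset of spins yields an Ising distribution on the remaining spins whose interaction matrix is a principal submatrix of $J$ (hence still PSD with operator norm at most $\|J\|_{OP}$) and a modified external field; so the hypothesis is preserved along all links, and it suffices to prove entropic independence for $\tilde\mu$ itself. By \cref{thm:entropic-independence}, this is equivalent to the generating-polynomial inequality
\[ g_{\tilde\mu}\bigl(z_1^{\alpha},\dots, z_{2n}^{\alpha}\bigr)^{1/(n\alpha)} \;\leq\; \sum_{v \in [2n]} p_v \, z_v, \]
where $p = \tilde\mu D_{n \to 1}$ is the marginal on the $2n$ elements.

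To prove this inequality I would invoke the Hubbard--Stratonovich identity: since $J \succeq 0$,
\[ \exp\!\Bigl(\tfrac12 \langle x, J x\rangle\Bigr) = \E_{Y \sim \mathcal{N}(0, J)}\,\exp(\langle Y, x\rangle), \]
which exhibits $\mu_{J,h}$ as a Gaussian mixture of product measures and hence writes the generating polynomial as
\[ g_{\tilde\mu}(z^+, z^-) \;\propto\; \E_{Y}\Bigl[\,\prod_{i=1}^n \bigl(e^{h_i + Y_i}\, z_i^+ + e^{-(h_i + Y_i)}\, z_i^-\bigr)\Bigr]. \]
For each fixed $Y$, the product is affine in each pair $(z_i^+, z_i^-)$ and log-affine coordinate-wise, so the $\alpha$-transformed product is $\alpha$-fractionally log-concave pointwise. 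The task is then to push the $(\cdot)^{1/(n\alpha)}$ envelope through the Gaussian expectation; the operator-norm bound $\|J\|_{OP} \leq 1 - \alpha$ is exactly what is needed to control the log-Laplace transform of the Gaussian integrator, in the same way that the spectral-gap argument of Eldan--Koehler--Zeitouni uses $\|J\|_{OP}$. With entropic independence at every link in hand, the local-to-global entropy contraction principle (telescoping the one-step relative-entropy contraction of \cref{def:entropic-independence} across the chain of scales $n \to n-1 \to \cdots \to 1$) then yields $\rho_0(P) \geq \alpha/n = (1 - \|J\|_{OP})/n$.

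The main obstacle is obtaining entropic independence with the \emph{sharp} constant $1 - \|J\|_{OP}$ rather than a weaker Dobrushin-type constant such as $1 - \|J\|_{\infty \to \infty}$. This requires extracting the full operator-norm cancellation from the Gaussian integrand above, which is an entropy-level (rather than variance-level) analogue of the Bauerschmidt--Bodineau / Eldan--Koehler--Zeitouni estimate and appears to be the genuinely new technical ingredient; once it is established, preservation of the hypothesis under pinning and the local-to-global entropy lift are routine consequences of the framework developed around \cref{thm:entropic-independence,thm:alpha frac LC MLSI}.
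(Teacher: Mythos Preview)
Your reduction to the homogenized distribution on $\binom{[2n]}{n}$ and the identification of Glauber dynamics with the $n\leftrightarrow(n-1)$ down-up walk are correct and match the paper. Beyond that, both parts have genuine gaps.

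\textbf{Part (b).} The standard MLSI-to-mixing bound gives $t_{\mathrm{mix}}=O\bigl(\rho_0^{-1}\log\log(1/\mu_{\min})\bigr)$, and $\log(1/\mu_{\min})$ is of order $n\|h\|_1$ in general. The theorem places no bound on $h$, so your claim that ``the extra factor is only $O(\log n)$'' is false; your argument only recovers the $h$-dependent bound already noted in \cref{rmk:field}. The paper removes the $h$-dependence by a separate step: it shows the Ising model satisfies an \emph{exchange inequality} $\mu(\sigma)\mu(\tau)\le 2^{O(\sqrt n)}\mu(\sigma')\mu(\tau')$ for single-spin swaps, and then invokes the warm-start machinery of \cite{logConcaveIV} to reach a $2^{O(n^2)}$-warm distribution in $O(n\log n)$ steps regardless of $h$.

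\textbf{Part (a).} There are two problems. First, the Hubbard--Stratonovich step is not a proof: writing $g_{\tilde\mu}$ as a Gaussian mixture of log-concave polynomials does not make $g_{\tilde\mu}$ itself $\alpha$-FLC (mixtures do not preserve log-concavity), and you yourself flag ``pushing the $(\cdot)^{1/(n\alpha)}$ envelope through the expectation'' as the missing ingredient. The paper does not attempt this; instead it uses the needle decomposition of \cite{eldan2020spectral} to reduce the MLSI to \emph{rank-one} Ising models $\mu_{u,v}$ with $\|u\|_2^2\le\|J\|_{\mathrm{OP}}$.

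Second, and more structurally: even granting uniform $(1/\alpha)$-entropic independence of every link with $\alpha=1-\|J\|_{\mathrm{OP}}$, the local-to-global telescoping you invoke would \emph{not} yield $\rho_0\ge\alpha/n$ for the one-step walk. \Cref{thm:alpha frac LC MLSI} gives $D_{k\to\ell}$ contraction only for $\ell\le k-\lceil 1/\alpha\rceil$; for $\ell=k-1$ its telescoping product $\prod_{j=2}^{k}j/(j-1/\alpha)$ diverges as $\alpha\downarrow 1/2$ and is meaningless for $\alpha\le 1/2$. The paper circumvents this by proving, for rank-one models, a \emph{non-uniform} fractional log-concavity with exponents $\alpha_i=1-\|u_{-i}\|_2^2$ (via a Dobrushin/coupling bound on the influence matrix, \cref{prop:frac LC}), and then running a bespoke induction on $n$ (\cref{prop:rank 1 MLSI}). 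That induction needs precisely $\sum_i\alpha_i\,\DKL{\nu_i\river\mu_i}\le\DKL{\nu\river\mu}$ with the $i$-dependent $\alpha_i$'s; replacing them by the uniform $\alpha=1-\|u\|_2^2\le\alpha_i$ gives a strictly weaker inequality and the induction no longer closes with the sharp constant.
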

	The MLSI is used in the proof of mixing, but it also has many other useful direct consequences, such as concentration of measure and reverse hypercontractive estimates \cite{van2014probability,mossel2013reverse}.
	
	\cref{thm:main} can be applied to a number of other models of interest. For example, a $d$-regular version of the diluted SK model can be formed
	by taking a random $d$-regular graph and assigning edge weights i.i.d. from $Uni(\pm \beta)$. Combining the above result with a version of Friedman's theorem (see \cite{eldan2020spectral} for details and references) proves the optimal $O(n \log(n))$ mixing time of Glauber dynamics for all $\beta < \frac{1}{4\sqrt{d - 1}}$, whereas Dobrushin's uniqueness condition, or the more precise tree uniqueness criterion (see e.g. \cite{sinclair2014approximation}), holds only when $\beta = O(1/d)$. The reason for the discrepancy is that for these spin glass models, the uniqueness threshold is not the relevant phase transition on the infinite $d$-ary tree. Instead, the properly analogous phase transition concerns the tree with spin-glass boundary conditions \cite{chayes1986mean,pemantle2010critical} or the purity of the limiting Gibbs measure with free boundary conditions \cite{bleher1995purity,ioffe1996extremality,evans2000broadcasting}, otherwise known as the \emph{reconstruction threshold}, which is well past the uniqueness threshold.

	
	\paragraph{Comparison to previous work.} Although sampling from Ising models is a classical topic, only recently in the breakthrough work of \textcite{bauerschmidt2019very} was a polynomial time sampling result established for the SK model with constant $\beta$. Under the same condition $0 \preceq J \prec I$ as above, they showed how to draw a sample from the Ising model by sampling from a related log-concave distribution in $\mathbb{R}^n$ and applying an additional rounding step. They also proved a a version of the log-Sobolev inequality, but their version only implies $e^{O(\sqrt{n})}$ time mixing bounds for the Glauber dynamics in the SK model --- see discussion in \cite{eldan2020spectral}. 
	
	Later, in the work \cite{eldan2020spectral} it was proved that the Glauber dynamics indeed mix in polynomial time. Their result established a reduction for proving functional inequalities to the case where $J$ is rank one, and the $O(n^2)$ mixing time guarantee was established using the Poincar\'e inequality of \cite{hayes2006simple,wu2006poincare}. However, there was no analogous way to establish the MLSI for the class of rank-one Ising models based on existing results. For example, directly applying a state of the art result such as \cite{marton2019logarithmic} gives an MLSI with constant $e^{-\Omega(\sqrt{n})}$. The issue is that in these models, like the SK model itself, the conditional marginals can be very tiny and existing methods are unable to handle this efficiently. In contrast, our approach based on entropic independence requires no assumption on boundedness of the conditional marginals and enables us to prove the MLSI. 

	\subsection{Techniques}
	First, we discuss the idea behind our main result, \cref{thm:entropic-independence}, which in particular establishes entropic
	independence of the probability distribution $\mu$ given fractional log-concavity of the generating polynomial $g_{\mu}$. Recall that $(1/\alpha)$-entropic independence
	holds for $\mu$ if the inequality
	\[ \DKL{\nu D_{k\to 1} \river \mu D_{k\to 1}}\leq \frac{1}{\alpha k}\DKL{\nu \river \mu}  \]
	is true for all probability measures $\nu$. The connection with the generating polynomial appears when we fix the left hand side, or more precisely fix the marginal $q = (q_1,\ldots,q_n) := \nu D_{k \to 1}$, and ask for the \emph{worst-case} choice of $\nu$ given this constraint: the measure $\nu$ minimizing the rhs. This is a minimum relative entropy problem, so
	based on convex duality (\cref{lem:KL-dual}, cf. \cite{singh2014entropy}), we get a formula in terms of ``dual'' variables $\log z_1,\ldots,\log z_n$ corresponding to the constraints; concretely, we have
	\[ \inf\set*{\DKL{\nu \river \mu} \given \nu D_{k\to 1}=q } = -\log\parens*{\inf_{z_1,\dots, z_n>0} \frac{g_\mu(z_1,\dots,z_n)}{z_1^{kq_1}\cdots z_n^{kq_n}} }. \]
	The factor of $k$ in the exponent of each $z_i$ appears because the down operator $D_{k \to 1}$ has a $1/k$ chance of picking any particular element of its input set $S$. Note that the rhs can now be lower bounded by making any choice of the variables $z_1,\ldots,z_n > 0$. By choosing $z$ related to $q$ and $p := \mu D_{k \to 1}$, we can then show that the right hand side can be lower bounded in terms of $\DKL{q \river p} = \DKL{\nu D_{k\to 1} \river \mu D_{k\to 1}}$ if an appropriate convexity inequality holds, and in particular if we have $\alpha$-fractional log-concavity; we leave the details to \cref{sec:entropic-independence}.
	
	Once entropic independence is established, the modified log-Sobolev inequality (MLSI) of \cref{thm:alpha frac LC MLSI} follows by a version of the ``local-to-global'' argument.
	Similar to \cite[Theorem~46]{alimohammadi2021fractionally} and similar arguments in \cite{CLV20,HL20}, we write $\DKL{\nu \river \mu}$ and $ \DKL{\nu D_{k\to \l} \river \mu D_{k\to \l}}$ as a telescoping sum of terms of the form $\DKL{\nu D_{k \to i}\river \mu D_{k \to i}} - \DKL{\nu D_{k \to (i-1)}\river \mu D_{k \to (i-1)}},$  use the entropic independence of $\mu$ and its conditional distributions to derive inequalities involving these terms, and then chain these inequalities together to derive the desired entropy contraction. The main difference between our proof and the existing local-to-global framework \cite{CGM19,AL20,HL20,alimohammadi2021fractionally} is that we use $k\to 1$ entropy contraction, i.e. entropic independence, as the main building block whereas they use $2 \to 1$ entropy contraction. This difference is crucial: nontrivial $2 \to 1$ entropy contraction does not hold for fractionally log-concave distributions. In particular, if $\mu$ is the uniform distribution over $\binom{[n]}{k}$ and $\mu^{(2)}$ is the corresponding 2-fold distribution, then $\mu^{(2)}$ is $1/2$-FLC but it does not satisfy any nontrivial $2 \to 1$ entropy contraction guarantee independent of $n$: see \cref{example:no-2-to-1} in the Appendix for full details. 

	The proof of the MLSI for the Ising model, \cref{thm:main}\ref{part:mlsi}, requires some additional ingredients, including appropriate generalization of the previous results to \emph{non-uniform} versions of fractional log-concavity and entropic independence. 
	In order to prove \cref{thm:main}\ref{part:mlsi}, we first apply the needle decomposition argument in \cite{eldan2020spectral} to reduce to the rank-1 case, i.e. when $J = uu^\intercal$ for some $u\in \R^n$ (see \cref{sec: reduce to rank 1}). To show the modified log-Sobolev inequality for the Ising model $\mu$ with $J = u u^\intercal $, $\norm{u}_2^2 < 1$, we show that $\mu$ satisfies a {non-uniform} version of fractional log-concavity (\cref{prop:frac LC}), then show that this generalized version of fractional log-concavity still implies a (non-uniform version of) $ D_{n\to 1}$ entropy contraction (\cref{thm:entropic-independence,thm:nonuniform FLC to entropy contraction}), and finally use induction on $n$ to prove that $D_{n\to 1}$ entropy contraction implies the modified log-Sobolev inequality, or more precisely, $D_{n \to (n-1)}$ entropy contraction (\cref{prop:rank 1 MLSI}). The proof of non-uniform fractional log-concavity employs the ``coupling to spectral independence'' framework in \cite{blanca2021mixing,liu2021coupling} to prove local log-concavity at the point $(1,\dots, 1).$ This is enough, because of the fact that log-concavity of $\mu$ at any point $(z_1,\dots, z_n)\in \R_{\geq 0}^n$ is equivalent to log-concavity at $(1,\dots, 1)$ for an Ising model $\mu'$ with the same interaction matrix and a shifted external field. 
	
	Once the MLSI is proven, we immediately get fast mixing of the Glauber dynamics as long as the external field $h$ is not too large. It turns out that this dependence on the size of the external field can be entirely eliminated, because the Glauber dynamics quickly reaches a kind of ``warm start''; to prove this,  we show that the Ising model satisfies an ``exchange inequality'' (\cref{prop:exchange}), and then use the framework in \Textcite{logConcaveIV} to deduce fast mixing without any dependence on the external field.
	
	\subsection{Organization}
	In \cref{sec:prelims}, we collect some preliminary notions and results. In \cref{sec:entropic-independence}, we prove our main result, \cref{thm:entropic-independence}. \cref{sec:MLSI-Ising} proves the desired MLSI for rank-1 Ising models with $\|u\|_{2} < 1$; together with the needle decomposition of \cite{eldan2020spectral} (\cref{thm:needle-decomposition}, \cref{cor:needle}), this immediately implies \cref{thm:main}\ref{part:mlsi}. Finally, in \cref{sec:exchange}, we show that the Ising model satisfies the exchange property of \cite{logConcaveIV}, thereby completing the proof of \cref{thm:main}\ref{part:mixing}. 
	
	\Cref{sec:appendix} contains the proofs of \cref{thm:alpha frac LC MLSI} (which follows by combining \cref{thm:entropic-independence} with the local-to-global framework, as presented in \cite{alimohammadi2021fractionally}) and of \cref{thm:nonuniform FLC to entropy contraction}, which is essentially identical to the proof of \cref{thm:entropic-independence}. 
	
	\begin{subsection}{Acknowledgements}
		Nima Anari and Thuy-Duong Vuong are supported by NSF CAREER award CCF-2045354, a Sloan Research Fellowship, and a Google Faculty Research Award.	Frederic Koehler was supported in part by E.\ Mossel's Vannevar Bush Fellowship ONR-N00014-20-1-2826. Huy Tuan Pham is supported by a Two Sigma Fellowship. 
	\end{subsection}

	\section{Preliminaries}
\label{sec:prelims}
\subsection{Markov chains and functional inequalities}

Let $\mu$ and $\nu$ be probability measures on a finite set $\Omega$. The Kullback-Liebler divergence (or relative entropy) between $\nu$ and $\mu$ is given by
\[\DKL{\nu \river \mu} = \sum_{x \in \Omega}\nu(x)\log\left(\frac{\nu(x)}{\mu(x)}\right),\]
with the convention that this is $\infty$ if $\nu$ is not absolutely continuous with respect to $\mu$. By Jensen's inequality, $\DKL{\nu \river \mu} \geq 0$ for any probability measures $\mu, \nu$. The total variation distance between $\mu$ and $\nu$ is given by
\[d_\TV(\mu, \nu) = \frac{1}{2}\sum_{x \in \Omega}|\mu(x) - \nu(x)|.\]

A Markov chain on $\Omega$ is specified by a row-stochastic non-negative transition matrix $P \in \R^{\Omega \times \Omega}$. We refer the reader to \cite{levin2017markov} for a detailed introduction to the analysis of Markov chains. As is common, we will view probability distributions on $\Omega$ as row vectors. Recall that a transition matrix $P$ is said to be reversible with respect to a distribution $\mu$ if for all $x,y \in \Omega$, $\mu(x)P(x,y) = \mu(y)P(y,x)$. In this case, it follows immediately that $\mu$ is a stationary distribution for $P$ i.e.~$\mu P = \mu$. If $P$ is further assumed to be ergodic, then $\mu$ is its unique stationary distribution, and for any probability distribution $\nu$ on $\Omega$, $d_\TV(\nu P^{t}, \mu) \to 0$ as $t \to \infty$. The goal of this paper is to investigate the rate of this convergence. 

\begin{definition}
Let $P$ be an ergodic Markov chain on a finite state space $\Omega$ and let $\mu$ denote its (unique) stationary distribution. For any probability distribution $\nu$ on $\Omega$ and $\epsilon \in (0,1)$, we define
\[t_\mix(P, \nu, \epsilon) = \min\{t\geq 0 \mid d_\TV(\nu P^{t}, \mu)\leq \epsilon\},\]
and
\[t_\mix(P,\epsilon) = \max\set*{\min\set{t\geq 0 \mid d_\TV(\1_{x}P^t, \mu) \leq \epsilon}\given x\in \Omega},\]
where $\1_{x}$ is the point mass supported at $x$. 
\end{definition}

We will drop $P$ and $\nu$ if they are clear from context. Moreover, if we do not specify $\epsilon$, then it is set to $1/4$. This is because the growth of $t_{\operatorname{mix}}(P,\epsilon)$ is at most logarithmic in $1/\epsilon$ (cf.~\cite{levin2017markov}). 

The modified log-Sobolev constant of a Markov chain, defined next, provides control on its mixing time.

\begin{definition}
\label{def:MLSI}
Let $P$ denote the transition matrix of an ergodic, reversible Markov chain on $\Omega$ with stationary distribution $\mu$.

\begin{itemize}
    \item The Dirichlet form of $P$ is defined for $f,g:\Omega \to \mathbb{R}$ by
    \[\mathcal{E}_P(f,g) = \dotprod{f, (I-P)g}_{\mu} = \dotprod{(I-P)f, g}_{\mu}.\]
    \item The modified log-Sobolev constant of $P$ is defined to be
\[\rho_0(P) = \inf\set*{\frac{\mathcal{E}_P(f, \log f)}{2\cdot \Ent_{\mu}[f]} \given f \colon \Omega \to \mathbb{R}_{\geq 0}, \Ent_{\mu}[f]\neq 0},\]
where 
\[\Ent_{\mu}[f] = \E_{\mu}{f\log f} - \E_{\mu}{f}\log\E_{\mu}{f}.\]
Note that, by rescaling, the infimum may be restricted to functions $f\colon \Omega \to \mathbb{R}_{\geq 0}$ satisfying $\text{Ent}_{\mu}[f]\neq 0$ and $\E_{\mu}{f} = 1$.
\end{itemize}

The relationship between the modified log-Sobolev constant and mixing times is captured by the following well-known lemma. 

\begin{lemma}[{\cite[see, e.g.,][]{bobkov2006modified}}]
Let $P$ denote the transition matrix of an ergodic, reversible Markov chain on $\Omega$ with stationary distribution $\mu$ and let $\rho_0(P)$ denote its modified log-Sobolev constant. Then, for any probability distribution $\nu$ on $\Omega$ and for any $\epsilon \in (0,1)$
\[t_\mix(P, \nu, \epsilon) \leq \ceil*{\rho_0(P)^{-1}\cdot \parens*{\log\log\max\set*{\parens*{\frac{\nu(x)}{\mu(x)}}\given x\in \Omega} + \log\parens*{\frac{1}{2\epsilon^2}}} }.\]
In particular,
\[t_\mix(P,\epsilon) \leq  \ceil*{ \rho_0(P)^{-1}\cdot \parens*{\log\log\parens*{\frac{1}{\min\set{\mu(x)\given x\in \Omega}}} + \log\parens*{\frac{1}{2\epsilon^2}}} }.\]
\end{lemma}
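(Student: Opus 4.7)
The plan is to extract exponential contraction of relative entropy along the chain from the MLSI, then convert the resulting entropy bound to total variation via Pinsker's inequality, and finally invert the bound to solve for $t$. Set $f_t := \nu P^t / \mu$, so that $\text{Ent}_\mu[f_t] = \DKL{\nu P^t \river \mu}$ is the quantity that must be driven below $2\epsilon^2$ (by Pinsker).

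The heart of the argument is the bound $\text{Ent}_\mu[f_t] \leq e^{-\rho_0(P)\,t}\,\DKL{\nu \river \mu}$. The cleanest entry point is the continuous-time semigroup $H_s = e^{-s(I-P)}$: a direct calculation gives $\frac{d}{ds}\text{Ent}_\mu[H_s f] = -\mathcal{E}_P(H_s f, \log H_s f)$, so combining with the MLSI $\mathcal{E}_P(g, \log g) \geq 2\rho_0 \text{Ent}_\mu[g]$ and Grönwall yields $\text{Ent}_\mu[H_s f] \leq e^{-2\rho_0 s}\text{Ent}_\mu[f]$. Transferring this to the discrete iterates $f_t$ is a standard but delicate step; one can instead appeal directly to the one-step reversibility-based entropy contraction (as in \cite{bobkov2006modified}) to conclude $\text{Ent}_\mu[f P] \leq e^{-\rho_0}\text{Ent}_\mu[f]$, and then iterate.

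The remainder is bookkeeping. For the initial entropy, use the crude bound $\DKL{\nu \river \mu} = \sum_x \nu(x)\log(\nu(x)/\mu(x)) \leq \max_x \log(\nu(x)/\mu(x))$. Pinsker's inequality $d_{\operatorname{tv}}(\nu P^t, \mu)^2 \leq \tfrac{1}{2}\DKL{\nu P^t \river \mu}$ then reduces the mixing requirement to imposing $e^{-\rho_0 t}\max_x \log(\nu(x)/\mu(x)) \leq 2\epsilon^2$; taking logarithms and solving for $t$ gives exactly the first displayed inequality, with the outer ceiling simply rounding $t$ to an integer. The second displayed inequality follows by specializing to $\nu = \delta_y$, so that $\max_x \log(\nu(x)/\mu(x)) = \log(1/\mu(y)) \leq \log(1/\min_x \mu(x))$, and then maximizing over $y$.

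The one non-routine step is the discrete-time entropy contraction: continuous-time exponential decay is immediate from the MLSI, but the one-step bound for $P$ itself (rather than for $H_s$) requires reversibility together with the convexity of $x\log x$, and this is what accounts for the final bound featuring $\rho_0^{-1}$ rather than $(2\rho_0)^{-1}$. Everything else is a direct application of Pinsker, a trivial upper bound on relative entropy, and inverting an exponential.
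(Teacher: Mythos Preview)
The paper does not prove this lemma; it is quoted as a standard fact with a citation to \cite{bobkov2006modified}, so there is no argument in the paper to compare against. Your outline is the standard one and is essentially correct: derive exponential decay of relative entropy from the MLSI, convert to total variation via Pinsker, bound the initial entropy crudely by $\DKL{\nu\river\mu}\le \max_x\log(\nu(x)/\mu(x))$, and solve for $t$. The bookkeeping at the end reproduces the stated formula, and the second display follows exactly as you say by specializing to $\nu=\delta_y$.

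You rightly single out the passage from continuous to discrete time as the only substantive step. The semigroup calculation $\tfrac{d}{ds}\,\text{Ent}_\mu[H_s f]=-\mathcal{E}_P(H_sf,\log H_sf)$ together with the MLSI yields $\text{Ent}_\mu[H_sf]\le e^{-2\rho_0 s}\,\text{Ent}_\mu[f]$ cleanly, but the one-step discrete bound $\text{Ent}_\mu[Pf]\le e^{-\rho_0}\,\text{Ent}_\mu[f]$ does not follow from reversibility and convexity of $x\log x$ alone in the way you suggest (convexity gives $\text{Ent}_\mu[f]-\text{Ent}_\mu[Pf]\ge \mathcal{E}_P(f,\log Pf)$, with $\log Pf$ rather than $\log f$, which is not directly the MLSI functional). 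Some additional input---for instance a comparison between $P$ and $H_1$, or a nonnegative-spectrum assumption---is what makes this go through, and deferring to the cited reference here, as you do, is exactly what the paper does as well. One small notational slip: by reversibility the density evolves as $f_{t+1}=Pf_t$ (the operator acting on functions), not $f_tP$.
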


\end{definition}
The next lemma, which shows that contraction of relative entropy under $P$ implies a modified log-Sobolev inequality, is standard. We include the proof for the reader's convenience. 
\begin{lemma}
\label{lem:entropy-contraction-implies-mlsi}
Let $\mu$ be a probability measure on the finite set $\Omega$. Let $P$ denote the transition matrix of an ergodic, reversible Markov chain on $\Omega$ with stationary distribution $\mu$. Suppose there exists some $\alpha \in (0,1]$ such that for all probability measures $\nu$ on $\Omega$ which are absolutely continuous with respect to $\mu$, we have
\[\DKL{\nu P \river \mu P} \leq (1-\alpha)\DKL{\nu \river \mu}.\]
Then, 
\[\rho_0(P) \geq 2\cdot \alpha.\]
\end{lemma}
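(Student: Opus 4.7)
The plan is to translate the hypothesis of entropy contraction, phrased in terms of $\DKL{\cdot \river \mu}$, into a lower bound on the Dirichlet form $\mathcal{E}_P(f, \log f)$ through the standard identification of relative entropy with $\text{Ent}_\mu[f]$ for $f = d\nu/d\mu$. The bridge between the two is a classical identity relating the one-step entropy decrease along $P$ to $\mathcal{E}_P(f, \log f)$, up to a nonnegative correction given by a relative entropy.

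Concretely, I would first set $f = d\nu/d\mu$, so $f \geq 0$, $\mathbb{E}_\mu[f] = 1$, and $\DKL{\nu \river \mu} = \text{Ent}_\mu[f]$. Reversibility of $P$ with respect to $\mu$ gives $(\nu P)(y) = \mu(y)(Pf)(y)$, and stationarity gives $\mu P = \mu$; combining these, $\DKL{\nu P \river \mu P} = \text{Ent}_\mu[Pf]$. The entropy contraction hypothesis then becomes
\[
\text{Ent}_\mu[f] - \text{Ent}_\mu[Pf] \geq \alpha\, \text{Ent}_\mu[f]
\]
for every $f \geq 0$ with $\mathbb{E}_\mu[f] = 1$; the case of general nonnegative $f$ follows by scaling, since both sides transform in the same way under $f \mapsto cf$.

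The core of the proof is then the identity
\[
\mathcal{E}_P(f, \log f) - \bigl(\text{Ent}_\mu[f] - \text{Ent}_\mu[Pf]\bigr) = \DKL{\nu P \river \nu} \geq 0.
\]
To derive it, I expand
\[
\mathcal{E}_P(f, \log f) = \mathbb{E}_\mu[f \log f] - \mathbb{E}_\mu[f \cdot P\log f], \qquad \text{Ent}_\mu[f] - \text{Ent}_\mu[Pf] = \mathbb{E}_\mu[f \log f] - \mathbb{E}_\mu[(Pf)\log(Pf)],
\]
use reversibility to rewrite $\mathbb{E}_\mu[f \cdot P \log f] = \mathbb{E}_\mu[(Pf) \cdot \log f]$, and observe that the difference telescopes to $\mathbb{E}_\mu[(Pf) \log((Pf)/f)]$. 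Since $(Pf)(y)/f(y) = (\nu P)(y)/\nu(y)$ under the identification $\nu = f\mu$, this last expression equals $\DKL{\nu P \river \nu}$, which is nonnegative by Gibbs' inequality.

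Chaining the two displays yields $\mathcal{E}_P(f, \log f) \geq \alpha\, \text{Ent}_\mu[f]$, from which the claimed lower bound on $\rho_0(P)$ follows immediately after dividing by the normalization in the definition of $\rho_0$. The one substantive step is the identity $\mathcal{E}_P(f, \log f) - (\text{Ent}_\mu[f] - \text{Ent}_\mu[Pf]) = \DKL{\nu P \river \nu}$; this encodes the well-known fact that a single step of $P$ decreases entropy by at most the continuous-time Dirichlet dissipation rate, with the gap being precisely the information lost by passing from $\nu$ to $\nu P$. Everything else in the argument is routine bookkeeping using reversibility and the definitions.
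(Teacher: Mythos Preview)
Your proof is correct and follows essentially the same route as the paper's. Both arguments identify $\DKL{\nu \river \mu} - \DKL{\nu P \river \mu}$ with $\mathcal{E}_P(f,\log f) - \DKL{\nu P \river \nu}$ (you phrase this via $\text{Ent}_\mu[f]-\text{Ent}_\mu[Pf]$, the paper via a direct expansion of the KL difference), drop the nonnegative term $\DKL{\nu P \river \nu}$, and combine with the contraction hypothesis to obtain $\mathcal{E}_P(f,\log f)\geq \alpha\,\text{Ent}_\mu[f]$.
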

\begin{proof} 

Let $f\colon \Omega \to \mathbb{R}_{\geq 0}$ with $\Ent_{\mu}[f] \neq 0$ and $\E_{\mu}{f} = 1$. Then, $\nu = f\mu$ is also a probability measure on $\Omega$ with
\[\Ent_{\mu}[f] = \E_{\mu}{f\log f} = \DKL{\nu \river \mu}.\]
Since $\nu$ is absolutely continuous with respect to $\mu$, we have by assumption that
\[\DKL{\nu P \river \mu} =\DKL{\nu P \river \mu P} \leq (1-\alpha)\DKL{\nu \river \mu} = \DKL{\nu \river \mu}-\alpha\Ent_{\mu}[f].\]
Therefore,
\begin{align*}
\alpha \Ent_{\mu}[f]
&\leq \DKL{\nu \river \mu} - \DKL{\nu P \river \mu}\\
&= \sum_{x \in \Omega}\nu(x)\log \parens*{\frac{\nu(x)}{\mu(x)}} - \sum_{x\in \Omega}(\nu P)(x)\log\parens*{\frac{(\nu P)(x)}{\mu(x)}}\\
&= \sum_{x\in \Omega}(\nu(I-P))(x)\log\left(\frac{\nu(x)}{\mu(x)}\right) - \DKL{\nu P \river \nu}\\
&\leq \sum_{x\in \Omega}(\nu (I-P))(x)\log f(x)\\
&= \dotprod{(I-P)f, \log f}_{\mu} = \mathcal{E}_{P}(f, \log f). \qedhere 
\end{align*}
\end{proof}

\subsection{Down-up random walks}
Let $0 \leq k \leq n$ and let $\mu \colon \binom{[n]}{k} \to \R_{\geq 0}$ be a non-negative function on the size-$k$ subsets of $[n]$. Note that $\mu$ is naturally associated to a probability distribution on $\binom{[n]}{k}$. We will find it useful to view the Ising model with $n$ spins as a distribution over the size-$n$ subsets of $[2n]$. For a set $\Omega = \{i_1,\dots, i_n\}$, we define the set $\bar{\Omega} = \{\bar{i_1},\dots, \bar{i_n}\}$, which is disjoint from $\Omega$, and each of whose elements is naturally paired with an element of $\Omega$. 

\begin{definition}\label{def:hom}
For $\sigma \in \set*{\pm 1}^{\Omega},$ let $\sigma^{\hom} \in \binom{\Omega \cup \bar{\Omega} }{n}$ be the set $\set*{i \in \Omega \given \sigma_i = 1} \cup\set*{\bar{i} \in \bar{\Omega} \given \sigma_i = -1}.$ For a distribution $\mu$ over $\set*{\pm 1}^{\Omega}$ with $\abs{\Omega} = n$, let the {homogenization} of $\mu$, denoted by $\mu^{\hom},$ be the distribution supported on $\set*{\sigma^{\hom} \given \sigma \in \set*{\pm 1}^{\Omega}} $ defined by $\mu^{\hom}(\sigma^{\hom}) \propto \mu(\sigma). $ 
\end{definition}

We will also find it useful to interpret the Glauber dynamics as the $n \leftrightarrow (n-1)$ down-up walk on $\binom{\Omega \cup \bar{\Omega} }{n}$. Recall that the down-up walk is given by the composition of two row-stochastic operators, known as the down and up operators.
\begin{definition}[Down operator]
	For a ground set $\Omega$, and  $|\Omega| \geq k\geq \l$,  define the down operator $D_{k\to \l}\in \R^{\binom{\Omega}{k}\times \binom{\Omega}{\l}}$ as
	\[ 
		D_{k\to \l}(S, T)=\begin{cases}
			\frac{1}{\binom{k}{\l}}&\text{ if }T\subseteq S,\\
			0&\text{ otherwise}.\\
		\end{cases}
	\]
\end{definition}
Note that $D_{k\to \l}D_{\l\to m}=D_{k\to m}$. 

\begin{definition}[Up operator]
	For a ground set $\Omega$, $|\Omega|\geq k\geq \l$, and density $\mu:\binom{\Omega}{k}\to \R_{\geq 0}$, define the up operator $U_{\l \to k}\in \R^{\binom{\Omega}{\l}\times \binom{\Omega}{k}}$ as
	\[ 
		U_{\l\to k}(T, S)=\begin{cases}
			\frac{\mu(S)}{\sum_{S'\supseteq T}\mu(S')}&\text{ if }T\subseteq S,\\
			0&\text{ otherwise}.\\
		\end{cases}
	\]
\end{definition}
If we define $\mu_k=\mu$ and more generally let $\mu_\l$ be $\mu_k D_{k\to \l}$, then the down and up operators satisfy
\[ \mu_k(S)D_{k\to \l}(S, T)=\mu_\l(T)U_{\l \to k}(T, S). \]
This property ensures that the composition of the down and up operators have the appropriate $\mu$ as a stationary distribution, are reversible, and have nonnegative real eigenvalues.
\begin{proposition}[{\cite[see, e.g.,][]{KO18,AL20,ALO20}}]
	The operators $D_{k\to \l}U_{\l\to k}$ and $U_{\l\to k}D_{k\to \l}$ both define Markov chains that are time-reversible and have nonnegative eigenvalues. Moreover $\mu_k$ and $\mu_\l$ are respectively their stationary distributions.
\end{proposition}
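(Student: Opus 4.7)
The plan is to derive everything from the single detailed balance identity
$$\mu_k(S)\,D_{k\to\l}(S,T) \;=\; \mu_\l(T)\,U_{\l\to k}(T,S),$$
already highlighted in the display preceding the proposition. This identity is immediate from unpacking the definitions: both sides vanish unless $T\subseteq S$, and when $T\subseteq S$ both equal $\mu_k(S)/\binom{k}{\l}$, using on the right the identity $\mu_\l(T) = \sum_{S'\supseteq T}\mu_k(S')/\binom{k}{\l}$, i.e., the defining relation $\mu_\l = \mu_k D_{k\to\l}$.

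Row-stochasticity of both composed operators is inherited from the row-stochasticity of $D_{k\to\l}$ and $U_{\l\to k}$, which is transparent from the definitions. Summing the detailed balance identity over $T$ yields $\mu_\l U_{\l\to k} = \mu_k$, and combined with $\mu_k D_{k\to\l} = \mu_\l$ this gives $\mu_k(D_{k\to\l}U_{\l\to k}) = \mu_k$ and $\mu_\l(U_{\l\to k}D_{k\to\l}) = \mu_\l$, establishing the claimed stationary distributions.

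For time-reversibility I would check detailed balance on the composed chains directly. With $P = D_{k\to\l}U_{\l\to k}$, substituting the identity gives
$$\mu_k(S)\,P(S,S') \;=\; \sum_T \mu_k(S)\,D_{k\to\l}(S,T)\,U_{\l\to k}(T,S') \;=\; \sum_T \mu_\l(T)\,U_{\l\to k}(T,S)\,U_{\l\to k}(T,S'),$$
which is manifestly symmetric in $S\leftrightarrow S'$; the analogous manipulation handles $U_{\l\to k}D_{k\to\l}$.

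For nonnegativity of the eigenvalues I would introduce the factored matrix
$$A(S,T) \;=\; \sqrt{\mu_k(S)/\mu_\l(T)}\;D_{k\to\l}(S,T),$$
restricted to the supports of the two measures (harmless, since $D_{k\to\l}(S,T) > 0$ together with $\mu_k(S) > 0$ forces $\mu_\l(T) > 0$). The detailed balance identity also gives $A(S,T) = \sqrt{\mu_\l(T)/\mu_k(S)}\,U_{\l\to k}(T,S)$, so a direct computation shows
$$A A^\intercal \;=\; \diag(\sqrt{\mu_k})\,(D_{k\to\l}U_{\l\to k})\,\diag(1/\sqrt{\mu_k}),$$
$$A^\intercal A \;=\; \diag(\sqrt{\mu_\l})\,(U_{\l\to k}D_{k\to\l})\,\diag(1/\sqrt{\mu_\l}).$$
Both walk operators are therefore similar to positive semidefinite matrices, so their spectra are real and nonnegative; as a byproduct, $AA^\intercal$ and $A^\intercal A$ share nonzero eigenvalues, recovering the familiar fact that the down-up and up-down walks have identical nonzero spectrum. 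No step presents a real obstacle — everything is bookkeeping once the detailed balance identity is written down.
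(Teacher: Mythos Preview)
Your proof is correct and is the standard argument; the paper itself does not prove this proposition but cites it from \cite{KO18,AL20,ALO20}, so there is nothing to compare against beyond noting that your derivation from the single detailed balance identity $\mu_k(S)D_{k\to\l}(S,T)=\mu_\l(T)U_{\l\to k}(T,S)$ is exactly the intended route in those references.
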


\begin{definition}[Down-up walk]
	For a ground set $\Omega$, $|\Omega|\geq k\geq \l$, and density $\mu:\binom{\Omega}{k}\to \R_{\geq 0}$, the $k\leftrightarrow \l$ down-up walk is defined by the row-stochastic matrix $U_{\l \to k}D_{k\to \l}$.
\end{definition}



\subsection{Polynomials}
\begin{definition}
The multivariate generating polynomial $g_{\mu} \in \R[z_1,\dots, z_n]$ associated to a density $\mu\colon \binom{[n]}{k} \to \R_{\geq 0}$ is given by
\[g_{\mu}(z_1,\dots, z_n) := \sum_{S}\mu(S)\prod_{i\in S}z_i = \sum_{S}\mu(S)z^{S},\]
\end{definition}
Here we have used the standard notation that for $S \subseteq [n]$, $z^S = \prod_{i\in S}z_i$.

In \cite{alimohammadi2021fractionally}, the notion of fractional log-concavity of the multivariate generating polynomial was developed. We will need a slight generalization of this notion.  

\begin{definition}[Non-uniform fractional log-concavity]
Consider a homogeneous distribution $\mu: \binom{[n]}{k} \to \R_{\geq 0}$ and let  $g_{\mu}(z_1, \dots, z_n)$ be its multivariate generating polynomial. For $\vec{\alpha} = (\alpha_1, \dots, \alpha_n) \in [0,1]^{n}$, we say that $\mu$ is $\vec{\alpha}$-fractionally log-concave (abbreviated as $\vec{a}$-FLC) if 
$\log g_{\mu}(z_1^{\alpha_1}, \dots, z_n^{\alpha_n})$ is concave, viewed as a function over $\R_{\geq 0}^{n}$. 
\end{definition}

\begin{remark}
If the distribution $\mu$ on $\binom{[n]}{k}$ is $\vec{\alpha}$-FLC, then the same is true for the conditional distributions $\mu_T$ for all $T \subseteq \binom{[n]}{\leq k}$. Here, $\mu_{T}$ is the distribution on $\binom{[n]\setminus T}{k-|T|}$ with $\mu_T(S)\propto \mu(T \cup S)$. This is because 
\[g_{\mu_T} \propto \lim_{\lambda \to \infty}\lambda^{-|T|}\cdot g_{\mu}(\{\lambda z_i\}_{i\in T}, \{z_i\}_{i\notin T})\]
and operations of scaling the variables or the polynomial and taking limits preserve $\vec{\alpha}$-FLC.  
\end{remark}

We will make use of the following characterization of log-concavity for homogeneous functions. Recall that a function $f:\R_{\geq 0}^n\to\R_{\geq 0}$ is said to be $d$-homogeneous if $f(cx)=c^df(x)$ for all $c>0$. 

\begin{lemma}[Folklore]
\label{lem:lc-equiv}
Let $\mathcal{C} \subseteq \R_{\geq 0}^{n}$ denote a convex cone.  
		For a $d$-homogeneous function $f:\mathcal{C}\to\R_{\geq 0}$ the following are all equivalent:
		\begin{enumerate}
			\item $f$ is quasi-concave.
			\item $f$ is log-concave.
			\item $f$ is $d$-th-root-concave, i.e., $f^{1/d}$ is concave.
		\end{enumerate}  
\end{lemma}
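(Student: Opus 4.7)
The plan is to prove the cycle of implications $(3) \Rightarrow (2) \Rightarrow (1) \Rightarrow (3)$. The first two implications are general facts about nonnegative functions on convex domains and do not use homogeneity at all; only the last implication $(1) \Rightarrow (3)$ exploits $d$-homogeneity, and this is where the real content of the lemma lies.

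For $(3) \Rightarrow (2)$: if $f^{1/d}$ is concave and nonnegative, then $\log f^{1/d}$ is concave (a nonnegative concave function is automatically log-concave, since $\log$ is concave and nondecreasing on $\R_{>0}$), so $\log f = d \log f^{1/d}$ is concave. For $(2) \Rightarrow (1)$: concavity of $\log f$ means all super-level sets $\{\log f \geq c\}$ are convex; these coincide with $\{f \geq e^c\}$ for $c > 0$, and for $c \leq 0$ the set $\{f \geq c\}$ is the whole domain, which gives quasi-concavity of $f$.

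For the main direction $(1) \Rightarrow (3)$, set $g := f^{1/d}$. Then $g$ is $1$-homogeneous, nonnegative, and quasi-concave (the latter because $t \mapsto t^{1/d}$ is monotone on $\R_{\geq 0}$), and the goal is to show $g$ is concave. The key reduction is that it suffices to prove the superadditivity bound $g(x+y) \geq g(x) + g(y)$ for all $x, y \in \R_{\geq 0}^n$: given this, $1$-homogeneity immediately yields, for any $t \in [0,1]$, $g(tx + (1-t)y) \geq g(tx) + g((1-t)y) = t g(x) + (1-t) g(y)$, which is concavity.

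To prove superadditivity in the generic case $g(x), g(y) > 0$, normalize by $x' := x/g(x)$ and $y' := y/g(y)$, so that $g(x') = g(y') = 1$ by $1$-homogeneity. The point $(x + y)/(g(x) + g(y))$ is then the convex combination $\frac{g(x)}{g(x)+g(y)} x' + \frac{g(y)}{g(x)+g(y)} y'$, so by quasi-concavity its $g$-value is at least $\min(g(x'), g(y')) = 1$; rescaling by $g(x) + g(y)$ and using $1$-homogeneity gives $g(x+y) \geq g(x) + g(y)$. The main technical nuisance — and where I expect the real subtlety to live — is the degenerate case where $g(x)$ or $g(y)$ vanishes, which the normalization above does not handle directly. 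For this the plan is to invoke continuity (or at least upper semicontinuity) of $g$ and extend the inequality to the boundary by approximating $x, y$ with nearby points at which $g$ is strictly positive; for the polynomial generating functions this regularity is automatic, but in full generality some such mild hypothesis on $f$ is implicit.
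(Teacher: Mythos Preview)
Your proposal is correct and follows essentially the same approach as the paper: the key step $(1)\Rightarrow(3)$ is handled in both by normalizing $x$ and $y$ so that $f^{1/d}$ takes value $1$ at each, then applying quasi-concavity to the resulting convex combination. The paper carries this out directly for arbitrary $\lambda\in[0,1]$ rather than factoring through superadditivity, and, like you, it tacitly assumes $f(x),f(y)>0$ without explicitly addressing the degenerate case.
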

\begin{proof}
    The implications ($3$)$\implies$($2$)$\implies$($1$) are immediate (and hold even without homogeneity). For the implication ($1$)$\implies$($3$), for any $x,y \in \mathcal{C}$ and $\lambda \in [0,1]$, we have by quasi-concavity and homogeneity that
    \begin{align*}
        &\frac{f(\lambda x + (1-\lambda)y)}{(\lambda f(x)^{1/d} + (1-\lambda)f(y)^{1/d})^d}  \\
        &= f\parens*{\frac{\lambda f(x)^{1/d}}{\lambda f(x)^{1/d} + (1-\lambda)f(y)^{1/d}}\cdot \frac{x}{f(x)^{1/d}} + \frac{(1-\lambda) f(y)^{1/d}}{\lambda f(x)^{1/d} + (1-\lambda)f(y)^{1/d}}\cdot \frac{y}{f(y)^{1/d}}}  \\
        &\geq \min\set*{f\parens*{\frac{x}{f(x)^{1/d}}}, f\parens*{\frac{y}{f(y)^{1/d}}}} = 1. 
    \end{align*}
    Rearranging finishes the proof. 
\end{proof}

We will also need the following characterization of the solution of the minimum relative entropy problem with prescribed marginals which can be obtained by writing down the dual program and verifying Slater's condition.

\begin{lemma}[{\cite[see, e.g.,][Appendix B]{singh2014entropy}}]
\label{lem:KL-dual}
Consider a homogeneous distribution $\mu: \binom{[n]}{k} \to \R_{\geq 0}$ and let  $g_{\mu}(z_1, \dots, z_n)$ be its multivariate generating polynomial. Then, for any $q \in \R^{n}_{\geq 0}$ with $\sum_{i=q}^{n}q_i = 1$, we have
\[ \inf\set*{\DKL{\nu \river \mu} \given \nu D_{k\to 1}=q } = -\log\parens*{\inf_{z_1,\dots, z_n>0} \frac{g_\mu(z_1,\dots,z_n)}{z_1^{kq_1}\cdots z_n^{kq_n}} }. \]
\end{lemma}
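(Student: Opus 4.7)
The plan is to set up the LHS as the value of a convex minimization problem and identify the RHS as the value of its Lagrangian dual. The primal is: minimize $\DKL{\nu \river \mu}$ over $\nu \in \R^{\binom{[n]}{k}}_{\geq 0}$ subject to the normalization $\sum_S \nu(S) = 1$ and the marginal constraints, which (by definition of $D_{k\to 1}$) read $\sum_{S \ni i} \nu(S) = k q_i$ for each $i \in [n]$. Relative entropy is strictly convex in $\nu$ on the relative interior of the feasible region, so this is a convex program whose optimum, if strictly feasible, is unique and characterized by KKT.

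Introducing a Lagrange multiplier $\gamma$ for normalization and $\lambda_1,\dots,\lambda_n$ for the marginals, the first-order condition in $\nu(S)$ is $\log(\nu(S)/\mu(S)) + 1 - \gamma - \sum_{i\in S}\lambda_i = 0$. Setting $z_i := e^{\lambda_i} > 0$ and using the normalization to solve for $\gamma$ gives the exponential-family form
\[ \nu^\star_z(S) \;=\; \frac{\mu(S)\, z^S}{g_\mu(z)}. \]
Plugging this back and imposing the marginal constraint $\sum_{S \ni i} \nu^\star_z(S) = kq_i$ yields
\[ \DKL{\nu^\star_z \river \mu} \;=\; \sum_S \nu^\star_z(S)\bigl(\log z^S - \log g_\mu(z)\bigr) \;=\; \sum_i k q_i \log z_i \;-\; \log g_\mu(z) \;=\; -\log\frac{g_\mu(z)}{z_1^{kq_1}\cdots z_n^{kq_n}}, \]
which is exactly the RHS in the lemma (once we optimize over $z$).

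The $\geq$ direction is a clean Gibbs-type computation that avoids any appeal to duality theorems. For any $z > 0$ and any feasible $\nu$, the identity
\[ \DKL{\nu \river \mu} - \DKL{\nu \river \nu^\star_z} \;=\; \sum_S \nu(S)\log\frac{\nu^\star_z(S)}{\mu(S)} \;=\; k\sum_i q_i \log z_i - \log g_\mu(z) \]
holds purely from the marginal constraint on $\nu$, and non-negativity of $\DKL{\nu \river \nu^\star_z}$ gives $\DKL{\nu \river \mu} \geq -\log\bigl(g_\mu(z)/\prod_i z_i^{kq_i}\bigr)$; taking infimum over $\nu$ on the left and over $z$ on the right yields $\text{LHS} \geq \text{RHS}$.

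For the matching $\leq$ direction, the standard Slater's condition argument applies: as long as $q$ lies in the relative interior of the marginal polytope of $\mu$, the primal is strictly feasible, so strong duality holds and the infimum on the RHS is attained at some $z^\star$ for which the induced $\nu^\star_{z^\star}$ is primal-feasible and attains the dual value. I expect the main technical annoyance to be the boundary case where $q$ is not in the relative interior (e.g.\ some $q_i = 0$, or the marginal constraints force $\nu$ to be supported on a proper face), where the optimal $z$ is attained only in a limit $z_i \to 0$ or $z_i \to \infty$. This is handled by restricting to the coordinates where $q_i > 0$, applying the interior case there, and checking that the corresponding suprema/infima pass to the limit, as is done in \cite{singh2014entropy}.
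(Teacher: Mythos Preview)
Your proposal is correct and matches the paper's approach: the paper does not actually prove this lemma but simply cites \cite{singh2014entropy} and notes that it ``can be obtained by writing down the dual program and verifying Slater's condition,'' which is precisely the route you take. Your write-up in fact supplies more detail than the paper does, including the clean Gibbs/Donsker--Varadhan argument for the $\geq$ direction that sidesteps any appeal to abstract duality theorems.
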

\subsection{Correlation, influence, and Dobrushin matrices}
The following matrices are intimately related to the Hessian (at the all-ones vector) of the log-generating polynomial of a distribution on subsets of $[n]$.
\begin{definition}[Signed pairwise influence/correlation matrix] \label{def:corr}
	Let $\mu$ be a probability distribution over $2^{[n]} $ with generating polynomial $g_{\mu} (z_1, \cdots, z_n) = \sum _{S \in  2^{[n]}}\mu(S) z^S.$
	
	Let the \textit{signed pairwise influence matrix} $\inflMat_{\mu} \in \R^{n\times n}$  be defined by
\[
\inflMat_{\mu} (i,j)  = \begin{cases}  0 &\text{ if } j=i \\ \P{j\given i} - \P{j \given \bar{i}} &\text{ else} \end{cases} \]
where $\P{j\given i} = \P_{T \sim \mu}{ j \in T \given i \in T}, \P{j} = \P_{T \sim \mu}{j \in T} $ and $ \P{j \given \bar{i}} = \P_{T \sim \mu}{ j \in T \given i \not\in T}.$

Let the \textit{correlation matrix} $\corMat_{\mu} \in \R^{n\times n}$  be defined by
\[\corMat_{\mu} (i,j) = \begin{cases} 1 -\P{i} &\text{ if } j=i \\ \P{j \given i} - \P{j} &\text{ else}\end{cases} \]
\end{definition}

In our proof of the (non-uniform) fractional log-concavity of the rank-1 Ising model, we will also make use of the well-known Dobrushin matrix. 
\begin{definition}[Dobrushin matrix]
\label{def:dobrushin}
Let $\Omega$ be a finite set and let $\mu$ denote a probability measure on $\Omega^{[n]}$. The Dobrushin matrix $R \in \R^{n\times n}$ is defined by $R_{i,i} = 0$ and for $i\neq j$, 
\[
R_{ij} = 
\max\set*{d_\TV\parens*{\mu(X_j = \cdot \mid X_{-\{i,j\}} = \sigma, X_{i} = \omega_1), \mu(X_j = \cdot \mid X_{-\{i,j\}} = \sigma, X_{i} = \omega_2)} \given \begin{aligned}&\omega_1, \omega_2 \in \Omega,\\ &\sigma \in \Omega^{[n]\setminus \{i,j\}}\end{aligned}}.
\]
\end{definition}

\subsection{Needle decomposition of Ising measures} \label{sec: reduce to rank 1}
The following result, due to Eldan, Koehler, and Zeitouni \cite{eldan2020spectral}, shows that it suffices to prove part (a) of \cref{thm:main} only for rank-1 interaction matrices $J$.  
\begin{theorem}[Needle decomposition of Ising measures \cite{eldan2020spectral}]
\label{thm:needle-decomposition}
Suppose the measure $\mu$ on $\{\pm 1\}^n$ is given by
\[ \mu(x) = \frac{1}{Z} \exp\parens*{\frac{1}{2} \dotprod{x, J x} + \dotprod{h, x}}, \]
i.e.~$\mu$ is an Ising model, and suppose (without loss of generality) that $0 \preceq J \preceq \norm{J}_\OP I$. Suppose also that $f : \{\pm 1\}^n \to \mathbb{R}$ is arbitrary. There exists a mixture decomposition
\[ \mu(x) = \int \mu_{u,v}(x) d\nu(u,v) \]
where $\nu$ is a probability measure on $\mathbb{R}^{2n}$ such that:
\begin{enumerate}
    \item $\nu$-almost surely, $\mu_{u,v}$ is a probabililty measure of the form
    \[ \mu_{u,v}(x) = \frac{1}{Z_{u,v}} \exp\parens*{\frac{1}{2} \dotprod{u, x}^2 + \dotprod{v, x}}, \]
    i.e., a rank one Ising model (``needle''). Furthermore $\norm{u}_2 \le \norm{J}_\OP$.
    \item For all $x \sim y$, we have the following inequality of conductances
    \[ \E*_{\nu}{\frac{\mu_{u,v}(x)\mu_{u,v}(y)}{\mu_{u,v}(x) + \mu_{u,v}(y)}} \le \frac{\mu(x)\mu(y)}{\mu(x) + \mu(y)}. \]
    \item $\E_{\mu_{u,v}}{f(X)} = \E_{\mu}{f(X)}$ $\nu$-almost surely.
\end{enumerate}
\end{theorem}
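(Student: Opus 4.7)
The plan is to combine a Gaussian integral representation, which reduces $J$ to rank one, with a Jensen argument for the conductance inequality and a refinement step to preserve $\mathbb{E}_\mu[f]$ almost surely. The first two ingredients are essentially computational; the third is where the main work lies.

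Since $0 \preceq J \preceq \|J\|_{\operatorname{OP}} I$, I would spectrally decompose $J = \lambda_1 v_1 v_1^{\top} + J'$ with $v_1$ a unit top eigenvector, $\lambda_1 = \|J\|_{\operatorname{OP}}$, and $J' := J - \lambda_1 v_1 v_1^{\top} \succeq 0$. The Gaussian identity $\exp(\tfrac{1}{2}\langle x, J' x\rangle) = \mathbb{E}_{g \sim N(0, J')}[\exp(\langle g, x\rangle)]$ then rewrites
\[ \mu(x) \;=\; \tfrac{1}{Z}\, \mathbb{E}_g\!\left[\exp\!\left(\tfrac{\lambda_1}{2}\langle v_1, x\rangle^2 + \langle h + g, x\rangle\right)\right]. \]
Setting $u = \sqrt{\lambda_1}\, v_1$, $v = h + g$, and letting $Z_{u,v}$ denote the needle's partition function, this rearranges as $\mu(x) = \int \mu_{u,v}(x)\, d\nu_0(u,v)$ with $d\nu_0 := (Z_{u,v}/Z)\, d\mathbb{P}(g)$; summing over $x$ shows $\nu_0$ is a probability measure. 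Property (1) is then immediate, since $\|u\|_2^2 = \lambda_1 = \|J\|_{\operatorname{OP}}$ and the needle has the required form by construction.

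For property (2), I would use that $\phi(a, b) := ab/(a+b)$ is concave on $(0, \infty)^2$: its Hessian equals $-\tfrac{2}{(a+b)^3}(b,-a)^{\top}(b,-a) \preceq 0$. Applied via Jensen to the mixture representation of both $\mu(x)$ and $\mu(y)$, this yields
\[ \int \frac{\mu_{u,v}(x)\mu_{u,v}(y)}{\mu_{u,v}(x) + \mu_{u,v}(y)}\, d\nu_0 \;\le\; \frac{\mu(x)\mu(y)}{\mu(x) + \mu(y)}, \]
which is exactly property (2). Since the argument uses only that $\mu$ is a $\nu_0$-mixture of the $\mu_{u,v}$'s, it will automatically survive any further refinement of $\nu_0$ consistent with the mixture decomposition.

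The hard part is property (3): the above decomposition only satisfies $\int \mathbb{E}_{\mu_{u,v}}[f]\, d\nu_0 = \mathbb{E}_\mu[f]$ in expectation, not almost surely. To enforce the almost-sure statement I would refine $\nu_0$ via an iterative bisection on $v$-space in the spirit of Klartag's localization (or, equivalently, an Eldan-type stochastic localization with an $f$-preserving drift that pins the martingale $M_t := \mathbb{E}_{\mu_t}[f]$ at its initial value): at each step, cut the current measure on $v$-space by a hyperplane preserving the weighted average of the continuous function $v \mapsto \mathbb{E}_{\mu_{u,v}}[f]$, iterating until the support concentrates on the level set $\{v : \mathbb{E}_{\mu_{u,v}}[f] = \mathbb{E}_\mu[f]\}$. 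The obstacle is proving this refinement converges and yields a bona fide probability decomposition; once that is done, properties (1) and (2) are inherited automatically, since $u$ is untouched by the refinement and Jensen applies to any mixture.
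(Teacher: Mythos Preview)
The paper does not prove this theorem; it is quoted from \cite{eldan2020spectral} and used as a black box. I therefore comment on your proposal directly.

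Your Hubbard--Stratonovich step and the Jensen argument for the conductance inequality are both fine. The real gap is in property~(3). Having fixed $u=\sqrt{\lambda_1}\,v_1$ once and for all, you propose to refine the mixing measure $\nu_0$ on $v$-space so that it concentrates on the level set $\{v:\mathbb{E}_{\mu_{u,v}}[f]=\mathbb{E}_\mu[f]\}$ while still representing $\mu$. But Lov\'asz--Simonovits/Klartag bisection only decomposes $\nu_0$ into one-dimensional needle measures $\ell$ with the correct \emph{average} of $\mathbb{E}_{\mu_{u,v}}[f]$; the resulting component $\int \mu_{u,v}\,d\ell(v)$ is a mixture along a segment, not a single rank-one Ising model. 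If instead you run your parenthetical $f$-pinned stochastic localization on $\nu_0$, the pinning freezes one direction of variance, so the limit is (at best) a one-dimensional Gaussian along some direction $w$; integrating then yields a measure proportional to $\exp\bigl(\tfrac12\langle u,x\rangle^2+\tfrac{\sigma^2}{2}\langle w,x\rangle^2+\langle v_0,x\rangle\bigr)$, which is rank \emph{two} unless $w\parallel u$. The root problem is that you committed to $u$ before localizing.

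The construction in \cite{eldan2020spectral} avoids this by fully Gaussianizing $J$ (not just $J'$), writing $\mu(x)\propto\int e^{\langle v,x\rangle}\,d\gamma(v)$ with $\gamma=N(h,J)$, and running the $f$-preserving localization on $\gamma$ itself. The identity you are missing is that a one-dimensional Gaussian needle in $v$-space, along $v_0+\mathbb{R}w$ with variance $\sigma^2$, mixes the product measures $e^{\langle v,x\rangle}$ into exactly $\exp\bigl(\tfrac{\sigma^2}{2}\langle w,x\rangle^2+\langle v_0,x\rangle\bigr)$, a genuine rank-one Ising model with $u=\sigma w$. Thus the one-dimensional pieces of the localization \emph{are} the rank-one Ising components, $\|u\|_2^2=\sigma^2\le\|J\|_{\operatorname{OP}}$ follows from the covariance of $\gamma$, and property~(3) is automatic from the $f$-preserving constraint. (This also indicates the bound should read $\|u\|_2^2\le\|J\|_{\operatorname{OP}}$, consistent with your computation and with how the result is used downstream.)
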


\begin{corollary}
\label{cor:needle}
Let $n\geq 1$ and suppose that the Glauber dynamics of all rank one Ising models $\mu_{u,v}$ on $\{\pm 1\}^{n}$ with $\norm{u}_2 \le \norm{J}_\OP$ have modified log-Sobolev constant at least $C$. Then, the Glauber dynamics of the Ising model $\mu$ with interaction matrix $J$ and arbitrary external field $h \in \R^{n}$ also has modified log-Sobolev constant at least $C$. 
\end{corollary}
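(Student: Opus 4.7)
The plan is to observe that, under the needle decomposition of $\mu$, both the entropy $\text{Ent}_\mu[f]$ and the Glauber-dynamics Dirichlet form $\mathcal{E}_{P_\mu}(f,\log f)$ decompose favorably against $\nu$, so that a pointwise MLSI for each rank-one needle integrates to an MLSI for $\mu$ at the same rate $C$.

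First I would fix a test function $f\colon\set*{\pm 1}^n\to\R_{\geq 0}$ with $\text{Ent}_\mu[f]\neq 0$ and, by rescaling, with $\E_{\mu}{f}=1$, and apply \cref{thm:needle-decomposition} to this particular $f$. Conclusion (3) then gives $\E_{\mu_{u,v}}{f}=1$ $\nu$-a.s., so the entropy linearizes:
\[
\text{Ent}_\mu[f] = \E_{\mu}{f\log f} = \int \E_{\mu_{u,v}}{f\log f}\,d\nu(u,v) = \int \text{Ent}_{\mu_{u,v}}[f]\,d\nu(u,v).
\]

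Next I would express $\mathcal{E}_{P_\mu}(f,\log f)$ in symmetric edge form. The Glauber heat-bath transition rule yields $\mu(x)P_\mu(x,y) = \frac{1}{n}\cdot\frac{\mu(x)\mu(y)}{\mu(x)+\mu(y)}$ for every pair $x\sim y$ of hypercube neighbors, so
\[
\mathcal{E}_{P_\mu}(f,\log f) = \frac{1}{2n}\sum_{x\sim y}\frac{\mu(x)\mu(y)}{\mu(x)+\mu(y)}\,(f(x)-f(y))(\log f(x)-\log f(y)),
\]
and each summand is nonnegative since $\log$ is increasing. Conclusion (2) of \cref{thm:needle-decomposition} is precisely a pointwise upper bound on the conductance $\mu(x)\mu(y)/(\mu(x)+\mu(y))$ by the $\nu$-average of the rank-one conductances, so integrating against $\nu$ yields
\[
\mathcal{E}_{P_\mu}(f,\log f)\geq\int \mathcal{E}_{P_{\mu_{u,v}}}(f,\log f)\,d\nu(u,v).
\]

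Combining this bound with the pointwise MLSI assumption $\mathcal{E}_{P_{\mu_{u,v}}}(f,\log f)\geq 2C\cdot\text{Ent}_{\mu_{u,v}}[f]$ and the entropy identity gives $\mathcal{E}_{P_\mu}(f,\log f)\geq 2C\cdot\text{Ent}_\mu[f]$, i.e.\ $\rho_0(P_\mu)\geq C$. I expect the only subtle step to be the identification of the Glauber-dynamics Dirichlet form with the edge-conductance expression that feeds into condition (2); this identification crucially requires the nonnegativity of $(f(x)-f(y))(\log f(x)-\log f(y))$, which is what allows a pointwise \emph{upper} bound on conductances to translate into a \emph{lower} bound on $\mathcal{E}_{P_\mu}$. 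Everything else is bookkeeping.
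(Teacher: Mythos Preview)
Your proposal is correct and follows essentially the same route as the paper: apply the needle decomposition to the chosen test function, linearize the entropy via conclusion (3), compare Dirichlet forms edge-by-edge via the conductance inequality (2) using nonnegativity of $(f(x)-f(y))(\log f(x)-\log f(y))$, and integrate the per-needle MLSI against $\nu$. The paper's proof is organized in the same order and hinges on the same two observations you single out.
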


\begin{proof}
As is well-known (cf.~Lemma 13.11 of \cite{levin2017markov}), the Dirichlet form of the Glauber dynamics on $\mu$ has a convenient expression in terms of the edge conductances
\[ \mathcal{E}_{\mu}(f,g) = \sum_{x \sim y} \frac{\mu(x)\mu(y)}{\mu(x) + \mu(y)} (f(x) - f(y))(g(x) - g(y)),\]
where the sum ranges over all adjacent vertices of the hypercube $\{\pm 1\}^n$.
Given this and \cref{thm:needle-decomposition}, we see that if the class of needle measures with $\norm{u}_2 \le \norm{J}_\OP$ has modified log-Sobolev constant at least $C$, then for any $f: \set{\pm 1}^{n} \to \R_{> 0}$ with $\E_{\mu}{f}=1$,
\begin{align*} 
\Ent_{\mu}[f] = \E_{\mu}{f \log f}
&= \E_{\nu}{\E_{\mu_{u,v}}{f \log f}} \\
&= \E_{\nu}{\Ent_{\mu_{u,v}}[f]} \\
&\le \frac{1}{2C}\E_{\nu}{\mathcal{E}_{\mu_{u,v}}(\log f, f)} \\
&= \frac{1}{2C} \sum_{x \sim y} \E*_{\nu}{\frac{\mu_{u,v}(x)\mu_{u,v}(y)}{\mu_{u,v}(x) + \mu_{u,v}(y)}} (\log f(x) - \log f(y))(f(x) - f(y)) \\
&\le \frac{1}{2C} \sum_{x \sim y} \frac{\mu(x)\mu(y)}{\mu(x) + \mu(y)} (\log f(x) - \log f(y))(f(x) - f(y)) \\
&= \frac{1}{2C} \mathcal{E}_{\mu}(\log f, f)
\end{align*}
where in the second line, we used $\E_{\mu_{u,v}}{f} = \E_{\mu}{f} = 1$,  
in the third line we used the assumed MLSI for needle measures, and in the fifth line we used the comparison inequality for conductances.
\end{proof}
	\section{Fractional log-concavity and entropic independence}\label{sec:entropic-independence}

We now prove our main result, \cref{thm:entropic-independence}. Given this result, \cref{thm:alpha frac LC MLSI} follows from a version of the local-to-global argument, as discussed earlier; the proof is given in the Appendix.
	\begin{proof}[Proof of \cref{thm:entropic-independence}]
	Let $\alpha \in (0,1]$ and let
	\[\dagger: \forall (z_1,\dots,z_n) \in \R^{n}_{\geq 0}, \text{ }g_{\mu}(z_1,\dots, z_n) \leq \left(\sum_{i=1}^{n}p_i z_i^{1/\alpha}\right)^{\alpha k} \]
	be the condition appearing in the statement of the theorem. 
	
	We first show that $\alpha$-FLC implies ($\dagger$). Let $\mu$ be an arbitrary $\alpha$-FLC distribution on $\binom{[n]}{k}$ and let $p:=\mu D_{k\to 1} \in \R^n_{\ge 0}$. Note that
		\[ \partial_i g_\mu (1,\dots,1)=\frac{\sum_{S\ni i}\mu(S)}{\sum_{S}\mu(S)}=kp_i. \]
		Since $g_{\mu}(z_1^\alpha,\dots,z_n^\alpha)$ is $\alpha k$-homogeneous and log-concave as a function of $z_1, \dots, z_n$ over $\R_{\geq 0}^n$, it immediately follows that  $f(z_1,\dots,z_n):=g_\mu(z_1^\alpha,\dots,z_n^\alpha)^{1/\alpha k}$ is $1$-homogeneous and quasi-concave and hence, concave (\cref{lem:lc-equiv}). In particular,
		\[ \forall z_1,\dots,z_n>0: f(z_1,\dots,z_n)\leq f(1,\dots,1)+\sum_{i=1}^{n}\partial_i f(1,\dots,1)(z_i-1). \]
		Since $1$-homogeneity of $f$ implies that $\sum_{i=1}^{n}\partial_i f(1,\dots,1)=f(1,\dots,1)$, we see that
		\[\forall z_1,\dots,z_n>0: f(z_1,\dots,z_n)\leq \sum_{i=1}^{n} \partial_i f(1,\dots,1)z_i. \]
		Moreover, since
		\[ \partial_i f(1,\dots,1)= \parens*{\alpha\cdot \partial_i g_\mu(1,\dots,1)}\cdot\parens*{\frac{1}{\alpha k}\cdot g_\mu(1,\dots,1)^{1/\alpha k-1}}=p_i, \]
		we get that
		\[\forall z_1,\dots,z_n>0: f(z_1,\dots,z_n)\leq \sum_{i} p_iz_i. \]
		Rewriting this in terms of $g_\mu$ yields
		\[\forall z_1,\dots, z_n > 0: g_\mu(z_1,\dots, z_n)\leq \parens*{\sum_{i}p_i z_i^{1/\alpha}}^{\alpha k}, \]
	and now, ($\dagger$) follows by continuity. 
	
	Next, we show that ($\dagger$) implies $(1/\alpha)$-entropic independence. Let $\nu$ be an arbitrary distribution on $\binom{[n]}{k}$ and let $q:= \nu D_{k \to 1}$, so that $q \in \R^{n}_{\geq 0}$ with $\sum_{i=1}^{n}q_i = 1$. We have from \cref{lem:KL-dual} that
		\[ \DKL{\nu \river \mu} \geq \inf\set*{\DKL{\nu \river \mu} \given \nu D_{k\to 1}=q } = -\log\parens*{\inf_{z_1,\dots, z_n>0} \frac{g_\mu(z_1,\dots,z_n)}{z_1^{kq_1}\cdots z_n^{kq_n}} }. \]
        By ($\dagger$), 	
		\[ \inf_{z_1,\dots, z_n>0} \frac{g_\mu(z_1,\dots,z_n)}{z_1^{kq_1}\cdots z_n^{kq_n} }\leq \inf_{z_1,\dots,z_n>0}\frac{\parens*{\sum_i p_i z_i^{1/\alpha}}^{\alpha k} }{z_1^{kq_1}\cdots z_n^{kq_n}}. \]
		Plugging in $z_i=(q_i/p_i)^\alpha$, we obtain
		\[ \inf_{z_1,\dots, z_n>0} \frac{g_\mu(z_1,\dots,z_n)}{z_1^{kq_1}\cdots z_n^{kq_n} }\leq \prod_{i=1}^n (p_i/q_i)^{\alpha k q_i}. \]
		Taking $\log$ and negating gives 
		\[ \DKL{\nu \river \mu}\geq -\log \prod_{i=1}^n (p_i/q_i)^{\alpha k q_i}=\alpha k \sum_i q_i \log(q_i/p_i)=\alpha k\cdot  \DKL{\nu D_{k\to 1}\river \mu D_{k\to 1}}. \]
		Since $\nu$ is arbitrary, we obtain $(1/\alpha)$-entropic independence. 
		
		Now, we show that $(1/\alpha)$-entropic independence implies ($\dagger$). By induction on the lower-dimensional faces of the positive orthant and homogeneity, it suffices to show that
		\[\l(z_1,\dots, z_n) := \left(\sum_{i}p_i z_i^{1/\alpha}\right)^{\alpha k}\geq 1 \quad \forall z = (z_1,\dots,z_n) \in C,\]
		where $C = \{(z_1,\dots,z_n): g_{\mu}(z_1,\dots,z_n) = 1 \wedge  \nabla (\l/g_{\mu})(z_1,\dots,z_n) = 0\}$. Let $z^* = (z_1^*,\dots, z_n^*) \in C$. Then, $\nu = z^* \ast \mu$ is a distribution on $\binom{[n]}{k}$ with $q = (q_1,\dots, q_n) := \nu D_{k \to 1} \propto (p_1 (z_1^*)^{1/\alpha},\dots, p_n (z_n^*)^{1/\alpha})$. Now, examining the first order condition shows that
		\[\DKL{\nu \river \mu} = -\log\left( \inf_{z_1,\dots, z_n>0} \frac{g_\mu(z_1,\dots,z_n)}{z_1^{kq_1}\cdots z_n^{kq_n} }\right) = -\log\left(\frac{1}{(z_1^*)^{kq_1}\dots (z_n^*)^{kq_n}}\right),\]
	    so that by $(1/\alpha)$-entropic independence, 
	    \[\frac{1}{(z_1^*)^{kq_1}\dots (z_n^*)^{kq_n}} \leq \prod_{i=1}^n (p_i/q_i)^{\alpha k q_i} = \frac{\l(z_1^*,\dots, z_n^*)}{(z_1^*)^{kq_1}\dots (z_n^*)^{kq_n}},\]
	    from which we get that $\l(z_1^*,\dots, z_n^*) \geq 1$, as desired. 
	    
	    Finally, we establish the equivalence between entropic independence under arbitrary external fields and fractional log-concavity. In one direction, we note that $\alpha$-fractional log-concavity of $\mu$ immediately implies  $\alpha$-fractional log-concavity of $\lambda \ast \mu$ for any $\lambda = (\lambda_1,\dots, \lambda_n) \in \R^{n}_{>0}$, which as we have just seen, implies $(1/\alpha)$-entropic independence of $\lambda \ast \mu$. In the other direction, suppose that $\lambda \ast \mu$ is $(1/\alpha)$-entropic independent for all $\lambda = (\lambda_1,\dots, \lambda_n) \in \R^{n}_{>0}$. Then, using ($\dagger$) for all $\lambda \ast \mu$, we see that
	    \[\forall z_1,\dots, z_n > 0: g_{\mu}(z_1^\alpha, \dots, z_n^\alpha)^{1/\alpha k} = \inf_{\lambda \in \R^{n}_{>0}}\sum_{i}p(\lambda)_i z_i.\]
	    Since a pointwise infimum of linear functions is concave, it follows that $g_{\mu}(z_1^{\alpha},\dots, z_n^{\alpha})$ is $\alpha k$-root-concave, and hence, log-concave (\cref{lem:lc-equiv}). This completes the proof. 
	\end{proof}
	

	In the course of our proof of \cref{thm:main}, we will establish that rank-1 Ising measures $\mu_{u,h}$ with $\|u\|_{2} < 1$ are non-uniformly fractionally log-concave. The following proposition shows that non-uniform fractional log-concavity leads to a corresponding non-uniform version of entropic independence.
	\begin{proposition} \label{thm:nonuniform FLC to entropy contraction}
	    Let $\mu$ be a distribution on $\set{\pm 1}^{[n]}$ and suppose that there exist $\alpha_1,\dots, \alpha_n \in (0,1]$ for which the distribution $\mu^{\hom}$ on $\binom{[n] \cup [\bar{n}]}{n}$ is $(\alpha_1, \dots, \alpha_n,\alpha_1, \dots, \alpha_n)$-FLC. 
	    
	    Let $\nu$ be a distribution on $\{\pm 1\}^{n}$ and for $i \in [n]$, let $\mu_i$ (respectively $\nu_i$) denote the marginal distribution of the $i^{th}$ coordinate under $\mu$ (respectively $\nu$). Then,
	    \[\sum_{i=1}^{n}\alpha_{i}\DKL{\nu_i \river \mu_i} \leq \DKL{\nu \river \mu}.\]
	\end{proposition}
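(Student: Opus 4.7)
The plan is to adapt the proof of Theorem~\ref{thm:entropic-independence} to the non-uniform setting, applied to the homogenization $\mu^{\hom}$ on $\binom{[n]\cup[\bar n]}{n}$. Let $p_i$ and $q_i$ denote the marginal probabilities that $\sigma_i = +1$ under $\mu$ and $\nu$, respectively; then $\mu^{\hom} D_{n\to 1}$ puts mass $p_i/n$ on $i$ and $(1-p_i)/n$ on $\bar i$, with the analogous statement for $\nu^{\hom} D_{n\to 1}$ using $q_i$. Since $\DKL{\nu \river \mu} = \DKL{\nu^{\hom} \river \mu^{\hom}}$, applying Lemma~\ref{lem:KL-dual} to $\mu^{\hom}$ at the prescribed marginal $\nu^{\hom} D_{n\to 1}$ gives
\[
\DKL{\nu \river \mu} \;\geq\; -\log \inf_{y,\bar y > 0} \frac{g_{\mu^{\hom}}(y_1,\ldots,y_n,\bar y_1,\ldots,\bar y_n)}{\prod_{i=1}^n y_i^{q_i}\,\bar y_i^{1-q_i}},
\]
where the exponents $q_i$ and $1-q_i$ arise as $n\cdot(q_i/n)$ and $n\cdot((1-q_i)/n)$.

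Next, I would convert the non-uniform FLC hypothesis into a tangent upper bound on the generating polynomial. The function $\log g_{\mu^{\hom}}(z_1^{\alpha_1},\ldots,z_n^{\alpha_n},\bar z_1^{\alpha_1},\ldots,\bar z_n^{\alpha_n})$ is concave on $\R^{2n}_{\geq 0}$, equals $0$ at $z=\bar z=\mathbf 1$, and has partial derivatives $\alpha_i p_i$ (in $z_i$) and $\alpha_i(1-p_i)$ (in $\bar z_i$) at $\mathbf 1$ (since $\partial_i g_{\mu^{\hom}}(\mathbf 1) = p_i$). The tangent inequality at $\mathbf 1$, after the change of variables $y_i = z_i^{\alpha_i}$, $\bar y_i = \bar z_i^{\alpha_i}$, therefore yields
\[
g_{\mu^{\hom}}(y,\bar y) \;\leq\; \exp\!\Bigl(\sum_{i=1}^n \alpha_i p_i\,(y_i^{1/\alpha_i} - 1) + \sum_{i=1}^n \alpha_i (1-p_i)\,(\bar y_i^{1/\alpha_i} - 1)\Bigr)
\]
for all $y,\bar y \in \R^{n}_{\geq 0}$.

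The final step is to evaluate the infimum in the first display at $y_i = (q_i/p_i)^{\alpha_i}$ and $\bar y_i = ((1-q_i)/(1-p_i))^{\alpha_i}$, which is the first-order stationary point of the resulting upper bound on the ratio. The linear part of the exponent then collapses via the identity
\[
\alpha_i p_i\bigl(\tfrac{q_i}{p_i} - 1\bigr) + \alpha_i(1-p_i)\bigl(\tfrac{1-q_i}{1-p_i} - 1\bigr) = \alpha_i(q_i - p_i) + \alpha_i(p_i - q_i) = 0,
\]
the non-uniform analog of the ``numerator equals $1$'' cancellation in the proof of Theorem~\ref{thm:entropic-independence}. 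What remains of $-\log$ of the ratio at this point is exactly $\sum_i \alpha_i\bigl[q_i\log(q_i/p_i) + (1-q_i)\log((1-q_i)/(1-p_i))\bigr] = \sum_i \alpha_i \DKL{\nu_i \river \mu_i}$, yielding the claim.

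The main delicate point is the cancellation identity above, which depends crucially on pairing each variable $i$ with $\bar i$ in the homogenization: the constraint that marginals in opposite spin directions sum to $1$ is precisely what makes the linear terms vanish, leaving only the sum of Bernoulli KL divergences. Boundary cases where some $p_i \in \{0,1\}$ are routine, since they force $q_i = p_i$ (else $\DKL{\nu \river \mu} = \infty$), in which case the corresponding summand is zero by convention.
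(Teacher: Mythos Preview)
Your proof is correct and follows essentially the same strategy as the paper: apply Lemma~\ref{lem:KL-dual} to $\mu^{\hom}$, use the FLC hypothesis to get a tangent upper bound on $g_{\mu^{\hom}}$, and evaluate at $y_i=(q_i/p_i)^{\alpha_i}$, $\bar y_i=((1-q_i)/(1-p_i))^{\alpha_i}$. The one minor difference is that the paper takes the tangent of the concave function $g_{\mu^{\hom}}(z^{\alpha})^{1/\bar\alpha}$ (invoking Lemma~\ref{lem:lc-equiv} to pass from log-concavity to $\bar\alpha$-root concavity), obtaining the bound $g\le\bigl(\sum_i \tfrac{n\alpha_i}{\bar\alpha}(p_i z_i^{1/\alpha_i}+p_{\bar i}z_{\bar i}^{1/\alpha_i})\bigr)^{\bar\alpha}$, whereas you take the tangent of $\log g_{\mu^{\hom}}(z^{\alpha})$ directly, obtaining an exponential bound; both upper bounds evaluate to $1$ at the chosen point, so the argument concludes identically.
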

		The proof of this proposition is similar to \cref{thm:entropic-independence} and is deferred to \cref{sec:appendix}.
	
	\section{MLSI for rank-1 Ising measures}
\label{sec:MLSI-Ising}
For $u,h \in \mathbb{R}^{n}$, let $\mu = \mu_{u,h}$ denote the probability distribution on the discrete hypercube $\{-1,1\}^{n}$ given by
\[\mu(x) \propto \exp\left(\frac{1}{2}\langle u, x\rangle ^{2} + \langle h, x \rangle\right) \quad \forall x \in \{-1,1\}^{n}\]
and let $\nu$ denote an arbitrary probability distribution on $\{-1,1\}^{n}$. Viewing $\mu$ and $\nu$ as homogeneous distributions on $\binom{[2n]}{n}$, we may consider the homogeneous distributions $\mu D$ and $\nu D$ on $\binom{[2n]}{n-1}$, where $D := D_{n \to (n-1)}$ is the down operator.

The following key result shows that the modified log-Sobolev constant of the Glauber dynamics of $\mu_{u,h}$ is at least $2(1-\|u\|_{2}^{2})/n$. Together with \cref{cor:needle}, it immediately implies \cref{thm:main}\ref{part:mlsi}.

\begin{proposition} \label{prop:rank 1 MLSI}
For any $n\geq 2$, any $u,h \in \mathbb{R}^{n}$, and for any probability measure $\nu$ on $\{-1,1\}^{n}$,
\[\DKL{\nu P_{\mu_{u,h}} \river \mu_{u,h} P_{\mu_{u,h}}} \leq \DKL{\nu D \river \mu_{u,h} D} \leq \left(1 - \frac{1-\|u\|_2^2}{n}\right)\DKL{\nu \river \mu_{u,h}}.\]

In particular, by \cref{lem:entropy-contraction-implies-mlsi}, 
\[\rho_0(P_{\mu_{u,h}}) \geq \frac{2(1-\|u\|_{2}^{2})}{n}.\]
\end{proposition}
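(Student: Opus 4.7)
The first inequality, $\DKL{\nu P_{\mu_{u,h}} \river \mu_{u,h} P_{\mu_{u,h}}} \le \DKL{\nu D \river \mu_{u,h} D}$, will follow from the data-processing inequality. After identifying $\mu_{u,h}$ with its homogenization on $\binom{[2n]}{n}$, the Glauber dynamics coincides with the $n \leftrightarrow (n-1)$ down-up walk, so $P_{\mu_{u,h}} = D\,U_{(n-1)\to n}$; applying the common stochastic kernel $U_{(n-1)\to n}$ to both $\nu D$ and $\mu_{u,h} D$ cannot increase their relative entropy.

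For the main inequality, I plan to reduce to an approximate-Shearer-style bound via two applications of the chain rule for KL divergence. Since the support of $\mu_{u,h}^{\hom} D$ bijects with pairs $(j,\sigma_{-j})$ specifying which coordinate $j$ is ``free'' and the values of the remaining coordinates, a direct computation gives $\DKL{\nu D \river \mu_{u,h} D} = \tfrac{1}{n}\sum_{i=1}^n \DKL{\nu_{-i} \river (\mu_{u,h})_{-i}}$, where $(\mu_{u,h})_{-i}$ denotes the marginal law of $\sigma_{-i}$. Combined with the chain rule $\DKL{\nu \river \mu_{u,h}} - \DKL{\nu_{-i} \river (\mu_{u,h})_{-i}} = \E_{\sigma_{-i}\sim\nu_{-i}}\DKL{\nu(\cdot|\sigma_{-i}) \river \mu_{u,h}(\cdot|\sigma_{-i})}$, this shows that the desired inequality is equivalent to
\[
\sum_{i=1}^n \DKL{\nu_{-i} \river (\mu_{u,h})_{-i}} \;\le\; \bigl(n-1 + \|u\|_2^2\bigr)\,\DKL{\nu \river \mu_{u,h}}. \qquad (\star)
\]

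I prove $(\star)$ by induction on $n$, the case $n = 1$ being trivial. For the step, fix $j \in [n]$ and condition on $\sigma_j$: the conditional $\mu_{u,h}(\cdot | \sigma_j)$ is again a rank-$1$ Ising model, now on $\{\pm 1\}^{n-1}$, with interaction vector $u_{-j}$ (so $\|u_{-j}\|_2^2 = \|u\|_2^2 - u_j^2 < 1$) and shifted external field $h_{-j} + u_j \sigma_j\, u_{-j}$. Applying the inductive hypothesis to this smaller rank-$1$ Ising and averaging over $\sigma_j \sim \nu_j$, the chain rule rewrites the expected conditional divergences as differences of marginal divergences and yields
\[
\sum_{i\neq j} \DKL{\nu_{-i} \river (\mu_{u,h})_{-i}} \;\le\; (1 - \|u\|_2^2 + u_j^2)\,\DKL{\nu_j \river (\mu_{u,h})_j} + (n - 2 + \|u\|_2^2 - u_j^2)\,\DKL{\nu \river \mu_{u,h}}.
\]
Summing over $j$ and using the identity $\sum_j(n-2+\|u\|_2^2-u_j^2) = (n-1)(n-1+\|u\|_2^2) - 1$ reduces $(\star)$ to the non-uniform entropic-independence inequality
\[
\sum_{j=1}^n (1 - \|u\|_2^2 + u_j^2)\, \DKL{\nu_j \river (\mu_{u,h})_j} \;\le\; \DKL{\nu \river \mu_{u,h}}. \qquad (\dagger)
\]

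To establish $(\dagger)$, I apply \cref{thm:nonuniform FLC to entropy contraction} to $\mu_{u,h}^{\hom}$ with $\vec\alpha = (1 - \|u\|_2^2 + u_j^2)_{j=1}^n$; its hypothesis---the $\vec\alpha$-fractional log-concavity of the rank-$1$ Ising model with these weights---is supplied by \cref{prop:frac LC}, whose proof goes through the coupling-to-spectral-independence framework. Substituting $(\dagger)$ back into the summed inductive bound and dividing by $n-1$ closes the induction and yields $(\star)$; the consequent lower bound $\rho_0(P_{\mu_{u,h}}) \ge 2(1-\|u\|_2^2)/n$ then follows from \cref{lem:entropy-contraction-implies-mlsi}. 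The main obstacle I anticipate is that the induction is tight and demands precisely the \emph{non-uniform} weights $1 - \|u\|_2^2 + u_j^2$ from \cref{prop:frac LC}: any uniform choice such as $\alpha_j \equiv 1-\|u\|_2^2$ would be too weak to close the induction, so sharpness of the non-uniform FLC for rank-$1$ Ising is essential here.
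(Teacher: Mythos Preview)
Your proposal is correct and follows essentially the same approach as the paper: induct on $n$ by conditioning on one coordinate (which yields a rank-$1$ Ising on $n-1$ spins with vector $u_{-j}$), average via the chain rule, sum over the choice of coordinate, and close the induction using exactly the non-uniform entropic-independence inequality $(\dagger)$, which the paper also obtains from \cref{prop:frac LC} and \cref{thm:nonuniform FLC to entropy contraction}. Your reformulation of the target as the Shearer-type inequality $(\star)$ and your use of the trivial $n=1$ base case are minor presentational improvements over the paper's more explicit $f(x)=x\log x$ bookkeeping and separate $n=2$ base case, but the substance is the same.
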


We will prove this result assuming the following (non-uniform) fractional log-concavity of rank-1 Ising models, which will be proved in the next subsection. 

\begin{proposition} \label{prop:frac LC}
For all $n\geq 2$ and $u,h \in \mathbb{R}^{n}$ with $\|u\|_{2}\leq 1$, $g_{\mu_{u,h}}$ is $(\alpha_1,\dots, \alpha_n, \alpha_1,\dots, \alpha_n)$-log-concave on $\mathbb{R}^{2n}_{\geq 0}$, where $\alpha_i = (1-\|u_{-i}\|_2^2)$.
\end{proposition}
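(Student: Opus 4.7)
My proof would proceed in three main steps. First, I would reduce the claim to verifying concavity at the single point $(1,\dots,1)\in\R^{2n}_{>0}$. Rescaling $(z_i,\bar z_i)\mapsto(\lambda_i z_i,\bar\lambda_i\bar z_i)$ is equivalent to replacing $\mu_{u,h}$ by another rank-1 Ising model with the \emph{same} $u$ but a shifted external field $h'$. Since the hypothesis $\|u\|_2\le1$ (and hence each $\alpha_i=1-\|u_{-i}\|_2^2$) depends only on $u$, it suffices to verify local log-concavity at $(1,\dots,1)$ for every choice of $h$.

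Second, I would compute the Hessian of $F(z,\bar z):=\log g_{\mu_{u,h}^{\hom}}(z^\alpha,\bar z^\alpha)$ at $(1,\dots,1)$ and reduce the NSD condition to a clean matrix inequality. Since every variable appears to degree at most one in $g_{\mu^{\hom}}$, a direct computation yields
\[ H_F=\diag(\alpha)\,C\,\diag(\alpha)-\diag(\alpha\,p), \]
where $C$ is the $2n\times 2n$ covariance matrix of the indicators $(\mathbf{1}[x_i=\pm1])_{i=1}^n$ and $p$ is their mean vector. Using the identity $C_{(\epsilon,i),(\epsilon',j)}=\epsilon\epsilon'\,K_{ij}/4$, where $K:=\operatorname{Cov}_\mu(x)$ is the spin covariance matrix, and then parametrizing each pair $(i,\bar i)$ by its mean mode $s_i=(y_{+,i}+y_{-,i})/2$ and differential mode $t_i=y_{+,i}-y_{-,i}$, the mean modes maximize out trivially and the NSD condition $H_F\preceq 0$ reduces cleanly to the single inequality
\[ K \;\preceq\; \diag(K_{ii}/\alpha_i). \]

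Third, I would establish this matrix inequality using the Gaussian-mixture representation of the rank-1 Ising: the identity $\exp(\tfrac12(u\cdot x)^2)=\E_{Y\sim N(0,1)}[\exp(Yu\cdot x)]$ writes $\mu_{u,h}$ as a mixture of product measures, with mixing density $Q(y)\propto e^{-y^2/2}\prod_i 2\cosh(h_i+yu_i)$ that is $\kappa$-strongly log-concave with $\kappa=1-\|u\|_2^2$. By the law of total covariance,
\[ K=\diag(A_i)+\operatorname{Cov}_Q(g),\qquad g_i(y):=\tanh(h_i+yu_i),\qquad A_i:=\E_Q[\operatorname{sech}^2(h_i+yu_i)]. \]
Brascamp--Lieb on $Q$ controls $\operatorname{Cov}_Q(g)$ in the spectral order, which combined with the pointwise identity $\operatorname{sech}^2+\tanh^2=1$ and the ``coupling to spectral independence'' framework (which gives $R_{ij}\le|u_i u_j|$ for the Dobrushin matrix of $\mu_{u,h}$ by the $1$-Lipschitzness of the logistic conditional, and hence $\sum_{t\ge 1}R^t\le|u|\,|u|^\intercal/\kappa$ entrywise on the influence matrix) should yield the target bound.

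The main obstacle is matching the non-uniform constant $\alpha_i=\kappa+u_i^2$. The naive Brascamp--Lieb bound produces a uniform factor $1/\kappa$, whereas the target requires $1/\alpha_i$ non-uniformly per coordinate. Bridging this gap will require the sharp pointwise Brascamp--Lieb with the \emph{actual} denominator $V''(y)=\kappa+\sum_j u_j^2\tanh^2(h_j+yu_j)$ rather than its uniform lower bound $\kappa$, together with a weighted Cauchy--Schwarz that exploits the rank-1 structure $g'(y)=\diag(u)\,\operatorname{sech}^2(h+yu)$ to absorb each coordinate's contribution into the right coordinate-specific $\alpha_i$.
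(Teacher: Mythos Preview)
Your first two steps match the paper's proof. The reduction to the all-ones point via ``rescaling equals shifting the external field'' is exactly what the paper does, and your matrix reformulation $K\preceq\operatorname{diag}(K_{ii}/\alpha_i)$ of the Hessian condition is correct and equivalent to the paper's $\lambda_{\max}\bigl(\operatorname{diag}(\alpha)(\inflMat_\mu+I)\bigr)\le 1$, since for $\pm1$ spins one has $\inflMat_\mu(i,j)=K_{ij}/K_{ii}$ for $i\neq j$.

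The gap is in step 3, and it is exactly the one you flag but do not close. Brascamp--Lieb on the one-dimensional mixing density $Q$ gives a clean bound---in fact the law of total covariance plus BL yields $K\preceq\E_Q\bigl[(\operatorname{diag}(1/s_i)-uu^\intercal)^{-1}\bigr]$ with $s_i(y)=\operatorname{sech}^2(h_i+yu_i)$---but this only produces $K\preceq(I-uu^\intercal)^{-1}=I+uu^\intercal/\kappa$, which is \emph{not} dominated by $\operatorname{diag}(K_{ii}/\alpha_i)$ in general; your pointwise-$V''$ and weighted Cauchy--Schwarz sketches do not obviously recover the per-coordinate $\alpha_i$. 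Your Dobrushin ingredient is also too coarse as stated: summing the Neumann series for the \emph{full} measure gives $|\inflMat(i,j)|\le u_iu_j/\kappa$, hence $\sum_{j\neq i}|\inflMat(i,j)|u_j\le u_i\|u_{-i}\|_2^2/\kappa$, whereas what is needed has $\alpha_i=\kappa+u_i^2\ge\kappa$ in the denominator, so your bound goes the wrong way.

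The paper closes this with two ideas you are missing. First, it bounds $\lambda_{\max}$ not spectrally but via the $u$-weighted $\ell^\infty\to\ell^\infty$ operator norm, reducing to the row-sum inequality $\sum_{j\neq i}|\inflMat(i,j)|u_j\le u_i\|u_{-i}\|_2^2/\alpha_i$. Second---and this is the crucial point---it applies the coupling/Dobrushin argument to the \emph{conditioned} measures $\mu_\pm:=\mu(\cdot\mid X_i=\pm1)$ rather than to $\mu$. These are rank-1 Ising models on $n-1$ spins with interaction vector $u_{-i}$, so their Dobrushin rows satisfy $\sum_k R_{jk}u_k\le u_j\|u_{-i}\|_2^2=u_j(1-\alpha_i)$, giving $(1-\alpha_i/(n-1))$-contractiveness of Glauber for $\mu_+$ in the $u$-weighted Hamming metric. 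Feeding this into the weighted coupling lemma (comparing $\mu_+$ to $\mu_-$) then yields exactly the required row-sum bound with $\alpha_i$ in the denominator. In short, the per-coordinate constant $\alpha_i=1-\|u_{-i}\|_2^2$ arises precisely because one removes spin $i$ \emph{before} invoking contractiveness.
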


\begin{proof}[Proof of \cref{prop:rank 1 MLSI}]
The first inequality is simply the data-processing inequality. We proceed to prove the second inequality. 
For convenience, we will denote $\mu_{u, h}$ simply by $\mu$. For $i \in [n]$ and $\epsilon_i \in \{-1,1\}$, we let $p_{i,\epsilon_i} = \mu(X_i = \epsilon_i)$ and $q_{i,\epsilon_i} = \nu(X_i = \epsilon_i)$. Also, let $f(x) = x\log{x}$. 

We will prove the assertion by induction on $n \geq 2$. We begin by establishing the base case $n=2$. By the non-negativity of the KL divergence, we have for all $i \in [2]$ and $\epsilon_{i} \in \{-1,1\}$ that 
\begin{align*}
    &\frac{1}{q_{i,\epsilon_i}}\cdot u_{-i}^{2}\cdot \sum_{x \in \{-1,1\}}\mu(X_j = x, X_i = \epsilon_i) \left(f\left(\frac{\nu(X_j = x, X_i = \epsilon_i)}{\mu(X_j = x, X_i = \epsilon_i)}\right) - f\left(\frac{q_{i,\epsilon_i}}{p_{i,\epsilon_i}}\right)\right)\\
    &= u_{-i}^{2}\DKL{\nu (\cdot \mid X_i = \epsilon_i) \river \mu (\cdot \mid X_i = \epsilon_i)}\\
    &\geq 0\\
    &= \frac{1}{q_{i,\epsilon_i}}p_{i,\epsilon_i}\left(f\left(\frac{q_{i,\epsilon_i}}{p_{i,\epsilon_i}}\right) - f\left(\frac{q_{i,\epsilon_i}}{p_{i,\epsilon_i}}\right)\right),
\end{align*}
which may be rewritten as
\begin{align*}
 u_{-i}^{2}\cdot \sum_{x \in \{-1,1\}}\mu(X_j = x, X_i = \epsilon_i)f\left(\frac{\nu(X_j = x, X_i = \epsilon_i)}{\mu(X_j = x, X_i = \epsilon_i)}\right) + (1-u_{-i}^2) p_{i,\epsilon_i} f\left(\frac{q_{i,\epsilon_i}}{p_{i,\epsilon_i}}\right)
 \geq p_{i,\epsilon_i}f\left(\frac{q_{i,\epsilon_i}}{p_{i,\epsilon_i}}\right).
\end{align*}
Summing this over $\epsilon_i \in \{\pm 1\}$ and averaging over $i \in [2]$, we get that
\begin{align*}
    \frac{\|u\|_{2}^{2}}{2} \DKL{\nu \river \mu } + \frac{1}{2}\sum_{i=1}^{2}(1-u_{-i}^2) \sum_{\epsilon_{i} \in \{-1.1\}}p_{i,\epsilon_i}\cdot f\left(\frac{q_{i,\epsilon_i}}{p_{i,\epsilon_i}}\right) &\geq \frac{1}{2}\sum_{i=1}^{2} \sum_{\epsilon_{i} \in \{-1.1\}}p_{i,\epsilon_i}\cdot f\left(\frac{q_{i,\epsilon_i}}{p_{i,\epsilon_i}}\right)\\
    &= \DKL{\nu D \river \mu D}.
\end{align*}
Using \cref{prop:frac LC,thm:nonuniform FLC to entropy contraction}, this shows that
\begin{align*}
    \left(\frac{1 + \|u\|_{2}^{2}}{2}\right)\DKL{\nu \river \mu} \geq \DKL{\nu D \river \mu D},
\end{align*}
which is equivalent to the claimed statement for $n=2$. 

For the inductive step, suppose that the assertion has already been established for some $n-1\geq 2$ and for all $u',h' \in \mathbb{R}^{n-1}$. By a similar argument to the one above, we show how to establish the assertion for $n$ and for any given $u,h \in \mathbb{R}^{n}$. 

We begin by noting that for any $i \in [n]$ and for any $\epsilon_i \in \{\pm 1\}$, the probability measure $\mu_{u,h}(\cdot \mid  X_i = \epsilon_i)$ is of the form $\mu_{u_{-i},h'}$ for some $h'\in \mathbb{R}^{n-1}$. Therefore, by the inductive hypothesis, for all $i \in [n]$ and for all $\epsilon_i \in \{\pm 1\}$, we have that
\begin{equation}
\label{eqn:ih}
\DKL{\nu(\cdot \mid X_i = \epsilon_i) D \river \mu_{u,h} (\cdot \mid X_i = \epsilon_i)D} \leq \left(1 - \frac{1-\|u_{-i}\|_2^2}{n-1}\right)\DKL{\nu(\cdot \mid X_i = \epsilon_i) \river \mu_{u,h}(\cdot \mid X_i = \epsilon_i)}.
\end{equation}

We rewrite this more explicitly. We denote $\mu_{u,h}$ simply by $\mu$. Given $i\in [n]$, $\epsilon_i \in \{\pm 1\}$ and $x \in \{-1,1\}^{n-1}$, let $x \oplus \epsilon_i \in \{-1,1\}^{n}$ denote the element with $\epsilon_i$ occupying the $i^{th}$ coordinate and the elements of $x$ occupying the remaining coordinates (in the natural order). For $i\neq k$ and $x \in \{-1,1\}^{n-2}$, we define $x \oplus \epsilon_i \oplus \epsilon_k$ similarly. 
Then, the right-hand side of \cref{eqn:ih} is equal to
\begin{align*}
    \frac{1}{q_{i,\epsilon_i}}\cdot\left(1 - \frac{1-\|u_{-i}\|_2^2}{n-1}\right)\cdot \sum_{x \in \{-1,1\}^{n-1}}\mu(x \oplus \epsilon_i)\left(f\left(\frac{\nu(x\oplus \epsilon_i)}{\mu(x\oplus \epsilon_i)}\right) - f\left(\frac{q_{i,\epsilon_i}}{p_{i,\epsilon_i}}\right)\right).
\end{align*}
For the left-hand side, first note that for any $k \neq i$ and for any $x \in \{-1,1\}^{n}$,
\begin{align*}
    (\mu(\cdot \mid X_i = \epsilon_i)D)(X_j = x_j \,\, \forall j\notin \{i,k\}) 
    &= \frac{1}{n-1}\sum_{\epsilon_k \in \{\pm 1\}}\mu(x_{-\{i,k\}}\oplus \epsilon_k \mid X_i = \epsilon_i) \\
    &= \frac{1}{n-1}\sum_{\epsilon_k \in \{\pm 1\}}\frac{\mu(x_{-\{i,k\}}\oplus \epsilon_k \oplus \epsilon_i)}{p_{i,\epsilon_i}}
\end{align*}
and similarly for $\nu(\cdot \mid X_i = \epsilon_i)D$. For $x_{-\{i,k\}} \in \{-1,1\}^{n-2}$ and $x_{-k} \in \{-1,1\}^{n-1}$, denote 
\[\mu(x_{-\{i,k\}}\oplus\epsilon_i) = \sum_{\epsilon_k \in \{\pm 1\}} \mu(x_{-\{i,k\}}\oplus \epsilon_k \oplus \epsilon_i), \quad \quad \mu(x_{-k}) = \sum_{\epsilon_k \in \{-1,1\}}\mu(x_{-k} \oplus \epsilon_k)\]
and similarly for $\nu$. Then, the left-hand side of \cref{eqn:ih} is equal to
\begin{align*}
    \frac{1}{q_{i,\epsilon_i}\cdot (n-1)}\sum_{k\neq i}\,\,\,\sum_{x_{-\{i,k\}} \in \{-1,1\}^{n-2}}\mu(x_{-\{i,k\}}\oplus \epsilon_i)\left(f\left(\frac{\nu(x_{-\{i,k\}}\oplus \epsilon_i)}{\mu(x_{-\{i,k\}}\oplus \epsilon_i)}\right) - f\left(\frac{q_{i,\epsilon_i}}{p_{i,\epsilon_i}}\right)\right).
\end{align*}
Therefore, we can rewrite \cref{eqn:ih} as 
\begin{align}
\label{eqn:ih-explicit}
 &\left(1 - \frac{1-\|u_{-i}\|_2^2}{n-1}\right)\cdot \sum_{x \in \{-1,1\}^{n-1}}\mu(x \oplus \epsilon_i)f\left(\frac{\nu(x\oplus \epsilon_i)}{\mu(x\oplus \epsilon_i)}\right) + \frac{1-\|u_{-i}\|_2^2}{n-1}\cdot p_{i,\epsilon_i}\cdot f\left(\frac{q_{i,\epsilon_i}}{p_{i,\epsilon_i}}\right) \nonumber \\
 &\geq  \frac{1}{n-1}\sum_{k\neq i}\,\,\,\sum_{x_{-\{i,k\}} \in \{-1,1\}^{n-2}}\mu(x_{-\{i,k\}}\oplus \epsilon_i)\cdot f\left(\frac{\nu(x_{-\{i,k\}}\oplus \epsilon_i)}{\mu(x_{-\{i,k\}}\oplus \epsilon_i)}\right)
\end{align}
Summing \cref{eqn:ih-explicit} over $\epsilon_i \in \{\pm 1\}$ and then averaging over $i \in [n]$, we get that
\begin{align*}
    &\left(1 - \frac{1}{n-1} + \frac{\|u\|_{2}^{2}}{n}\right)\sum_{x \in \{-1,1\}^{n}}\mu(x)f\left(\frac{\nu(x)}{\mu(x)}\right) + \frac{1}{n}\sum_{i=1}^{n} \frac{1-\|u_{-i}\|_2^2}{n-1} \sum_{\epsilon_{i} \in \{\pm 1\}}  p_{i,\epsilon_i}\cdot f\left(\frac{q_{i,\epsilon_i}}{p_{i,\epsilon_i}}\right) \\
    &\geq \frac{1}{n}\sum_{i=1}^{n}\frac{1}{n-1}\sum_{k\neq i}\sum_{x_{-k} \in \{-1,1\}^{n-1}}\mu(x_{-k})\cdot f\left(\frac{\nu(x_{-k})}{\mu(x_{-k})}\right)\\
    &= \frac{1}{n}\sum_{k=1}^{n} \sum_{x_{-k} \in \{-1,1\}^{n-1}}\mu(x_{-k})\cdot f\left(\frac{\nu(x_{-k})}{\mu(x_{-k})}\right) \\
    &= \DKL{\nu D \river \mu D}.
\end{align*}

Finally, by \cref{prop:frac LC,thm:nonuniform FLC to entropy contraction}, it follows that
\begin{align*}
    &\left(1 - \frac{1-\|u\|_2^2}{n}\right)\DKL{\nu \river \mu}
    = \left(1 - \frac{1}{n-1} + \frac{\|u\|_{2}^{2}}{n}\right)\DKL{\nu \river \mu} + \frac{1}{n(n-1)}\DKL{\nu \river \mu}\\
    &\geq \left(1 - \frac{1}{n-1} + \frac{\|u\|_{2}^{2}}{n}\right)\sum_{x \in \{-1,1\}^{n}}\mu(x)f\left(\frac{\nu(x)}{\mu(x)}\right) + \frac{1}{n}\sum_{i=1}^{n} \frac{1-\|u_{-i}\|_2^2}{n-1} \sum_{\epsilon_{i} \in \{\pm 1\}}  p_{i,\epsilon_i}\cdot f\left(\frac{q_{i,\epsilon_i}}{p_{i,\epsilon_i}}\right)\\
    &\geq \DKL{\nu D \river \mu D},
\end{align*}
which completes the inductive step. 
\end{proof}

	\subsection{Non-uniform fractional log-concavity of rank-1 Ising measures}


We now turn to the proof of \cref{prop:frac LC}. 

\begin{proof}[Proof of \cref{prop:frac LC}]
For notational convenience, we will denote $g_{\mu}$ simply by $g$. Without loss of generality, we may assume that $u \in \R^{n}_{> 0}$.  
We wish to show that for any $u \in \R^{n}_{> 0}, h \in \mathbb{R}^{n}$,
\[\log g(z_1^{\alpha_1}, \dots, z_n^{\alpha_n}, z_{\bar{1}}^{\alpha_1}, \dots, z_{\bar{n}}^{\alpha_n})\] is concave for all $\vec{x} \in \R_{\geq 0}^{2n}$. It suffices to show this for the point $\vec{1} = (1,1,\dots, 1) \in \mathbb{R}^{2n}$. Indeed, for an arbitrary point $\vec{x} = (e^{y_1},\dots, e^{y_n},e^{y_1'},\dots, e^{y_n'}) \in \mathbb{R}^{2n}_{>0}$, where $y_1,\dots, y_n, y_1',\dots, y_n' \in \R$, note that
\begin{align*}
g(z_1e^{\alpha _1 y_1},\dots, z_n e^{\alpha _n y_n}, z_{\bar{1}}e^{\alpha_1 y_1'},\dots, z_{\bar{n}}e^{\alpha_n y_n'})
&= C_{\alpha_1,\dots, \alpha_n}(y_1,\dots, y_n, y_1',\dots, y_n')g_{\mu_{u,h'}}(z_1,\dots, z_n, z_{\bar{1}},\dots, z_{\bar{n}})
\end{align*}
for some $h' \in \R^{n}$. Therefore, 
\[\nabla^2\log g(z_1^{\alpha_1},\dots, z_{\bar{n}}^{\alpha_n}) \mid_{\vec{x}} = \diag(v_1^{-1},\dots, v_{2n}^{-1})\cdot \nabla^{2}\log g_{\mu_{u, h'}}(z_1^{\alpha_1},\dots, z_{\bar{n}}^{\alpha_n}) \mid_{\vec{1}} \cdot \diag(v_1^{-1},\dots, v_{2n}^{-1}) \preceq 0.\]

Now, let $H = \nabla^2 \log g(z_1^{\alpha_1},\dots, z_{\bar{n}}^{\alpha_n})|_{\vec{1}}$. By direct computation, we see that
\[
H_{ij} = 
\begin{cases}
\alpha_i(\alpha_i-1)\mathbb{P}_{\mu}[i] - \alpha_i^2\mathbb{P}_{\mu}[i]^2 \quad \quad &\text{if } j=i\\
\alpha_i \alpha_j \left(\mathbb{P}_{\mu}[i \wedge j] - \mathbb{P}_{\mu}[i]\mathbb{P}_{\mu}[j]\right) \quad \quad &\text{otherwise}.
\end{cases}
\]
Our goal is to show that $H \preceq 0$. Let $D_{\vec{\alpha}} = \diag(\alpha_1,\dots, \alpha_{n}, \alpha_1,\dots, \alpha_{n}) \in \R^{2n\times 2n}$ and let $D_{\mu} \in \R^{2n \times 2n}$ be the diagonal matrix with $(D_{\mu})_{i,i} = \mathbb{P}_{\mu}[i]^{-1}$. Note that
\[\corMat_{\mu} = D_{\vec{\alpha}}^{-1} D_{\mu} H D_{\vec{\alpha}}^{-1} + D_{\vec{\alpha}}^{-1}.\]
Moreover, from the proof of \cite[Lemma~69]{alimohammadi2021fractionally}, it follows that the non-zero eigenvalues of $\corMat_{\mu}D_{\vec{\alpha}}$ coincide with the non-zero eigenvalues of 
 \[\diag(\alpha_1,\dots, \alpha_n)(\inflMat_\mu + I_{n\times n}) \in \R^{n\times n} =: A\]
Therefore,
\[\lambda_{\max}(A) \leq 1 \implies \lambda_{\max}(D_{\vec{\alpha}}^{-1}D_{\mu}H) \leq 0 \implies H \preceq 0.\]

Let $\|\cdot \|_{u}$ be the norm on $\R^{n}$ defined by $\|\vec{x}\|_{u} = \max_{i \in [n]}|u_i^{-1} x_i|$. In order to show that $\lambda_{\max}(A) \leq 1$, it suffices to show that $\sup\{\|A\vec{x}\|_{u} : \|\vec{x}\|_{u} \leq 1\}\leq 1$. Explicitly, this amounts to showing that 
\begin{align*}
    \sum_{j=1}^{n}|A|_{ij}u_{j} \leq u_{i} \quad \forall i \in [n],
\end{align*}
which is implied by (recall that $u \in \R^{n}_{> 0}$)
\begin{align*}
\sum_{j=1}^{n}\alpha_i|\inflMat_{\mu}(i,j)|u_j + \alpha_i u_i \leq u_i \quad \forall i \in [n].
\end{align*}

We will show that for all $i \in [n]$,
\begin{equation}
\label{ineq:main}
    \sum_{j\neq i}|\inflMat_{\mu}(i,j)|u_j \leq \alpha_i^{-1}u_i \|u_{-i}\|_{2}^{2}.
\end{equation}

Given this, we see that for all $i \in [n]$,
\begin{align*}
   \sum_{j=1}^{n}\alpha_i|\inflMat_{\mu}(i,j)|u_j + \alpha_i u_i \leq u_i \leq  u_i (\|u_{-i}\|_{2}^{2}+\alpha_i) = u_i, 
\end{align*}
as desired. 

The remainder of the proof is devoted to proving \cref{ineq:main}. Without loss of generality, it suffices to consider the case $i=1$. Consider the probability measures
\[\mu_+ = \mu(\cdot \mid X_1 = 1), \quad \mu_{-} = \mu(\cdot \mid X_1 = -1)\]
on $\{-1,1\}^{[n]\setminus \{1\}}$. Observe that for any $i \neq j \in [n]\setminus \{1\}$, the entry $R_{ij}$ for the Dobrushin matrix of $\mu_+$ (\cref{def:dobrushin}) satisfies
\begin{align*}
    |R_{ij}| 
    &\leq \left|\frac{\exp(u_i u_j)}{\exp(u_i u_j) + \exp(-u_i u_j)} - \frac{\exp(-u_i u_j)}{\exp(u_i u_j) + \exp(-u_i u_j)}\right| = |\tanh(u_i u_j)| \leq |u_i u_j| = u_i u_j,
\end{align*}
where we have used the 1-Lipschitzness of $\tanh$ and the assumption that $u \in \R^{n}_{>0}$. In particular, we have for all $i \in [n]\setminus \{1\}$ that
\begin{align*}
    \sum_{j=2}^{n}R_{ij}u_j \leq \sum_{j=2}^{n}|R_{ij}|u_j \leq u_i\sum_{j=2}^{n}u_j^{2} = u_i\|u_{-1}\|_{2}^{2} = u_i(1-\alpha_1). 
\end{align*}
By \cite[Lemma~20]{dyer2009matrix}, this shows that $\mu_{1}$ is $(1-\alpha_1/(n-1))$-contractive with respect to the Glauber dynamics and the weighted Hamming metric $d_{u}$ on $\{-1,1\}^{[n]\setminus \{1\}}$, defined by $d_{u}(\sigma, \tau) = \sum_{i=2}^{n}u_i \mathds{1}[\sigma_i \neq \tau_i]$.

Given this, we can use the following lemma (which immediately follows by applying \cite[Lemma~4.3]{blanca2021mixing} to the function $f(x) = \sum_{i=1}^{n}w_i \mathds{1}(x_i = 1)$) to obtain an initial upper bound on the left hand side of \cref{ineq:main}. 
\begin{lemma}[\cite{blanca2021mixing}] \label{lem:weight norm}
Let $\mu, \nu$ be two distributions on $\set*{\pm 1}^{n}.$ Let $P_{\mu}$ and $P_{\nu}$ be Markov chains on $\{\pm 1\}^{n}$ with stationary distributions $\mu$ and $\nu$ respectively. Let $w \in \R_{\geq 0}^{n}$ and let $d_{w}$ denote the weighted Hamming distance on $\{-1,1\}^{n}$ corresponding to $w$.
Suppose that $\mu$ is $(1-\epsilon/n)$-contractive with respect to the chain $P_{\mu}$ and the metric $d_{w}$. 
Then,
\begin{equation*}
          L:=\sum_{i=1}^n w_i \abs{\P_{X\sim \mu}{X_i= 1} - \P_{X\sim \nu}{X_i=1}} 
        \leq  \frac{n}{\epsilon}\cdot \mathbb{E}_{\sigma \sim \nu}\left[\sum_{i=1}^n w_i \abs{P_{\mu} (\sigma \to \sigma^{i} ) - P_{\nu} (\sigma \to \sigma^i)  } \right],
\end{equation*}
where $\sigma^i$ is $\sigma $ with the $i$-th coordinate flipped. 
\end{lemma}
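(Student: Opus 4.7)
The plan is to prove this by a standard coupling argument through the Wasserstein distance $W(\alpha,\beta) := \inf_{\pi}\mathbb{E}_\pi[d_w(X,Y)]$ induced by the weighted Hamming metric $d_w$, where the infimum runs over all couplings $\pi$ of $\alpha$ and $\beta$. The first step is to observe that $L \leq W(\mu,\nu)$: for any coupling $\pi$ of $(\mu,\nu)$, $|\mathbb{P}_\mu[X_i=1]-\mathbb{P}_\nu[X_i=1]| \leq \mathbb{P}_\pi[X_i\neq Y_i]$, so $L \leq \sum_i w_i\,\mathbb{P}_\pi[X_i\neq Y_i] = \mathbb{E}_\pi[d_w(X,Y)]$, and taking the infimum over $\pi$ yields $L \leq W(\mu,\nu)$.

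The second step uses the triangle inequality together with stationarity $\mu P_\mu = \mu$ and $\nu P_\nu = \nu$:
\[
W(\mu,\nu) \;=\; W(\mu P_\mu, \nu P_\nu) \;\leq\; W(\mu P_\mu, \nu P_\mu) + W(\nu P_\mu, \nu P_\nu).
\]
By the hypothesis that $P_\mu$ is $(1-\epsilon/n)$-contractive with respect to $d_w$, the first term is bounded above by $(1-\epsilon/n)\,W(\mu,\nu)$. Rearranging gives $(\epsilon/n)\,W(\mu,\nu) \leq W(\nu P_\mu, \nu P_\nu)$, i.e.\ $W(\mu,\nu) \leq (n/\epsilon)\,W(\nu P_\mu, \nu P_\nu)$.

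The third step is to bound $W(\nu P_\mu, \nu P_\nu)$ by the synchronous Glauber coupling. First drop to a starting state: $W(\nu P_\mu, \nu P_\nu) \leq \mathbb{E}_{\sigma\sim\nu}[W(\sigma P_\mu, \sigma P_\nu)]$, by coupling the first marginals through the identity. Then, for each $\sigma$, select a uniform coordinate $i\in [n]$ in \emph{both} chains simultaneously and use the maximal coupling of the resulting conditional distributions $\mu(\cdot\mid\sigma_{-i})$ versus $\nu(\cdot\mid\sigma_{-i})$ on the chosen coordinate. Under this coupling, the two outputs differ only at coordinate $i$ (if at all), so the expected weighted Hamming cost is $\frac{1}{n}\sum_i w_i\,\mathrm{TV}(\mu(\cdot\mid\sigma_{-i}),\nu(\cdot\mid\sigma_{-i}))$. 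A direct computation for the Glauber step gives $|P_\mu(\sigma\to\sigma^i) - P_\nu(\sigma\to\sigma^i)| = \frac{1}{n}\mathrm{TV}(\mu(\cdot\mid\sigma_{-i}),\nu(\cdot\mid\sigma_{-i}))$ since these two-state TV distances are exactly $|\mu(X_i\neq\sigma_i\mid\sigma_{-i}) - \nu(X_i\neq\sigma_i\mid\sigma_{-i})|$. Hence
\[
W(\sigma P_\mu, \sigma P_\nu) \;\leq\; \sum_{i=1}^n w_i\,\bigl|P_\mu(\sigma\to\sigma^i) - P_\nu(\sigma\to\sigma^i)\bigr|.
\]
Chaining everything together produces $L \leq W(\mu,\nu) \leq (n/\epsilon)\,\mathbb{E}_{\sigma\sim\nu}\bigl[\sum_i w_i|P_\mu(\sigma,\sigma^i) - P_\nu(\sigma,\sigma^i)|\bigr]$, as required.

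The main obstacle I anticipate is identifying the right coupling in the final step: a naive ``match states to states'' coupling would leave excess mass at distinct $\sigma^i$ and $\sigma^j$ that must be transported at cost $w_i+w_j$, which is too expensive. The synchronous-coordinate Glauber coupling circumvents this because the two chains, starting from a common $\sigma$, always agree on the $n-1$ unchanged coordinates, so any disagreement contributes only $w_i$. Everything else is a straightforward application of triangle inequality, stationarity, and the contractivity hypothesis.
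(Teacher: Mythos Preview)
Your proof is correct and takes a genuinely different route from the paper. The paper does not give a self-contained argument here: it simply invokes \cite[Lemma~4.3]{blanca2021mixing} applied to the test function $f(x)=\sum_i w_i\mathds{1}[x_i=1]$ (and a commented-out alternative proof appeals to \cite[Theorem~1.4]{liu2021coupling}), both of which are Stein-type bounds that expand the marginal difference as an infinite sum over a coupled trajectory of $P_\mu$ and then use contractivity to sum the geometric series. Your argument instead goes through the $d_w$-Wasserstein distance directly: $L\le W(\mu,\nu)$, then the one-line fixed-point inequality $W(\mu,\nu)=W(\mu P_\mu,\nu P_\nu)\le (1-\epsilon/n)W(\mu,\nu)+W(\nu P_\mu,\nu P_\nu)$, and finally an explicit one-step coupling from a common state $\sigma$. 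This is more elementary and entirely self-contained. One small caveat: the lemma as stated allows $P_\mu,P_\nu$ to be arbitrary chains, whereas your synchronous-coordinate coupling in the last step is specific to Glauber dynamics. This is harmless for the paper's application (which only uses Glauber), and the same inequality $W(\sigma P_\mu,\sigma P_\nu)\le \sum_i w_i|P_\mu(\sigma,\sigma^i)-P_\nu(\sigma,\sigma^i)|$ extends to any pair of single-flip chains via Kantorovich--Rubinstein duality and the decomposition $\sigma P_\mu-\sigma P_\nu=\sum_i\bigl(P_\mu(\sigma,\sigma^i)-P_\nu(\sigma,\sigma^i)\bigr)(\delta_{\sigma^i}-\delta_\sigma)$; but for chains that can make multi-coordinate moves the bound can fail, so you may want to add the single-flip hypothesis explicitly.
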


Applying this to the measures $\mu_+$ and $\mu_-$ for their respective Glauber dynamics and the weighted Hamming distance $d_u$, we see that
\begin{align*}
    \sum_{j=2}^{n}|\inflMat_{\mu}(1,j)|u_j
    &=  \sum_{j=2}^{n}\left|\mu(X_j = 1 \mid X_1 = 1) - \mu(X_j = 1 \mid X_1 = -1)\right|u_j\\
    &= \sum_{j=2}^{n} u_j \left|\P_{X \sim \mu_+}{X_j = 1} - \P_{X \sim \mu_-}{X_j = 1}\right|\\
    &\leq \frac{n-1}{\alpha_1}\mathbb{E}_{\sigma\sim \mu_-}\left[\sum_{j=2}^n u_j \abs{P_{\mu_+} (\sigma \to \sigma^{j} ) - P_{\mu_-} (\sigma \to \sigma^j)  } \right] \qquad \qquad \qquad \qquad \qquad \textrm{(\cref{lem:weight norm})}\\
    &\leq \frac{n-1}{\alpha_1}\max_{\sigma \in \{-1,1\}^{[n]\setminus \{1\}}}\left[\sum_{j=2}^n u_j \abs{P_{\mu_+} (\sigma \to \sigma^{j} ) - P_{\mu_-} (\sigma \to \sigma^j)  } \right]\\
    &\leq \frac{n-1}{\alpha_1}\cdot \sum_{j=2}^{n}\frac{u_j}{n-1}\cdot \max_{\epsilon \in \{-1,1\}} \left|\left(\frac{1}{2} + \frac{1}{2}\tanh(u_1 u_j\epsilon + h_j\epsilon)\right) - \left(\frac{1}{2} + \frac{1}{2}\tanh(-u_1 u_j\epsilon + h_j\epsilon)\right) \right|\\
    &\leq \frac{n-1}{\alpha_1}\cdot \sum_{j=2}^{n}\frac{u_j}{n-1}\cdot |u_1 u_j|\\
    &= \frac{u_1}{\alpha_1}\sum_{j=2}^{n}u_j^{2} = \alpha_{1}^{-1}u_1\|u_{-1}\|_{2}^{2},
\end{align*}
as desired in \cref{ineq:main}. Here, the penultimate line follows from the 1-Lipschitzness of $\tanh$. \qedhere

\end{proof}

\section{Mixing time for the Ising model using the exchange property}
\label{sec:exchange}
In this section, we prove \cref{thm:main}\ref{part:mixing}. Based on \cref{thm:main}\ref{part:mlsi}, we already know that the Ising model satisfies the modified log-Sobolev inequality, which implies mixing time $O(n \log(n))$ as long as the external field $h$ is not too large, in particular if $h$ is size $poly(n)$. In fact, with a little more care we can eliminate this dependence on $h$ entirely.
To do so, we show that the Ising model $\mu_{J, h}$ satisfies the exchange property in \cite{logConcaveIV} and then use \cite[Lemma~22, 23]{logConcaveIV} to deduce the bound on the mixing time from \cref{thm:main}\ref{part:mlsi}. 
\begin{definition}[Exchange property]
Consider a distribution $\mu$ on $\binom{[n]}{k}$. We say that $\mu$ has the exchange property if for any $S, T  \in \binom{[n]}{k}$ and for any $i\in S \setminus T$, there exists $j\in T \setminus S$ such that
 \[\mu(S) \mu(T) \leq 2^{O(k)} \mu(S-i+j) \mu(T-j+i).\]
\end{definition}
\begin{lemma}[{\cite[Lemma~22, 23]{logConcaveIV}}]
Consider a distribution $\mu$ on $\binom{[n]}{k}$. Suppose that $\mu$ has the exchange property and that the $k \leftrightarrow (k-1)$ down-up walk to sample from $\mu$ has modified log-Sobolev constant at least $\alpha.$ Then, the $k \leftrightarrow (k-1)$ down-up walk starting from any initial state reaches a $2^{O(k^2)}$-warm start after $O(k\log k)$ steps, and hence, mixes in $O\parens*{(k +\alpha^{-1}) \log k}$ steps.  
\end{lemma}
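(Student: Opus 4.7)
The plan is to prove the two conclusions in the order stated: first, the existence of a $2^{O(k^2)}$-warm start after $O(k\log k)$ steps, and second, that starting from such a warm start the modified log-Sobolev inequality delivers total-variation mixing in a further $O(\alpha^{-1}\log k)$ steps.

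The second part is essentially a direct application of the mixing-time bound following \cref{def:MLSI}. For a distribution $\nu$ with $\max_T \nu(T)/\mu(T) \leq 2^{O(k^2)}$, that bound yields mixing time at most $\lceil \alpha^{-1}\cdot(\log\log 2^{O(k^2)} + O(1))\rceil = O(\alpha^{-1}\log k)$, and the compositional identity $t_{\operatorname{mix}}(\text{from }S_0) \leq (\text{burn-in}) + t_{\operatorname{mix}}(\text{from warm start})$ combined with the $O(k\log k)$ burn-in recovers the stated $O((k+\alpha^{-1})\log k)$ total.

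The first part is where the exchange property does the real work. My plan is to run the walk for $\tau = O(k\log k)$ steps and argue, via a coupon-collector computation, that with probability at least $1-o(1)$ every element of $S_0$ has been selected for removal at least once by time $\tau$. On this event the chain has ``refreshed'' every original slot, and intuitively the current state $S_\tau$ depends on $S_0$ only through a controlled multiplicative factor with respect to $\mu$. To make this quantitative I would bound the ratio $\Pr[S_\tau = T]/\mu(T)$ for every $T$ by constructing, via the exchange property, a canonical transport path from $S_0$ to $T$ consisting of at most $k$ elementary element-swaps; each swap loses only a factor of $2^{O(k)}$ by the exchange inequality, so compounding the at most $k$ swaps gives a total loss of $2^{O(k^2)}$. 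An equivalent implementation would couple the chain started at $S_0$ with one started from $\mu$, sharing the sequence of removal positions, and use the exchange property to show that the $\mu$-weighted discrepancy between the coupled chains contracts on each refresh.

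The main obstacle I anticipate is converting the pairwise nature of the exchange property into a pointwise bound on the single-chain marginal $\Pr[S_\tau = T]$: the exchange inequality controls pairs $(S,T)$ of size-$k$ sets but not individual trajectories, and its application along a sequence of up to $k$ successive swaps must be orchestrated carefully so that the accumulated multiplicative loss does not exceed $2^{O(k^2)}$. Once this warm-start estimate is in place, the remainder is bookkeeping: feed the bound into the MLSI mixing-time inequality from \cref{def:MLSI} as described above.
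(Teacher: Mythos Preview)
This lemma is not proved in the present paper; it is quoted from \cite{logConcaveIV} (Lemmas~22 and~23 there) and invoked as a black box, so there is no in-paper argument to compare your proposal against.

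On the merits: your two-stage structure (burn-in to a $2^{O(k^2)}$-warm start, then apply the MLSI mixing-time bound following \cref{def:MLSI}) is exactly the architecture of the cited result, and your handling of the second stage is correct.

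For the first stage, be careful about the direction of the inequality. A canonical path from $S_0$ to $T$ built by iterating the exchange property most naturally produces a \emph{lower} bound of the form $P^{k}(S_0,T)\ge 2^{-O(k^2)}\mu(T)$, whereas a warm start is an \emph{upper} bound $P^{\tau}(S_0,T)/\mu(T)\le 2^{O(k^2)}$. Your sentence ``bound the ratio $\Pr[S_\tau=T]/\mu(T)$ \ldots\ by constructing a canonical transport path from $S_0$ to $T$'' reads as though the path alone delivers the upper bound, which it does not. By reversibility $P^{\tau}(S_0,T)/\mu(T)=P^{\tau}(T,S_0)/\mu(S_0)$, and since the down-up walk has nonnegative spectrum, Cauchy--Schwarz gives
\[
\frac{P^{\tau}(S_0,T)}{\mu(T)}\le \sqrt{\frac{P^{\tau}(S_0,S_0)}{\mu(S_0)}}\cdot\sqrt{\frac{P^{\tau}(T,T)}{\mu(T)}},
\]
so the task really reduces to bounding the diagonal return ratios $P^{\tau}(S,S)/\mu(S)$ uniformly in $S$. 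Your coupon-collector/slot-refreshing picture is the right intuition for \emph{that} bound, and your alternative coupling formulation---compare the chain from $S_0$ to one started at stationarity, use the exchange inequality to control how much each conditional up-step can distort the density, and compound over the $k$ refreshed slots---is much closer to a workable proof than the path argument. The obstacle you yourself flag, turning the pairwise exchange inequality into a pointwise density bound on $S_\tau$, is precisely the crux, and your sketch does not yet resolve it.
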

Recall that for an Ising measure $\mu = \mu_{J,h}$ on $\{\pm 1\}^{n}$, the Glauber dynamics for $\mu$ is exactly the $n \leftrightarrow (n-1)$ down-up walk on $\mu^{\hom}: \binom{[n] \cup [\bar{n}]}{n} \to \R_{\geq 0}.$ The following lemma shows that $\mu^{\hom}$ has the exchange property, and hence, by \cref{thm:main}\ref{part:mlsi}, that the  $n \leftrightarrow (n-1)$ down-up walk for $\mu^{\hom}$ i.e. the Glauber dynamics for $\mu$ mixes in $O\parens*{n\log n/(1-\norm{J}_\OP)}$ steps from any starting state. 
\begin{lemma} \label{prop:exchange}
Let $\mu$ be the measure on $\{\pm 1\}^{n}$ defined by 
\[\mu(x) = \frac{1}{Z} \exp\left(\frac{1}{2} \langle x, J x \rangle + \langle h, x \rangle\right)\]
where $\norm{J}_\OP \leq 1.$ Consider $\sigma \neq \tau \in \set*{\pm 1}^n$ and for any $i \in [n]$ for which $\sigma_i \neq \tau_i$, let 
$\sigma', \tau'$ be $\sigma$ and $\tau$ with the $i$-th entry flipped. 
Then,
\[\mu(\sigma) \mu(\tau) \leq 2^{O(\sqrt{n})} \mu(\sigma') \mu(\tau').\]
\end{lemma}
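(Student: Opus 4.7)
The plan is to reduce the claim to a direct calculation of the log-ratio $\log(\mu(\sigma)\mu(\tau)/\mu(\sigma')\mu(\tau'))$ and then bound this quantity by $O(\sqrt{n})$ using the spectral norm bound on $J$.

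First I would write $\sigma'$ and $\tau'$ as $\sigma$ and $\tau$ with their $i$-th coordinates negated. Since $\sigma_i \neq \tau_i$ forces $\sigma'_i + \tau'_i = -\sigma_i - \tau_i = 0 = \sigma_i + \tau_i$, and the other coordinates are unchanged, we have $\sigma + \tau = \sigma' + \tau'$. Hence the external field contributions $\langle h, \sigma\rangle + \langle h, \tau\rangle - \langle h, \sigma'\rangle - \langle h, \tau'\rangle$ cancel exactly, and the log-ratio depends only on $J$.

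Next I would expand the quadratic terms. Flipping the $i$-th entry changes $x_a x_b$ only when $\{a,b\}$ contains $i$, and the $a=b=i$ diagonal cancels. A short calculation yields
\[
\tfrac{1}{2}\bigl(\langle \sigma, J\sigma\rangle - \langle \sigma', J\sigma'\rangle\bigr) = 2\sigma_i \sum_{j \neq i} J_{ij}\sigma_j,
\]
and similarly for $\tau$. Assuming WLOG $\sigma_i = 1, \tau_i = -1$, these combine to give
\[
\log \frac{\mu(\sigma)\mu(\tau)}{\mu(\sigma')\mu(\tau')} = 2 \sum_{j \neq i} J_{ij}(\sigma_j - \tau_j) = 2\langle J_{i,-i}, \sigma_{-i} - \tau_{-i}\rangle,
\]
where $J_{i,-i}$ is the $i$-th row of $J$ with the diagonal entry deleted and $\sigma_{-i}, \tau_{-i}$ are the corresponding coordinate deletions.

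Finally I would apply Cauchy–Schwarz to this inner product. Since $\|J_{i,-i}\|_2 \le \|Je_i\|_2 \le \|J\|_{\operatorname{OP}} \le 1$ and $\|\sigma_{-i} - \tau_{-i}\|_2 \le \|\sigma - \tau\|_2 \le 2\sqrt{n}$ (each coordinate of $\sigma - \tau$ lies in $\{-2, 0, 2\}$ and $\sigma, \tau \in \{\pm 1\}^n$), we obtain
\[
\log \frac{\mu(\sigma)\mu(\tau)}{\mu(\sigma')\mu(\tau')} \le 2 \cdot 1 \cdot 2\sqrt{n} = 4\sqrt{n},
\]
which gives the desired $2^{O(\sqrt{n})}$ bound. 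There is no real obstacle here: the external-field cancellation trivializes the argument, and the remaining work is a one-line Cauchy–Schwarz estimate using the operator-norm hypothesis on $J$. The only thing to watch is the sign convention and handling the diagonal $J_{ii}$ correctly, but both are routine.
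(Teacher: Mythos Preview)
Your proposal is correct and essentially identical to the paper's proof: both compute $\log\frac{\mu(\sigma)\mu(\tau)}{\mu(\sigma')\mu(\tau')} = 2\sum_{j\neq i} J_{ij}(\sigma_j-\tau_j)$ after the external-field terms cancel, and then bound this by $4\sqrt{n}$. The only cosmetic difference is that the paper phrases the final estimate via the norm chain $\|J\|_{\infty\to\infty}\le \sqrt{n}\,\|J\|_{2\to\infty}\le \sqrt{n}\,\|J\|_{\mathrm{OP}}$, whereas you apply Cauchy--Schwarz directly to $\langle J_{i,-i},\sigma_{-i}-\tau_{-i}\rangle$; these are the same argument in different clothing.
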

\begin{proof}
Without loss of generality, we may assume that $\sigma_i = 1 = -\tau_i$. It is
easy to see that $\mu(\sigma)/\mu(\sigma') = \exp\left(2\sum_{j\neq i}  J_{ij} \sigma_j + 2 h_i \right) $ and  $\mu(\tau)/\mu(\tau') = \exp\left(-2 \sum_{j\neq i}  J_{ij} \tau_j - 2 h_i \right) .$ Thus
\[ \frac{\mu(\sigma) \mu(\tau)}{\mu(\sigma') \mu(\tau')} \leq \exp\parens*{ 2\sum_{j \neq i} J_{ij} \parens{\sigma_j - \tau_j} } \leq 2^{O(\sqrt{n})},\]
where the last inequality holds since $\abs{\sum_{j\neq i} J_{ij} (\sigma_j - \tau_j)} \leq 2 \norm{J}_{\infty \to \infty} \leq 2\sqrt{n} \| J \|_{2 \to \infty} \leq 2\sqrt{n}\norm{J}_\OP\leq 2\sqrt{n}.$
\end{proof}

	\PrintBibliography
	\appendix
	\appendixpage
	\addappheadtotoc
	\section{Example with negligible low-level entropy contraction}
\label{sec:bad-example}

\begin{example}[An FLC example where $2 \to 1$ contraction fails to hold]\label{example:no-2-to-1}
Let $\mu^{(2)}: \binom{[n]\times [2]}{2k}$ be defined by probability mass function 
\[ \mu^{(2)}(S \times [2])= \frac{1}{\binom{n}{k}} \]
for $S \in \binom{[n]}{k}$, and with $\mu^{(2)}(S') = 0$ for sets $S'$ not of this form. This is the $2$-fold distribution, as in \cref{remark:lowerbound on downup walk}, corresponding to the uniform distribution $\mu$ over $\binom{[n]}{k},$ and so in particular $\mu^{(2)}$ is $1/2$-FLC. Based on our our main result, \cref{thm:entropic-independence}, we know that $\mu^{(2)}$ satisfies entropic independence, i.e. $k \to 1$ entropy contraction, with a constant depending only on $k$. 

However, it does not satisfy nontrivial $2 \to 1$ entropy contraction. Let $a \in [n]$ be arbitrary,  let $a_1,a_2$ denote the two corresponding elements of $[n] \times [2]$, and consider the probability measure $\nu = \delta_{\{a_1,a_2\}}$ on $\binom{[n] \times [2]}{2}$. Then we can compute that $\DKL{\nu \river \mu D_{k \to 2}} = \log(n(2k - 1))$, $\DKL{\nu D_{2 \to 1} \river \mu D_{k \to 1}} = \log(n)$, and so
\[ \DKL{\nu D_{2 \to 1} \river \mu D_{k \to 1}} = \frac{1}{1 + \log(2k - 1)/\log(n)} \DKL{\nu \river \mu D_{k \to 2}}.  \]
We see that as $n \to \infty$ the constant on the right hand side goes to $1$, which corresponds to the trivial upper bound guaranteed by the data processing inequality.
\end{example}

\section{Deferred proofs} 
\label{sec:appendix}
We include the omitted proofs.

To prove \cref{thm:alpha frac LC MLSI}, similar to \cite[Theorem~46]{alimohammadi2021fractionally}, we write $\DKL{\nu \river \mu}$ and $ \DKL{\nu D_{k\to \l} \river \mu D_{k\to \l}}$ as a telescoping sum of terms of the form $\DKL{\nu D_{k \to i}\river \mu D_{k \to i}} - \DKL{\nu D_{k \to (i-1)}\river \mu D_{k \to (i-1)}},$  use the entropic independence of $\mu$ and its conditional distributions to derive inequalities involving these terms then chain these inequalities together to derive the desired entropy contraction. The main difference between our proof and the existing local-to-global framework \cite{CGM19,AL20,HL20,alimohammadi2021fractionally} is that we use $k\to 1$ entropy contraction whereas they use $2 \to 1$ contraction as the main building blocks. This difference is crucial because $2 \to 1$ contraction may not hold even for FLC distributions.

	\begin{proof}[Proof of \cref{thm:alpha frac LC MLSI}]
Consider an arbitrary distribution $\nu: \binom{[n]}{k} \to \R_{\geq 0}.$ We will show
\[ \DKL{\nu D_{k\to \l} \river \mu D_{k\to \l}} \leq \parens{1- \kappa} \DKL{\nu \river \mu} \]
	
	Consider the following process. We sample a set $S\sim \mu$ and uniformly at random permute its elements to obtain $X_1,\dots,X_k$. Let  $f(x) = x\log x.$ Define the random variable
	\[ \tau_i=f\parens*{\E*{\frac{\nu(S)}{\mu(S)}\given X_1,\dots,X_i}}=f\parens*{\frac{\sum_{S'\ni X_1,\dots, X_i}\nu(S')}{\sum_{S'\ni X_1,\dots,X_i}\mu(S')}}=f\parens*{\frac{\nu D_{k\to i}(\set{X_1,\dots,X_i})}{\mu D_{k\to i}(\set{X_1,\dots,X_i})}}. \]
	Note that $\tau_i$ is a ``function'' of $X_1,\dots,X_i$. It is not hard to see that 
	\[\DKL{\nu\river \mu}= \D_f{\nu\river \mu}=\E{\tau_k}-\E{\tau_0}=\sum_{i=0}^{k-1}(\E{\tau_{i+1}}-\E{\tau_{i}}). \]
	A convenient fact about this telescoping sum is that to obtain $\DKL{\nu D_{k\to \l}\river \mu D_{k\to\l}}$, one has to just sum over the first $\l$ terms instead of $k$:
	\[ \DKL{\nu D_{k\to \l}\river \mu D_{k\to \l}}=\E{\tau_\l}-\E{\tau_0}=\sum_{i=0}^{\l-1}(\E{\tau_{i+1}}-\E{\tau_{i}}). \]
	This is because the set $\set{X_1,\dots,X_\l}$ is distributed according to $\mu D_{k\to \l}$. So our goal of showing that $D_{k\to\l}$ has contraction boils down to showing that the last $k-\l$ terms in the telescoping sum are sufficiently large compared to the rest.
	
	Let $\Delta_i = (\E{\tau_{i+1}}-\E{\tau_{i}}) $ and
$\beta_i = \frac{1}{\alpha(k-i)-1}.$
Applying \cref{thm:entropic-independence} to $\mu$, we have
\begin{align*}
    \Delta_0 &\leq \beta_0(\Delta_1 + \dots + \Delta_{k-1})\\
    \intertext{Similarly, for $i=1, \dots, \l-1$, by applying \cref{thm:entropic-independence} to $\mu(\cdot \mid X_1, \dots, X_i)$ (recall that conditioning preserves $\alpha$-FLC), and averaging over $X_1, \dots, X_i \in [n]$, we get}
    \Delta_1 &\leq \beta_1(\Delta_2 + \dots + \Delta_{k-1})\\
    \dots & \dots\\
    \Delta_{\l-1} & \leq \beta_{\l-1}(\Delta_{\l}+\dots + \Delta_{k-1}).
\end{align*}
Then, it follows inductively that for all $0\leq i < \l \leq k - 1/\alpha$, 
\[\Delta_i \leq \beta_i \cdot  (\Delta_{\l} + \dots + \Delta_{k-1}) \cdot \prod_{j=i+1}^{\l-1}(\beta_j+1)\]
Hence
\begin{align*}
    \frac{\Delta_0 + \dots + \Delta_{k-1}}{\Delta_{\l} + \dots + \Delta_{k-1}}
    &= 1 + \frac{\Delta_0 + \dots + \Delta_{\l-1}}{\Delta_{\l}+\dots + \Delta_{k-1}}
    \leq 1+ \sum_{i=0}^{\l-1}\beta_i\cdot \prod_{j=i+1}^{\l-1}(\beta_j+1)\\
    &= \prod_{i=0}^{\l-1}(1+\beta_i) = \prod_{j=k-\l+1}^{k}\left(1 + \frac{1}{\alpha j -1}\right) = \prod_{j=k-\l+1}^{k} \frac{j}{j - 1/\alpha} \\
    &= \frac{\Gamma(k+1)/\Gamma(k+1-\l) }{\Gamma(k+1 -1/\alpha)/\Gamma(k+1-\l -1/\alpha) } \\
    &\leq \frac{(k+1)^{1/\alpha}}{  (k+1 - \l -  1/\alpha)^{1/\alpha -\lceil 1/\alpha \rceil } \prod_{i=0}^{\lceil 1/\alpha \rceil -1}(k-\l-i) } \\
\end{align*}
where $\Gamma(\cdot)$ denotes the Gamma function. The last inequality follows from \cite[Theorem~4.1(b)]{GammaBound}, which shows that $\frac{\Gamma(k+1) }{\Gamma(k+1 -1/\alpha)}\leq (k+1)^{1/\alpha}$ and  $\frac{\Gamma\parens{k+1 - \l -  1/\alpha }} {\Gamma\parens{k+1 - \l -  \lceil 1/\alpha \rceil}} \leq (k+1-\l - 1/\alpha)^{\lceil 1/\alpha \rceil - 1/\alpha}$, from which we obtain that  
\begin{align*}
    \Gamma(k+1-\l) &=\Gamma\parens{k+1 - \l - \lceil 1/\alpha \rceil}  \prod_{i=0}^{\lceil 1/\alpha \rceil -1}(k-\l-i)   \\
    &\geq \Gamma\parens{k+1 - \l -  1/\alpha }   (k+1 - \l -  1/\alpha )^{1/\alpha -\lceil 1/\alpha \rceil } \prod_{i=0}^{\lceil 1/\alpha \rceil -1}(k-\l-i).
\end{align*}

Note that for $1/\alpha \in \Z$, we can obtain the alternative and simpler looking bound of
\begin{align*}
    \frac{\Delta_0 + \dots + \Delta_{k-1}}{\Delta_{\l} + \dots + \Delta_{k-1}}
    &= \frac{\Gamma(k+1)/\Gamma(k+1-\l) }{\Gamma(k+1 -1/\alpha)/\Gamma(k+1-\l -1/\alpha) } \\
    &=\frac{k!/(k-\l)!}{(k-1/\alpha)!(k-\l-1/\alpha)!}\\
    &=\left.\binom{k}{1/\alpha}\middle/\binom{k-\l}{1/\alpha}\right. ,
\end{align*}
    as promised.

\end{proof}

\begin{proof}[Proof of \cref{thm:nonuniform FLC to entropy contraction}]
	   We use a similar strategy as in the proof of \cref{thm:entropic-independence}. Let $q_i : = \nu[X_i=1]/n$, and $q_{\bar{i}} : = \nu[X_i=-1]/n$. Note that $\nu^{\hom} D_{n\to 1} = q$. For convenience of notation, we will denote the multivariate generating polynomial $g_{\mu^{\hom}}$ simply by $g$. 
	   By \cref{lem:KL-dual}, for any $q \in \R^{2n}_{\geq 0}$ with $\sum_{i}q_i = 1$, 
	   \[ \inf\set*{\DKL{\nu^{\hom} \river \mu^{\hom}} \given \nu^{\hom} D_{n\to 1}=q} = -\log\parens*{\inf_{z_1,\dots, z_n, z_{\bar{1}}, \dots, z_{\bar{n}}} \frac{g(z_1,\dots,z_n, z_{\bar{1}}, \dots, z_{\bar{n}})}{z_1^{nq_1}\cdots z_n^{nq_n} z_{\bar{1}}^{n q_{\bar{1}}  }\cdots z_{\bar{n}}^{n q_{\bar{n}} } } }. \]
	   
	   Let $h(z_1, \dots, z_n, z_{\bar{1}}, \dots, z_{\bar{n}}) = g(z_1^{\alpha_1}, \cdots, z_n^{\alpha_n}, z_{\bar{1}}^{\alpha_1}, \cdots,  z_{\bar{n}}^{\alpha_n})^{1/\bar{\alpha}}$ with $\bar{\alpha} = \sum_{i=1}^n \alpha_i.$ Now, $h$ is 1-homogeneous and quasi-concave, thus concave, and we have the bound 
	   \begin{equation*}
	       \begin{split}
	           &\forall z_1, \dots, z_n,z_{\bar{1}}, \dots, z_{\bar{n}}>0: \\
	           &h(z_1, \dots, z_n, z_{\bar{1}}, \dots, z_{\bar{n}}) \leq h(1,\dots,1) + \sum_i \parens{\partial_i h(1,\dots, 1) (z_i-1) + \partial_{\bar{i}} h(1,\dots, 1) (z_{\bar{i}}-1)}.
	       \end{split}
	   \end{equation*}
	   
	   We next compute $\partial_i h(1,\dots, 1)$  and $ \partial_{\bar{i}} h(1,\dots, 1)$.  Let $p_i = \frac{\mu[X_i=1]}{n} , p_{\bar{i}} = \frac{\mu[X_i = -1]}{n}$ and note that 
	   \[\partial_i h(1,\dots, 1) = \alpha_i \parens*{\partial_i g_\mu(1,\dots,1)}\cdot\parens*{\frac{1}{\bar{\alpha}}\cdot g(1,\dots,1)^{1/\bar{\alpha}-1}}= \frac{n\alpha_i}{\bar{\alpha} } p_i, \]
	    \[\partial_{\bar{i}} h(1,\dots, 1) = \alpha_{\bar{i}} \parens*{\partial_{\bar{i}} g(1,\dots,1)}\cdot\parens*{\frac{1}{\bar{\alpha}}\cdot g(1,\dots,1)^{1/\bar{\alpha}-1}}=\frac{n\alpha_i}{\bar{\alpha} } p_{\bar{i}}. \]
	    Thus,
	    \[ \sum_{i=1}^{n} \parens{\partial_i h(1,\dots, 1) + \partial_{\bar{i}} h(1,\dots, 1)} = \sum_i \frac{\alpha_i}{\bar{\alpha} } n(p_i + p_{\bar{i}}) = \sum_i \frac{\alpha_i}{\bar{\alpha} } = 1 = h(1,\dots, 1),\]
	    and hence 
	    \[h(z_1, \dots, z_n, z_{\bar{1}}, \dots, z_{\bar{n}}) \leq \sum_i \frac{n\alpha_i}{\bar{\alpha} } \parens*{ p_i z_i + p_{\bar{i}} z_{\bar{i}}}.\]
	    Rewriting in terms of $g$, this gives 
	    \[g(z_1, \dots, z_n, z_{\bar{1}}, \dots, z_{\bar{n}})  \leq \parens*{ \sum_i \frac{n\alpha_i}{\bar{\alpha} } \parens*{p_i z_i^{1/\alpha_i}  + p_{\bar{i}} z_{\bar{i}}^{1/\alpha_i} } }^{\bar{\alpha}}.\]
	    
	    Therefore, 
		\[ \inf_{z_1,\dots, z_n, z_{\bar{1}}, \dots, z_{\bar{n}}>0} \frac{g(z_1,\dots,z_n, z_{\bar{1}}, \dots, z_{\bar{n}})}{z_1^{nq_1}\cdots z_n^{nq_n} z_{\bar{1}}^{n q_{\bar{1}}}\cdots z_{\bar{n}}^{n q_{\bar{n}}} }  \leq \inf_{z_1,\dots, z_n, z_{\bar{1}}, \dots, z_{\bar{n}}>0}\frac{\parens*{\sum_i \frac{n\alpha_i}{\bar{\alpha} } \parens*{ p_i z_i^{1/\alpha_i} + p_{\bar{i}} z_{\bar{i}}^{1/\alpha_i}} }^{\bar{\alpha} } }{z_1^{nq_1}\cdots z_n^{nq_n} z_{\bar{1}}^{n q_{\bar{1}}}\cdots z_{\bar{n}}^{n q_{\bar{n}}} } . \]
		Plugging in $z_i=(q_i/p_i)^{\alpha_i}$ and $z_{\bar{i}} = ({q_{\bar{i}}/{p_{\bar{i}}}})^{\alpha_i}$, we obtain
		\[  \inf_{z_1,\dots, z_n, z_{\bar{1}}, \dots, z_{\bar{n}}>0} \frac{g(z_1,\dots,z_n, z_{\bar{1}}, \dots, z_{\bar{n}})}{z_1^{nq_1}\cdots z_n^{nq_n} z_{\bar{1}}^{n q_{\bar{1}}}\cdots z_{\bar{n}}^{n q_{\bar{n}}} } \leq \prod_{i=1}^n (p_i/q_i)^{\alpha_i n q_i} (p_{\bar{i}}/q_{\bar{i}})^{\alpha_i n q_{\bar{i}}} . \]
		Taking $\log$ and negating gives 
		\begin{align*} 
		\DKL{\nu \river \mu} &= \DKL{\nu^{\hom} \river \mu^{\hom}}\geq -\log\left( \prod_{i=1}^n (p_i/q_i)^{\alpha_i n q_i} (p_{\bar{i}}/q_{\bar{i}})^{\alpha_i n q_{\bar{i}}}\right)\\
		&= \sum_{i=1}^{n} \alpha_i \parens{(nq_i) \log(q_i/p_i) + (nq_{\bar{i}}) \log (q_{\bar{i}}/ p_{\bar{i}})  } = \sum_{i=1}^{n}\alpha_i \DKL{\nu_i \river \mu_i}. 
		\end{align*}
		This is exactly the claimed bound in \cref{thm:nonuniform FLC to entropy contraction}.
	\end{proof}

	
	By a similar argument, we obtain the following generalization of \cref{thm:nonuniform FLC to entropy contraction} to any distribution $\mu$ on $Q^n$, for arbitrary finite set $Q.$

\begin{proposition} \label{prop:multispin nonuniform FLC to entropy contraction}
		Let $Q$ be a finite set.
	    Let $\mu$ be a distribution on $Q^n.$ We can view $\mu$ as the homogeneous distribution on  $\binom{\Omega}{n}$ with $\Omega$ be the disjoint union of $Q_1, \dots, Q_n,$ where each $Q_i$ is a copy of $Q.$ Suppose that there exist $\alpha_1,\dots, \alpha_n \in (0,1]$ for which $\mu$ is $(\underbrace{\alpha_1, \dots, \alpha_n,\dots, \alpha_1, \dots, \alpha_n}_{\abs{Q} \text{ times} })$-FLC. 
	    
	    Let $\nu$ be a distribution on $Q^n$ and for $i \in [n]$, let $\mu_i$ (respectively $\nu_i$) denote the marginal distribution of the $i^{th}$ coordinate under $\mu$ (respectively $\nu$). Then,
	    \[\sum_{i=1}^{n}\alpha_{i}\DKL{\nu_i \river \mu_i} \leq \DKL{\nu \river \mu}.\]
	\end{proposition}
	\begin{proof}
	Let $ q_{i,\epsilon_i} = \nu[X_i = \epsilon_i]/n, p_{i,\epsilon_i} = \mu[X_i = \epsilon_i]/n.$ Note that $\nu D_{n \to 1 } = q.$ By the same argument as in proof of \cref{thm:nonuniform FLC to entropy contraction}
	\begin{align*}
	    \inf\set*{\DKL{\nu \river \mu} \given \nu D_{n\to 1}=q} &= -\log\parens*{\inf_{z_{i,\epsilon_i}: i\in [n], \epsilon_i \in Q} \frac{g(z_{i,\epsilon_i}) }{\prod_{(i,\epsilon_i) \in [n] \times Q}z_{i,\epsilon_i}^{nq_{i,\epsilon_i}} } }\\
	    &\geq  -\log\parens*{ \inf_{z_{i,\epsilon_i}: i\in [n], \epsilon_i \in Q} \frac{\parens*{ \sum_i \frac{n \alpha_i}{\bar{\alpha}} \parens*{\sum_{\epsilon_i \in Q} p_{i, \epsilon_i} z_{i,\epsilon_i}^{1/\alpha_i} }}^{\bar{\alpha}}  }{\prod_{(i,\epsilon_i) \in [n] \times Q}z_{i,\epsilon_i}^{nq_{i,\epsilon_i}} }   }\\
	    &\geq - \log \parens*{\prod_{i=1}^n \prod_{\epsilon_i \in Q} (p_{i,\epsilon_i}/q_{i,\epsilon_i})^{\alpha_i n q_{i, \epsilon_i}} } \\
	    &=\sum_{i=1}^n \alpha_i \sum_{\epsilon_i \in Q} (nq_{i,\epsilon_i}) \log \frac{q_{i,\epsilon_i}}{p_{i,\epsilon_i}}\\
	    &= \sum_{i=1}^n \alpha_i \DKL{\nu_i \river \mu_i} 
	\end{align*}
	\end{proof}

\end{document}